\documentclass[11pt]{article}
\usepackage{rotating,amsmath,amssymb,amsfonts,amsthm,longtable}
\usepackage{epsfig,mathrsfs,natbib,color,graphics,bm,dsfont,subfigure}
\usepackage{multirow}
\def\boxit#1{\vbox{\hrule\hbox{\vrule\kern6pt \vbox{\kern6pt#1\kern5pt}
\kern6pt\vrule}\hrule}}

\setlength{\oddsidemargin}{-0.3in}
\setlength{\evensidemargin}{0.0in}
\setlength{\topmargin}{-0.6in}  % 0.3 -0.1in
\setlength{\textheight}{9.7in}  % 9.1 9.3in
\setlength{\textwidth}{7.1in}
\setlength{\parindent}{0.25in}
\setlength{\headsep}{0in}

\newcommand{\bE}{{\boldsymbol E}}

\newcommand{\be}{{\boldsymbol e}}

\newcommand{\bV}{{\boldsymbol V}}
\newcommand{\bG}{{\boldsymbol G}}

\newcommand{\bq}{{\boldsymbol q}}

\newcommand{\bX}{{\boldsymbol X}}

\newcommand{\mN}{\mathbb{N}}
\newcommand{\mR}{\mathbb{R}}
\newcommand{\mE}{\mathbb{E}}

\newcommand{\mG}{{\cal G}}

\newcommand{\bmu}{{\boldsymbol \mu}}

\newcommand{\bpsi}{{\boldsymbol \psi}}
\newcommand{\bSigma}{{\boldsymbol \Sigma}}

\newcommand{\tmE}{\widetilde{\cal E}}
\newcommand{\tbE}{\widetilde{\boldsymbol E}}
\newtheorem{theorem}{Theorem}[]
\newtheorem{lemma}{Lemma}[]

\pagestyle{plain}
%%%%%%%%%%%% Color setting  %%%%%%%%%%%%%%%%%%%%%%%%%%%%%

%%%%%%%%%%%% Journal definitions %%%%%%%%%%%%%%%%%%%%%%%%%
\def\ANNALS{{\it Annals of Statistics}}

\def\JASA{{\it Journal of the American Statistical Association}}

\def\JCGS{{\it Journal of Computational and Graphical Statistics}}

\def\JCGS{{\it Journal of Computational and Graphical Statistics}}

\def\JASA{{\it Journal of the American Statistical Association}}
\def\ANNALS{{\it Annals of Statistics}}

\def\ANNALS{{\it Annals of Statistics}}

\def\JASA{{\it Journal of the American Statistical Association}}

\def\JMLR{{\it Journal of Machine Learning Research}}
%%%%%%%%%%%%%%%%%%%%%%%%%%%%%%%%%%%%%%%%%%%%%%%%%%%%%%%%%%%%%%%%%%%%
\begin{document}
\bibliographystyle{asa}

\title{Fast Bayesian Integrative Learning of Multiple Gene Regulatory Networks for Type 1 Diabetes}

\author{Bochao Jia, Faming Liang and the TEDDY Study Group
\thanks{
 Faming Liang is Professor,
 Department of Statistics, Purdue University, West Lafayette, IN 47907;
 Email: fmliang@purdue.edu.
 Bochao Jia is Research Scientist, Eli Lilly and Company, Lilly Corporate Center, 
 Indianapolis, IN 46285; Email: jia\_bochao@lilly.com.  
 }
 }

\maketitle

\begin{abstract}
 Motivated by the need to study the molecular mechanism underlying Type 1 Diabetes (T1D) with 
 the gene expression data collected from both the patients and healthy controls at multiple time points, 
  we propose an innovative method for jointly estimating multiple dependent Gaussian graphical models. 
  Compared to the existing methods, the proposed method has
a few significant advantages. First, it includes a meta-analysis procedure to explicitly integrate
 information across distinct conditions. In contrast, the existing methods often integrate
 information through prior distributions or penalty function, which is usually less efficient.
Second, instead of working on original data, the Bayesian step of the proposed method works on
edge-wise scores, through which the proposed method avoids to invert high-dimensional covariance matrices 
 and thus can perform very fast. The edge-wise score forms an equivalent measure
of the partial correlation coefficient and thus provides a good summary for the graph structure
information contained in the data under each condition. Third, the proposed method can provide 
 an overall uncertainty measure for the edges detected in multiple graphical models, while
the existing methods only produce a point estimate or are feasible for very small size problems.
We prove consistency of the proposed method under mild conditions and illustrate its performance 
 using simulated and real data examples. The numerical results indicate the superiority
of the proposed method over the existing ones in both estimation accuracy and computational
efficiency. Extension of the proposed method to joint estimation of multiple mixed graphical
 models is straightforward.
  
\vspace{1mm}

\underline{Keywords:}
Consistency; Data Integration; Meta-Analysis; Multiple Gaussian Graphical Models; $\psi$-learning  

\end{abstract}

 \newpage 

{\centering \section{Introduction}}

 Type 1 diabetes (T1D) is one of the most common autoimmune diseases. 
 % which are caused by the autoimmune destruction of insulin. 
 The Environmental Determinants of Diabetes in the Young (TEDDY) study is designed to 
 identify environmental exposures triggering islet autoimmunity and T1D in genetically high-risk children. 
 A large dataset has been collected through the study, including clinical data, genetic data and 
 demographical data. 
 While great efforts have been made for identifying the genetic and environmental factors 
 that contribute to the etiology of the disease, the molecular mechanism underlying the disease  
 is still far from understanding. To enhance our understanding to the molecular mechanism, 
 this work aims to learn a gene regulatory network (GRN) by integrating the gene expression data measured  
 from both the patients and healthy controls at multiple time points. 
 Figure \ref{dis1} shows the structure of the data, where the gene expression was  
 measured for each of the case and control children at nine time points within    
 four years of age. How to integrate the data collected under the 18 distinct conditions 
 has posed a great challenge on the current statistical methods.   

\begin{figure}[htbp]
\centering
\includegraphics[scale=0.5]{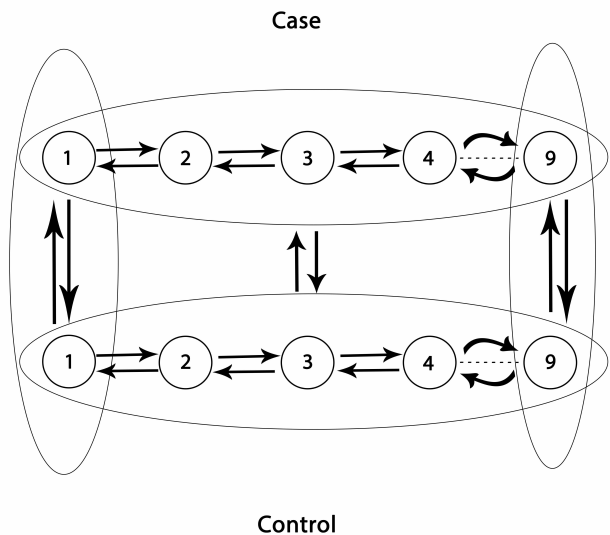}
\caption{Structure of the T1D data considered in the paper, where the numbers represent 9 time points 
  at which gene expression data were collected and the arrows represent joint estimation of 
  Gaussian graphical models by integrating the data across different time points 
 and case-control groups.}\label{dis1}
\end{figure}

 During the past decade, a variety of approaches have been proposed for estimating GRNs 
 with the data collected under both scenarios, single condition or multiple distinct conditions. 
 For the former, Gaussian graphical models (GGMs) have become widely used,  
 see e.g., Meinshausen and B\"uhlmann (2006), 
 Yuan and Lin (2007), Friedman, Hastie and Tibshirani (2008), Wang (2015), and Liang, Song and Qiu (2015). 
 For the latter, the existing approaches can be roughly grouped into two categories, namely, 
 regularization and Bayesian. 

 The regularization approaches works with some specific penalty functions that enhance 
 the shared structure of the graphical models. For example, 
 Guo et al. (2011) employed a hierarchical penalty that targets the removal of common zeros 
 in the precision matrices across conditions.  
 Danaher, Wang and Witten (2014) employed penalized fused lasso or group lasso penalties that 
 encourage shared elements of the precision matrices.   
 A shortcoming of these approaches is that they assume the observations under different conditions 
 are independent. This is hard to be satisfied for the temporal data, where the observations 
 were taken from the same subject at multiple time points. 
 For example, under either the case or control, the T1D data are temporal.  
 To address this issue, Zhou, Lafferty and Wasserman (2010)
 and Qiu et al. (2015) proposed to model the temporal data in a high-dimensional time series and 
 then estimate the time varying graphical structure using a nonparametric method by assuming that the 
 covariance changes smoothly over time. These approaches usually require the time series to be 
 fairly long, say, 50 or longer. 
 Another approach that allows for the data dependence is 
 proposed by Cai et al. (2011), which is based on the constrained $l_{\infty}/l_1$ minimization
 of the precision matrices by abandoning the use of the likelihood function.  

 As an analog to regularization approaches, Bayesian approaches enhance the shared structure of 
 multiple graphical models by employing some specific priors. For example, 
 Peterson, Stingo and Vannucci (2015) and Shaddox et al. (2016) link the estimation of graph structures via 
 a Markov random field prior which encourages common edges. However, since this method involves repeated 
 calculations of concentration matrices (i.e., inverse of covariance matrices), 
 it is only applicable when the graph is not very large.   
 To accelerate computation, Lin et al. (2017) proposed a Bayesian analog of 
 the neighborhood selection method (Meinshausen and B\"uhlmann, 2006) to learn 
 the structure of multiple graphical models with the Markov random field prior.  
 For the single graph case, Wang (2015) proposed a method to scale up the simulation based on 
 continuous spike and slab priors and provided timing results for $p$, the number of variables,
 up to 250.  However, since the method still involves repeated calculations of concentration matrices, 
 the computational cost is still not acceptable when $p$ is very large. 
 
 There is a major shortcoming with the existing methods: they try to integrate the data collected 
 under distinct conditions through penalty function or prior distributions. 
 It is hard to justify that the information has been integrated efficiently in this way. 
 Another shortcoming with the existing methods is the lack of uncertainty measure
 for the resulting graph estimator.
 The regularization methods produce only a point estimator for the graphical models
 and fail to provide an uncertainty measure for the point estimator. The Bayesian method by
 Peterson, Stingo and Vannucci (2015) is able to provide an uncertainty measure for its estimator, but it
 works only for small graphs. The method by Lin et al. (2017) is also Bayesian, but it is hard
 to provide a global uncertainty measure for the resulting graph, as the method works node-wisely.
 From our point of view, estimation of graphical models is essentially a multiple hypothesis testing
 problem, i.e., simultaneously testing the existence of a large number of candidate edges.
 An uncertainty measure, e.g., false discovery rate, can be naturally provided for
 the resulting graph estimate.

 In this paper, we propose a fast Bayesian integrative analysis (FBIA) method for 
 jointly estimating multiple Gaussian graphical models.
 The FBIA method consists of a few steps,
 including $\psi$-score calculation, Bayesian clustering, data integration,
 and multiple hypothesis tests. The $\psi$-score calculation step is to calculate a
 $\psi$-score for each edge of the multiple graphs using
 the $\psi$-learning algorithm developed by Liang et al. (2015). The $\psi$-score, which forms an
 equivalent measure of the partial correlation coefficient, provides a good summary for the graph
 structure information contained in the data under each condition.
 The Bayesian clustering step identifies possible status changes of each edge across distinct
 conditions. Based on the possible changes identified in the Bayesian clustering step, a meta-analysis
 method is applied to integrate data information across distinct conditions. Finally,
 a multiple hypothesis test is applied to classify the integrated $\psi$-scores to two groups, which correspond
 to existence and non-existence of edges, respectively. 

 Compared to the existing methods, FBIA has a few significant advantages.  
 First, FBIA includes an meta-analysis procedure to explicitly integrate information across distinct classes.
 In contrast, the existing methods often integrate information through prior distributions or penalty functions,
 which is usually less efficient.
 Second, unlike the traditional Bayesian methods which attempt to model the original data, 
 the proposed method models $\psi$-scores, which avoids to invert high-dimensional covariance matrices and
 thus can perform very fast. 
 Third, the proposed method can provide an uncertainty measure for the edges detected in the multiple graphical models
 and the difference of edges detected in the graphical models under any two distinct conditions,
 while the existing methods only produce a point estimate or are feasible for very small size problems.
 We illustrate the performance of the proposed method using simulated and T1D data examples.
 The numerical results indicate the superiority of the proposed method over the existing ones.

 The rest of the paper is organized as follows. 
 In Section 2, we describe the FBIA method and establish its consistency.
 In Section 3, we illustrate the FBIA method using simulated data along with comparisons 
 with some existing methods. 
 In Section 4, we apply the proposed method to T1D data. 
 In Section 5, we conclude the paper with a brief discussion.

\section{A Fast Bayesian Integrative Analysis Method}

 The FBIA method consists of a few steps, including $\psi$-score calculation, 
 Bayesian clustering and meta-analysis, and joint edge detection, which are described 
 in sequel as follows. At the end of this section, we discuss the consistency and 
 parameter setting of the method. 

\subsection{$\psi$-Score Calculation} 

 This step is to convert original data information into edge-wise scores, which 
 facilitates the followed Bayesian clustering and meta-analysis. 
 Suppose that we have a dataset of $p$ variables observed under $K$ distinct conditions. 
 Let $\bX^{(k)}=(\bX_1^{(k)},\ldots, \bX_{n_k}^{(k)})^T$ denote the dataset 
 observed under condition $k$, where $n_k$ denotes the sample size  
 under condition $k$; and $\bX_i^{(k)}=(X_{i1}^{(k)},\ldots, X_{ip}^{(k)})^T$ 
 is a $p$-dimensional random vector distributed according to 
 the multivariate normal distribution $N(\bmu_k, \bSigma_k)$, and $\bmu_k$ and $\bSigma_k$ are the mean and 
 covariance matrix of the distribution, respectively. The sample size $n_k$ is not necessarily 
 the same for all conditions. Without loss of generality, we assume that  
 $\bmu_k$ is a zero vector for all $k$. With a slight abuse of notation, we let $X_1,\ldots, X_p$ denote the 
 $p$ variables that are common for all $K$ datasets. Let $V=\{1,2,\ldots,p\}$ denote
 the index set of the variables. 
 
 In this paper, we adopt the $\psi$-learning algorithm (Liang et al., 2015) to convert 
 each dataset $\bX^{(k)}$ to edge-wise scores independently. 
% The $\psi$-learning algorithm  
% is motivated by the observation that for any two normal random variables, if their correlation 
% coefficient is zero then their partial correlation coefficient must be zero, i.e., 
% \begin{equation} \label{screq1}
%  r_{ij}=0 \Longrightarrow \rho_{ij|V\setminus \{i,j\}}=0, 
% \end{equation} 
%  where $r_{ij}$ denotes the correlation coefficient of $X_i$ and $X_j$, and 
%  $\rho_{ij|V \setminus \{i,j\}}$ denotes the partial correlation coefficient 
%  of $X_i$ and $X_j$ conditioned on all other variables. 
  Since the essence of learning the structure of Gaussian graphical models (GGMs)
  is to find the pairs of the 
  variables for which the partial correlation coefficient is equal to zero, 
  a correlation screening procedure can be applied to reduce the size of 
  conditioning set used for calculating the partial correlation coefficient.
  Let $\tilde{\psi}_{ij}$ denote the partial correlation coefficient calculated with the 
  reduced conditioning set $S_{ij}$, i.e., $\tilde{\psi}_{ij}=\rho_{ij|S_{ij}}$.  
  Under the faithfulness condition (see e.g., B\"uhlmann and van de Geer, 2011), 
  i.e., assuming that all the conditional independence 
  among the variables  $X_1,\ldots, X_p$ can be read off from the graphical concept of separations,  
  Liang et al. (2015) showed that $\tilde{\psi}_{ij}$ is equivalent to 
  $\rho_{ij|V\setminus \{i,j\}}$ in learning the structure of GGMs
  in the sense that  
  \[
  \tilde{\psi}_{ij}=0 \Longleftrightarrow \rho_{ij|V\setminus \{i,j\}}=0.
  \]  
  Further, under mild conditions for the sparsity of the underlying GGM, 
   Liang et al. (2015) showed that the size of $S_{ij}$ can be bounded by 
   $n/\log(n)$. Therefore, the $\psi$-learning algorithm has 
   successfully reduced the problem of partial correlation coefficient 
   calculation from a high-dimensional setting to a low-dimensional one. 
   Note that $\rho_{ij|V\setminus \{i,j\}}$ is even not calculable 
   when $p$ is greater than $n$. 
   In summary, the $\psi$-learning algorithm consists of the following two steps to 
   calculate the $\psi$-partial correlation coefficients for each dataset $\bX^{(k)}$: 

\begin{itemize}
 \item[(a)]  (Correlation screening) Determine the reduced neighborhood for each variable $X_i$. 
\begin{itemize}
\item[(i)] Conduct a multiple hypothesis test to identify the pairs of variables 
 for which the empirical correlation coefficient 
 is significantly different from zero. This step results in a so-called empirical correlation network.

\item[(ii)] For each variable $X_i$, identify its neighborhood in the empirical correlation network, 
 and reduce the size of the neighborhood to 
 $O(n/\log(n))$ by removing the variables having lower correlation (in absolute value) with $X_i$. 
 This step results in a so-called reduced correlation network.
\end{itemize}

\item[(b)] ($\psi$-calculation) For each pair of variables $X_i$ and $X_j$, 
 identify the separator $S_{ij}^{(k)}$ based on the reduced correlation network 
 and calculate $\tilde{\psi}_{ij}^{(k)}=\rho_{ij|S_{ij}^{(k)}}$, 
 where $\rho_{ij|S_{ij}^{(k)}}$ denotes the partial correlation coefficient 
 of $X_i$ and $X_j$ calculated for the dataset $\bX^{(k)}$ 
 conditional on the variables $\{X_l: l \in S_{ij}^{(k)} \}$.   
\end{itemize}
 
To facilitate followed analysis, we further convert the $\psi$-partial correlation coefficients to 
$\psi$-scores via the Fisher's transformation 
\begin{equation}\label{score}
\psi_{ij}^{(k)}=\frac{\sqrt{n-|S_{ij}^{(k)}|-3}}{2}\log\left[\frac{1+\tilde{\psi}_{ij}^{(k)}}{1-\tilde{\psi}_{ij}^{(k)}}\right],
\end{equation}
which approximately follows the standard normal distribution 
under the null hypothesis $H_0: \rho_{ij|V\setminus\{i,j\}} =0$. 
Therefore, the $\psi$-score can be used as a test statistic for identifying non-zero partial correlation 
coefficients and thus the structure of Gaussian graphical models, and 
 $n-|S_{ij}^{(k)}|-3$ can be viewed as the effective sample size of the test statistic.     
Compared to sure independence screening (Luo, Song and Witten, 2015), the correlation screening step 
 often leads to a smaller neighborhood for each variable and thus, as implied by (\ref{score}), 
 helps to improve the power of the proposed method. 

 Since the Gaussian graphical model is symmetric, we have a total of $p(p-1)/2$ $\psi$-scores to calculate 
 for each dataset $X^{(k)}$. For convenience, we re-arrange all the $\psi$-scores for the $K$ datasets 
 into a $N\times K$ matrix $(\psi_l^{(k)})$ with $l=1,2,\ldots, N$, $k=1,2,\ldots,K$, and $N=p(p-1)/2$.

\subsection{Bayesian Clustering and Meta-Analysis}

 Consider the $\psi$-scores $(\psi_l^{(k)})$, where each pair $(l,k)$ corresponds to one 
 candidate edge in the graph $k$. % Following from (\ref{score}), we assume that  
 %$\psi_l^{(k)} \sim N(\mu_{l}^{(k)}, (\sigma_{l}^{(k)})^2)$. To estimate $\mu_{lk}$, we consider 
 %the following parameterization under the Bayesian framework. 
 Let $e_l^{(k)}$ be the indicator for the status of the 
 edge $l$ in the underlying graph $k$; $e_l^{(k)}=1$ if the edge exists and 0 otherwise. 
 The $e_l^{(k)}$'s work as the latent variables in FBIA.  
 Conditioned on $e_l^{(k)}$, we assume that $\psi_l^{(k)}$'s are mutually independent and follow  
 a two-component mixture Gaussian distribution,   
 \begin{equation}\label{plugin}
     p(\psi_{l}^{(k)}|e_{l}^{(k)})=\left\{\begin{array}{ll}
                  N(\mu_{l0},\sigma_{l0}^2),&\textrm{if $e_{l}^{(k)}=0$},\\
                  N(\mu_{l1},\sigma_{l1}^2),&\textrm{if $e_{l}^{(k)}=1$},
                \end{array}\right.
\end{equation}
 for $l=1,2,\ldots, N$ and $k=1,2,\ldots, K$.  
  When $e_{l}^{(k)}=0$, $\psi_{l}^{(k)}$'s have a value close to 0,
  otherwise, $\psi_{l}^{(k)}$'s might have a large negative or positive value depending on 
  the sign of the partial correlation coefficient. Under the assumption that the structure of 
  the GGM changes only slightly under adjacent conditions, 
  it is reasonable to assume that for each $l$, the sign of $\psi_{l}^{(k)}$'s are not changed when the 
  edge exists; therefore, $\psi_{l}^{(k)}$'s can be modeled by  
  a two-component mixture Gaussian distribution. In some cases, e.g., when $K$ grows, 
  a three-component mixture Gaussian distribution might be needed, which allows us 
  to handle the scenario when an edge is included in multiple graphs, but with a sign 
  difference in the partial correlation. 
  The derivation under this scenario is given in Appendix A, which is just a simple 
  extension of the deviation presented below.  
  
  Regarding the 2-component mixture distribution (\ref{plugin}), we further note that   
  $\mu_{l0}$ can be simply set to 0 considering the physical mean of $\psi$-scores. 
  However, as shown below, this general setup does not cause any computational difficulty. 
  Essentially, we have formulated the problem as a clustering problem, grouping 
  $\psi_l^{(k)}$ to up to two different clusters. For the case of 3-component mixture distribution,
  this is similar.

 Let $\bpsi_l=(\psi_{l}^{(1)},...,\psi_{l}^{(K)})$ and 
 $\be_{l}=(e_{l}^{(1)},...,e_{l}^{(K)})$. Conditioned on $\be_l$, the joint likelihood function of 
 $\bpsi_l$ is given by  
\begin{equation} \label{jointeq}
 p(\bpsi_{l}|\be_{l},\mu_{l0},\sigma_{l0}^2,\mu_{l1},\sigma_{l1}^2) 
  = \prod_{\{k:e_{l}^{(k)}=0\}}\phi(\psi_{l}^{(k)}|\mu_{l0},\sigma_{l0}^2)\prod_{\{k:e_{l}^{(k)}=1\}}
  \phi(\psi_{l}^{(k)}|\mu_{l1},\sigma_{l1}^2), 
\end{equation}
 where $\phi(.|\mu,\sigma^2)$ is the density function of the Gaussian distribution 
 with mean $\mu$ and variance $\sigma^2$.
 Taking a product of (\ref{jointeq}) over $l=1,2,\ldots, N$, we will have the joint distribution of all  
 $\psi$-scores $(\bpsi_l^{(k)})$ conditioned on $\be_l$'s and other parameters. 
 Then, using the Bayes theorem, $e_l^{(k)}$'s can be inferred  
 with an appropriate priors of $\be_l$'s and other parameters. 
 For example, the Markov random field prior used in Peterson, Stingo and Vannucci (2015), Shaddox et al. (2016)
  and Lin et al. (2017) can again be used here as the prior of $\be_l$'s. 
 In this case, the posterior distribution can 
 be sampled from using a Markov chain Monte Carlo algorithm (see e.g., Liang et al., 2010).  
 
 Instead of specifying a joint prior distribution for all $\be_l$'s, we assume that 
 $\be_l$'s are {\it a priori} independent for different $l$'s, as we believe that the 
 neighboring dependence of the Gaussian graphical network can be accounted for  
 in calculation of the $\psi$-scores. 
 To enhance shared edges among distinct conditions, we consider two types 
 of priors for $\be_l$'s, namely, temporal prior and spatial prior, with borrowed terms from  
 geostatistics. The former is suitable for the scenario that 
 the networks or precision matrices $\Omega^{(k)}$, $k=1,2,\ldots, K$, 
 evolve sequentially along with the index $k$. In this scenario, it is quite common to consider 
 the index $k$ as the time of experiments.    
 The latter is suitable for the scenario that the networks or 
 precision matrices  $\Omega^{(k)}$, $k=1,2,\ldots, K$, evolve independently from a 
 common structure. For example, the genetic networks constructed using the gene expression data 
 collected from different tissues are more likely developed from a common structure.   
 Xie, Liu and Valdar (2016) have developed a graphical EM algorithm to deal with the data 
 under this scenario. 
 
 \subsubsection{Temporal Prior}

 To enhance the similarity of the networks between adjacent conditions, we let $\be_l$ be subject to 
 the following prior distribution 
\begin{equation}\label{prior}
  p(\be_l|q) =q^{\sum_{i=1}^{K-1} c_l^{(i)} } (1-q)^{\sum_{i=1}^{K-1} (1-c_l^{(i)}) },
\end{equation}
 where $c_l^{(i)}=|e_l^{(i+1)}-e_l^{(i)}|$ indicates the change of the status of the edge $l$ from 
 condition $i$ to condition $i+1$, and $q$ is a prior hyperparameter representing the prior probability 
 of edge status changes. In this paper, we assume that $q$ follows a beta distribution $Beta(a_1,b_1)$, 
 where $a_1$ and $b_1$ are pre-specified parameters.  
 Further, we let $\mu_{l0}$ and $\mu_{l1}$ be subject to an improper uniform distribution, i.e., 
 $\pi(\mu_{l0}) \propto 1$ and $\pi(\mu_{l1}) \propto 1$, and let $\sigma_{l0}^2$ and $\sigma_{l1}^2$  
 be subject to an inverted-gamma distribution, i.e., 
 $\sigma_{l0}^2, \sigma_{l1}^2  \sim IG(a_2,b_2)$, where $a_2$ and $b_2$ are pre-specified constants. 
 Then the joint posterior distribution of $(\be_l,\mu_{l0},\sigma_{l0}^2, \mu_{l1}, \sigma_{l1}^2, q)$ 
 is given by 
 \[
 \pi(\be_l,\mu_{l0},\sigma_{l0}^2, \mu_{l1}, \sigma_{l1}^2, q|\bpsi_l)\propto 
 p(\bpsi_l|\be_l,\mu_{l0},\mu_{l1},\sigma_{l0}^2,\sigma_{l1}^2) \pi(\mu_{l0},\sigma_{l0}^2, \mu_{l1}, \sigma_{l1}^2) 
 \pi(\be_l|q) \pi(q),
 \]
 where $\pi(\cdot)$'s denote the respective prior distributions.  
 After integrating out the parameters $\mu_{l0}$, $\sigma_{l0}^2$, $\mu_{l1}$, $\sigma_{l1}^2$ and $q$, we have 
 the marginal posterior distribution of $\be_l$ given by 
\begin{equation} \label{posteq1}
\begin{split}
\pi(\be_l|\bpsi_l) & \propto \frac{\Gamma(a_1+k_1)\Gamma(b_1+k_2)}{\Gamma(a_1+k_1+b_1+k_2)} \\
&\times \frac{1}{\sqrt{n_0}}(\frac{1}{\sqrt{2\pi}})^{n_0}\Gamma(\frac{n_0-1}{2}+a_2)
\left[\frac{1}{2}\sum_{\{k:e_{l}^{(k)}=0\}}(\psi_l^{(k)})^2-\frac{(\sum_{\{k:e_{l}^{(k)}=0\}}\psi_l^{(k)})^2}{2n_0}+b_2\right]^{-(\frac{n_0-1}{2}+a_2)}\\
&\times \frac{1}{\sqrt{n_1}}(\frac{1}{\sqrt{2\pi}})^{n_1}\Gamma(\frac{n_1-1}{2}+a_2)
\left[\frac{1}{2}\sum_{\{k:e_{l}^{(k)}=1\}}(\psi_l^{(k)})^2-\frac{(\sum_{\{k:e_{l}^{(k)}=1\}}\psi_l^{(k)})^2}{2n_1}+b_2\right]^{-(\frac{n_1-1}{2}+a_2)}\\
& = (H) \times (I) \times (J), \\
\end{split}
\end{equation}
when $n_0>0$ and $n_1>0$ hold, where $n_0=\#\{k:e_{l}^{(k)}=0\}$, 
 $n_1=\#\{k:e_{l}^{(k)}=1\}$, $k_1=\sum_{d=1}^{K-1}c_{l}^{(d)}$ and $k_2=K-1-k_1$. 
When $n_0=0$ and $n_1>0$, we have 
\begin{equation} \label{posteq2}
\pi(\be_l|\bpsi_l)\propto (H) \times (J).
\end{equation}
When $n_0>0$ and $n_1=0$,  we have 
\begin{equation} \label{posteq3}
\pi(\be_l|\bpsi_l)\propto (H) \times (I).
\end{equation}

Given $K$ distinct conditions, the total number of possible configurations 
 of $\be_l$ is $2^K$. For each possible configuration of $\be_l$, we can 
 calculate its posterior probability and integrated $\psi$-scores. 
 We denote the corresponding posterior probability by $\pi_{ld}$, 
 and denote the corresponding integrated $\psi$-scores by 
 $\bar{\bpsi}_{ld}=(\bar{\psi}_{ld}^{(1)},...,\bar{\psi}_{ld}^{(K)})$ for $d=1,2,\ldots, 2^K$. 
 Here, according to Stouffer's meta-analysis method (Stouffer et al., 1949; Mosteller and Bush, 1954), we define 
 \begin{equation} \label{inteq}
\bar{\psi}_{ld}^{(k)}=\begin{cases}
\sum_{\{i:e_{ld}^{(i)}=0\}} w_i \psi_l^{(i)}/\sqrt{\sum_{\{i: e_{ld}^{(i)}=0\}} w_i^2},  & \mbox{if $e_{ld}^{(k)}=0$}, \\
\sum_{\{i:e_{ld}^{(i)}=1\}}w_i \psi_l^{(i)}/\sqrt{\sum_{\{i: e_{ld}^{(i)}=1\}} w_i^2},  & \mbox{if $e_{ld}^{(k)}=1$}, \\
\end{cases}
\end{equation} 
for $k=1,\ldots, K$, 
where the weight $w_i$ might account for the size or quality of the samples collected under each condition. 
In this paper, we set $w_i=1$ for all $i=1,\ldots, K$.  
Then the Bayesian integrated $\psi$-scores are given by  
\begin{equation} \label{aveeq}
\hat{\psi}_l^{(k)}=\sum_{d=1}^{2^K} \pi_{ld} \bar{\psi}_{ld}^{(k)}, \quad l=1,2,\ldots,N; \ k=1,2,\ldots,K, 
\end{equation}
which has integrated information across all conditions. 
When $K$ is reasonably large, the posterior probabilities 
 $\pi_{ld}$'s can be estimated with a short MCMC run. Since the MCMC can be run in parallel for different $l$'s, 
 the computation is not a big burden in this case.   

 Finally, we note that Stouffer's integrated score (\ref{inteq}) can be viewed as a boosted version 
 of the posterior mean of  $(\mu_{l0},\mu_{l1})$, which amplifies 
 the posterior mean by a factor between 1 and $\sqrt{K}$. 
 As indicated by our proofs [see inequality (\ref{Jan25eq21.2}) 
 in the proof of Lemma \ref{lem4}], 
 such amplification helps to improve the power of the proposed method by 
 reducing the false negative error. 

 %Regarding the above Bayesian data integration scheme, we note that for each $l$, 
 %(\ref{inteq}) leads to the same score $\bar{\psi}_{ld}^{(k)}$ for all $k$ with the
 %same edge inclusion status and its rationale directly follows from the assumption           
 %made in (\ref{plugin}). In fact, $\bar{\psi}_{ld}$ can be viewed as a boosted 
 %version of the posterior mean of $(\mu_{l0}, \mu_{l1})$, which is amplified 
 %by a factor between 1 and $\sqrt{K}$. 
 % Such amplification makes the two clusters more separable. 
 % This result can be easily relaxed by relaxing the assumption made in (\ref{plugin}). 
 %For example, we can impose a hierarchical prior on the mean of $\psi_l^{(k)}$; that is, we can assume 
 %that the mean values of $\psi_l^{(k)}$ vary with $k$ when the edge exists,  and 
 %those mean values are further subject to another prior distribution. 

\subsubsection{Spatial Prior}

 To enhance our prior knowledge that there exits a common structure 
 for all the networks from which they evolve independently, we let $\be_l$'s 
 be subject to the following prior distribution  
 \begin{equation}\label{equ14}
p(\be_l|q)=q^{\sum_{i=1}^{K}c_{l}^{*(i)}}(1-q)^{\sum_{i=1}^{K}(1-c_{l}^{*(i)})}, 
\end{equation}
where $c_{l}^{*(i)}=|e_l^{(i)}-e_l^{mod}|$ indicates the status change of 
 the edge $l$ at condition $i$ from $e_l^{mod}$, and $e_l^{mod}$ is the mode of $\be_l$ and  
 represents the common status of the edge $l$ across all networks. 
 With this prior distribution, the posterior distribution $\pi(\be_l|\bpsi_l)$ can also 
 be expressed in the form of (\ref{posteq1}) but with 
 $k_1=\sum_{d=1}^{K}c_{l}^{*(d)}$ and $k_2=K-k_1$.

\subsection{Joint Edge Detection}

To jointly estimate the structure of multiple GGMs based on the Bayesian integrated $\psi$-scores (\ref{aveeq}), 
a multiple hypothesis test can be applied. The multiple hypothesis test 
classifies the  integrated $\psi$-scores into two classes, one class for the presence of edges and 
 the other class for the absence of edges. In this paper, we adopt the empirical Bayesian method
 developed by Liang and Zhang (2008) for the multiple hypothesis test. 
 A significant advantage of this method is that it allows for 
 the dependence between test statistics. Other multiple hypothesis tests which accounts for the dependence 
 between test statistics, e.g., Benjamini and Yekutieli (2001), can also be applied here.

\subsection{Parameter Setting} 

 FBIA contains two free parameters, i.e., $\alpha_1$ and $\alpha_2$, which
 refer to the significance levels of the
 multiple hypothesis tests conducted in correlation screening and joint edge detection, respectively.
 Following the suggestion of Liang et al. (2015), we set $\alpha_1=0.2$ and
 $\alpha_2=0.05$ as the default values. Otherwise, their values will be stated in the context.
 In general, a high significance level of correlation screening will lead to
 a slightly large conditioning set $S_{ij}$, which reduces the risk of missing some important variables in
 the conditioning set. Including a few false variables in the conditioning set will not hurt much the
 accuracy of the $\psi$-partial correlation coefficient. However, the setting of $\alpha_2$ is quite free,
 which determines the sparsity of the resulting graphs.  A smaller value of $\alpha_2$ might be used
 if sparse graphs are preferred.

 In addition to the two free parameters, FBIA contains four prior-hyperparameters, i.e.,
 $a_1$, $b_1$, $a_2$ and $b_2$. Since the probability $q$ usually takes a small value,
 we set $(a_1,b_1)=(1,10)$ for its prior distribution Beta($a_1$,$b_1$).
 Since the variance of the $\psi$-scores is approximately equal to 1 under the
 null hypothesis that the true partial correlation coefficient is equal to 0, we
 set $(a_2,b_2)=(1,1)$ for its prior distribution IG($a_2$, $b_2$).
 The prior hyperparameter settings have been used in all examples of this paper.

\subsection{Consistency} 
  
 Under the faithfulness assumption and other regularity conditions for the joint Gaussian distribution,  
 e.g., the dimension $p=O(\exp(n^{\delta}))$ is allowed to grow exponentially with the sample size $n$ 
  for some constant $0 \leq \delta <1$ and the largest eigenvalue of the covariance matrix 
 can grow with $n$ at 
 a restricted rate, Liang et al. (2015) showed that the multiple hypothesis test based on the  
 $\psi$-scores produces a consistent estimate for the GGM under single condition.  
 Essentially, Liang et al. (2015) showed that the $\psi$-partial correlation coefficients  
 are separable in probability for the linked and non-linked pairs of nodes. 
  
 To accommodate the change from single condition to multiple conditions, we modified 
 the assumptions of Liang et al. (2015) and added an assumption for $K$.
 Under the new set of assumptions, we proved that the FBIA method is consistent; that is,   

 \begin{theorem} \label{conthem}
  Assume $(A_1)$--$(A_6)$ (see Appendix B) hold.  Then
  \[
  P[\hat{\bE}_{\zeta_n}^{(k)} =\tbE_n^{(k)}, k=1,2,\ldots,K]  \geq 1-o(1), \quad 
   \mbox{as $n \to \infty$}. 
  \]
  where $\tbE_n^{(k)}=\{ (i,j): \rho_{ij|\bV\setminus \{i,j\}^{(k)} } \ne 0, \ i, j=1,\ldots,p\}$ 
  denotes the true network under condition $k$, $\hat{\bE}_{\zeta_n}^{(k)}$ denotes the 
  FBIA estimator of $\tbE_n^{(k)}$, and $\zeta_n$ denotes a threshold value of  
  Bayesian integrated $\psi$-scores based on which the edges are determined for all $K$ 
  graphs. 
 \end{theorem}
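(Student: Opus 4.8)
The plan is to reduce the joint statement for all $K$ graphs to the union of $K$ single-graph consistency statements and then control how the Bayesian clustering and meta-analysis step transforms the individual $\psi$-scores. First I would invoke the single-condition result of Liang et al. (2015): under the faithfulness and regularity assumptions embedded in $(A_1)$--$(A_6)$, for each fixed $k$ the raw $\psi$-scores $\psi_l^{(k)}$ are, with probability $1-o(1)$, separable --- the scores for true edges ($e_l^{(k)}=1$) are bounded away in absolute value from those for non-edges ($e_l^{(k)}=0$), with the null scores being approximately $N(0,1)$ and the non-null scores having magnitude diverging at a rate governed by the minimal signal assumption and the effective sample size $n-|S_{ij}^{(k)}|-3$. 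This gives a high-probability ``good event'' $\mathcal{G}_n$ on which all $NK$ raw scores are correctly separated by some deterministic threshold.

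Next I would work on $\mathcal{G}_n$ and analyze the three objects $(H)$, $(I)$, $(J)$ in the marginal posterior (\ref{posteq1}). The key lemma (presumably the Lemma \ref{lem4} referenced in the text) is that the posterior mass $\pi_{ld}$ concentrates on the \emph{true} configuration $\be_l$: when the raw scores within $\be_l$ are genuinely separated, the factors $(I)$ and $(J)$ --- which are, up to constants, the marginal likelihoods of fitting a single Gaussian to the ``$e=0$ group'' and the ``$e=1$ group'' --- are maximized (in the relevant ratio sense) by the correct partition, because any incorrect partition forces a large-variance or biased-mean fit to a group that mixes null and non-null scores, inflating the bracketed sum-of-squares terms by a polynomial-in-$n$ factor; meanwhile the prior factor $(H)$ only costs a bounded factor per status change. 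One then shows $\sum_{d \ne d^*} \pi_{ld} = o(1)$ uniformly in $l$, so that the integrated score $\hat\psi_l^{(k)} = \sum_d \pi_{ld}\bar\psi_{ld}^{(k)}$ is, up to $o(1)$, the Stouffer-combined score $\bar\psi_{l d^*}^{(k)}$ of the correct configuration. Because Stouffer combination of scores of common sign amplifies a true non-null signal by a factor in $[1,\sqrt{K}]$ while leaving a null combination still approximately standard normal (here assumption $(A_6)$ bounding the growth of $K$ enters, to keep the amplified signal still ahead of the null fluctuations and to keep the $2^K$ configurations from diluting the posterior), the integrated scores $\hat\psi_l^{(k)}$ remain separable --- in fact with an improved gap --- so a threshold $\zeta_n$ exists that correctly classifies every edge in every graph.

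Finally I would feed this separability of $\{\hat\psi_l^{(k)}\}$ into the multiple-hypothesis-testing step: by the same argument used in Liang et al. (2015) for the single-graph case (or by the properties of the Liang--Zhang (2008) empirical Bayes procedure under dependence), a data-driven threshold at level $\alpha_2$ asymptotically coincides with a threshold in the separating gap, yielding $P[\hat\bE_{\zeta_n}^{(k)} = \tbE_n^{(k)} \text{ for all } k] \ge P(\mathcal{G}_n) - o(1) = 1-o(1)$. The main obstacle I anticipate is the uniform-over-$l$ control of the posterior concentration $\pi_{l d^*} \to 1$: one must show the likelihood-ratio bound between the true and every competing configuration holds simultaneously for all $N = p(p-1)/2$ edges (with $p$ growing sub-exponentially in $n$) and all $2^K$ competitors, which requires carefully tracking how the sum-of-squares brackets in $(I)$ and $(J)$ scale when a group is contaminated by even a single misclassified score, and balancing that polynomial gain against the $2^K$-fold union bound --- this is exactly where the new assumption restricting the growth rate of $K$ must be calibrated.
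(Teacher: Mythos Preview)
Your proposal is a reasonable strategy, but it takes a genuinely different route from the paper's own proof. The paper does \emph{not} analyze the posterior factors $(H)$, $(I)$, $(J)$ or prove that $\pi_{ld}$ concentrates on the true configuration $d^*$. Instead, the argument proceeds in two much simpler steps. First, it conditions on the correlation-screening event $\{\tbE_n^{(k)} \subseteq \hat{\mE}_*^{(k)} \cap \hat{\mE}_{\eta_{nk}}^{(k)},\ k=1,\ldots,K_n\}$, whose probability is shown to be $1-o(1)$ by combining Lemma~\ref{lem1.2} (consistency of the correlation network) with the sure-screening Lemmas~\ref{lem1}--\ref{lem2}. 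Second, in Lemma~\ref{lem4} it bounds the error probabilities for the integrated scores directly via the sandwich relation $|\hat\psi_{B,ij}^{(k)}| \le |\hat\psi_{ij}^{(k)}| \le \sqrt{K_n}\,|\hat\psi_{B,ij}^{(k)}|$ between the Stouffer-integrated score and the underlying Bayesian posterior mean $\hat\psi_{B,ij}^{(k)}$ of $\mu_{ij}^{(k)}$. The paper then simply \emph{asserts} that $\hat\psi_{B,ij}^{(k)}$ is a consistent estimator with variance of order $O(1/K_n)$ and applies the Gaussian tail bound $P(|Z|\ge z)\le 2e^{-z^2/2}$ with threshold $\zeta_n=\tfrac12 c_7 n^{1/2-d}$, yielding the false-positive and false-negative bounds of order $\exp(-Cn^{1-2d})$; a union bound over $p_n^2 K_n$ edges and condition $(A_6)$ finish the job.

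What your approach buys is rigor at the point the paper glosses over: the claim ``$\hat\psi_{B,ij}^{(k)}$ is consistent with variance $O(1/K_n)$'' is stated without justification, and your programme of showing $\sum_{d\ne d^*}\pi_{ld}=o(1)$ via the marginal-likelihood ratios is exactly what would be needed to make that step honest. Conversely, the paper's shortcut avoids the uniform-in-$l$, $2^K$-competitor analysis you correctly identify as the main obstacle. Note also that the paper does not invoke any property of the Liang--Zhang multiple-testing procedure; the threshold $\zeta_n$ is a deterministic oracle threshold, so your final paragraph about the data-driven threshold coinciding with the gap is unnecessary for the theorem as stated.
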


 The proof of the theorem is presented in Appendix B. 
 Theorem \ref{conthem} implies that for all graphs there exists a common threshold  
 with respect to which the Bayesian integrated $\psi$-scores are separable in probability for 
 the linked and non-linked pairs of nodes. 
 Here we would like to highlight three points. 
 First, as indicated by our proofs [see the inequality (\ref{Jan25eq21.2}) 
 in the proof of Lemma \ref{lem4}], 
 the data integration step can indeed improve the power of proposed method.  
 Second, following from the inequalities (\ref{Jan25eq21.2}) and (\ref{Jan25eq4.2}) 
 and the condition $(A_5)$, we can conclude the sign consistency of the estimator 
 $\hat{\bE}_{\zeta_n}^{(k)}$; i.e., for any edge of the graph, the sign of the 
 Bayesian integrated $\psi$-score has the same sign as the true partial 
 correlation coefficient when the sample size $n$ becomes large. 
 Third, the assumption imposed on $K$, i.e., $K=O(n^{\delta+2d+\epsilon-1})$, is rather weak, 
 where $\delta$, $d$ and $\epsilon$ are all some positive constants as defined 
 in other assumptions and $\delta+2d <1$ (see Appendix B). 
 For example, we can choose $\epsilon=1-\delta-2d$ and thus $K=O(1)$. This is consistent  
 with our numerical results; the method can perform very well even with a small value of $K$.

\section{Simulation Studies}

\subsection{Scenario with Temporal Priors}

 To illustrate the performance of the proposed method under the scenario with temporal priors, 
 we consider three types of 
 network structures, namely, autoregressive (AR), scale-free and hub, which are all allowed to change slightly 
 with the evolvement of conditions. For all types of structures, we fix $K=4$ and $p=200$, and 
 varied the sample size $n=100$ and 500. We let $\Omega^{(k)}$ denote the precision matrix at 
 condition $k$ for $k=1, \ldots, K$. At each condition $k$, we generated 10 independent datasets of size $n$ 
 by drawing from the multivariate Gaussian distribution $N\left(0, \left(\Omega^{(k)}\right)^{-1}\right)$. 
 
 For the autoregressive network structure, we let the precision matrix at 
 condition 1 be given by 
\begin{equation}\label{plugin2}
  	\Omega^{(1)}_{i,j}=\left\{\begin{array}{ll}
 				  	0.5,&\textrm{if $\left| j-i \right|=1, i=2,...,(p-1),$}\\
  					0.25,&\textrm{if $\left| j-i \right|=2, i=3,...,(p-2),$}\\
					1,&\textrm{if $i=j, i=1,...,p,$}\\
					0,&\textrm{otherwise,}
  				\end{array}\right. 
\end{equation}
 which represents an AR(2) graphical model.  
 To construct $\Omega^{(2)}$, we employed the following random edge deleting-adding procedure: 
 we first randomly removed 5\% edges in $\Omega^{(1)}$ 
 by setting the corresponding non-zero elements to 0, and then added the same number of edges at random 
 by replacing zeros in $\Omega^{(1)}$ with the values drawn from the uniform distribution 
 defined on $[-0.1,-0.3]\cup [0.1,0.3]$; to ensure $\Omega^{(2)}$ to be positive definite, we set 
 the diagonal elements of $\Omega^{(2)}$ to be the smallest absolute eigenvalue of  
 $\tilde{\Omega}^{(2)}$ plus a small positive number, where $\tilde{\Omega}^{(2)}$ is 
 obtained from $\Omega^{(2)}$ by setting the diagonal elements to zero.  
 In the same procedure, we generated $\Omega^{(3)}$ conditioned on $\Omega^{(2)}$
 and then generated $\Omega^{(4)}$ conditioned on $\Omega^{(3)}$.
 We note that similar procedures have been used in Peterson, Stingo and Vannucci (2015) 
 and Lin et al. (2017) to generate multiple precision matrices. 
 For the scale-free and hub structures, we first generated the precision matrix $\Omega^{(1)}$ using  
 the R package ``{\it huge}'', then applied the random edge deleting-adding procedure to 
 generate $\Omega^{(k)}$'s for $k=2,3,4$ in a sequential manner.

 The proposed FBIA method was first applied to this example. 
 To access the performance of the method, we plot the precision-recall curves in Figure \ref{fig_1}. 
 The precision and recall are defined by 
\begin{equation*}
 \mbox{precision}=\frac{TP}{TP+FP},  \qquad \mbox{recall}=\frac{TP}{TP+FN},
\end{equation*}
where $TP$, $FP$ and $FN$ denote true positives, false positives and false negatives, respectively, 
 as defined in Table \ref{Binarytab}.  
 To draw the precision-recall curves shown in Figure \ref{fig_1}, 
 we fix the significance level of correlation screening 
 to $\alpha_1=0.2$ and varied the value of $\alpha_2$, the significance level of joint edge detection. 
 Note that the precision and recall values shown in Figure \ref{fig_1}  
 were calculated by cumulating the TP, FP, FN and TN values across all $K$ conditions.
 In this paper, we employ the precision-recall curve instead of the ROC curve 
 as the classification problem involved in recovering the network structure is severely imbalanced, 
 which contains a large number of negative cases due to the network sparsity. 
 As pointed out by Saito and Rehmsmeier (2015) and Davis and Goadrich (2006), the precision-Recall curve can be more
 informative than the ROC curve in the imbalanced classification scenario.

\begin{table}[htbp]
\begin{center}
\caption{Outcomes of binary decision.} 
\label{Binarytab}
\vspace{2mm}
\begin{tabular}{c|cc} \hline
&   True & False\\\hline
Predicted Positive & True Positive(TP)& False Positive(FP) \\\hline
Predicted Negative & False Negative(FN) & True Negative(TN) \\\hline
\end{tabular}
\end{center}
\end{table}

For comparison, we also applied the fused graphical Lasso(FGL) 
 and group graphical Lasso(GGL) to this example, 
 which are available in the R package {\it JGL} (Danaher, 2012). 
 The FGL employed the fused Lasso penalty  
 \begin{equation} \label{peneq1}
 P(\{\Omega^{(1)}, \ldots, \Omega^{(K)}\})= \lambda_1 \sum_{k=1}^K \sum_{i\ne j} |\omega_{ij}^{(k)}| + 
    \lambda_2 \sum_{k<k'} \sum_{i,j} |\omega_{ij}^{(k)}-\omega_{ij}^{(k')}|, 
 \end{equation}
 where $\lambda_1$ and $\lambda_2$ are regularization parameters, and $\omega_{ij}^{(k)}$ denotes the 
 $(i,j)$-th element of the precision matrix $\Omega^{(k)}$.  
 The GGL employed the following penalty,   
 \begin{equation} \label{peneq2}
 P(\{\Omega^{(1)}, \ldots, \Omega^{(K)}\})= \lambda_1 \sum_{k=1}^K \sum_{i\ne j} |\omega_{ij}^{(k)}| + 
    \lambda_2 \sum_{i\ne j} \left( \sum_{k=1}^K {\theta_{ij}^{(k)}}^2 \right)^{1/2} , 
 \end{equation}
 which is a combination of Lasso and group Lasso penalties.
 For both penalties (\ref{peneq1}) and (\ref{peneq2}), 
 the first term enhances the sparsity of each precision matrix, and the second 
 term enhances a similar pattern across all precision matrices. 
 To determine the values of $\lambda_1$ and $\lambda_2$, we follow the procedure 
 recommended by Danaher, Wang and Witten (2014) to search 
 over a grid of possible values for a combination that 
 minimizes the Akaike information criterion (AIC).    
 To draw the precision-recall curve shown in Figure \ref{fig_1}, we fix the value of 
 $\lambda_2$ to its optimal value at which the minimum AIC is attained, and varied the 
 value of $\lambda_1$; that is, we fix the level of similarity and varied the level of 
 sparsity of the graphs.  
 For a thorough comparison, we also applied the original $\psi$-learning algorithm 
 to this example, for which the models under each condition were estimated separately. 
 As indicated by Figure \ref{fig_1}, the FBIA method significantly outperforms 
 the existing methods, especially when the sample size is small.   
 When the sample size is large,  
 FBIA, FGL and GGL tend to perform similarly for the scale-free and hub networks; however,  
 FBIA still outperforms FGL and GGL for the AR(2) network. It is not surprising that FBIA always outperforms 
 the separated $\psi$-learning algorithm, which implies the importance of data integration 
 for such high-dimensional problems.

 \begin{figure}[htbp]
\centering
\subfigure[AR(2) with $(n,p)=(100,200)$.]{
\label{fig1} %% label for first subfigure
\includegraphics[width=2.2in,angle=270]{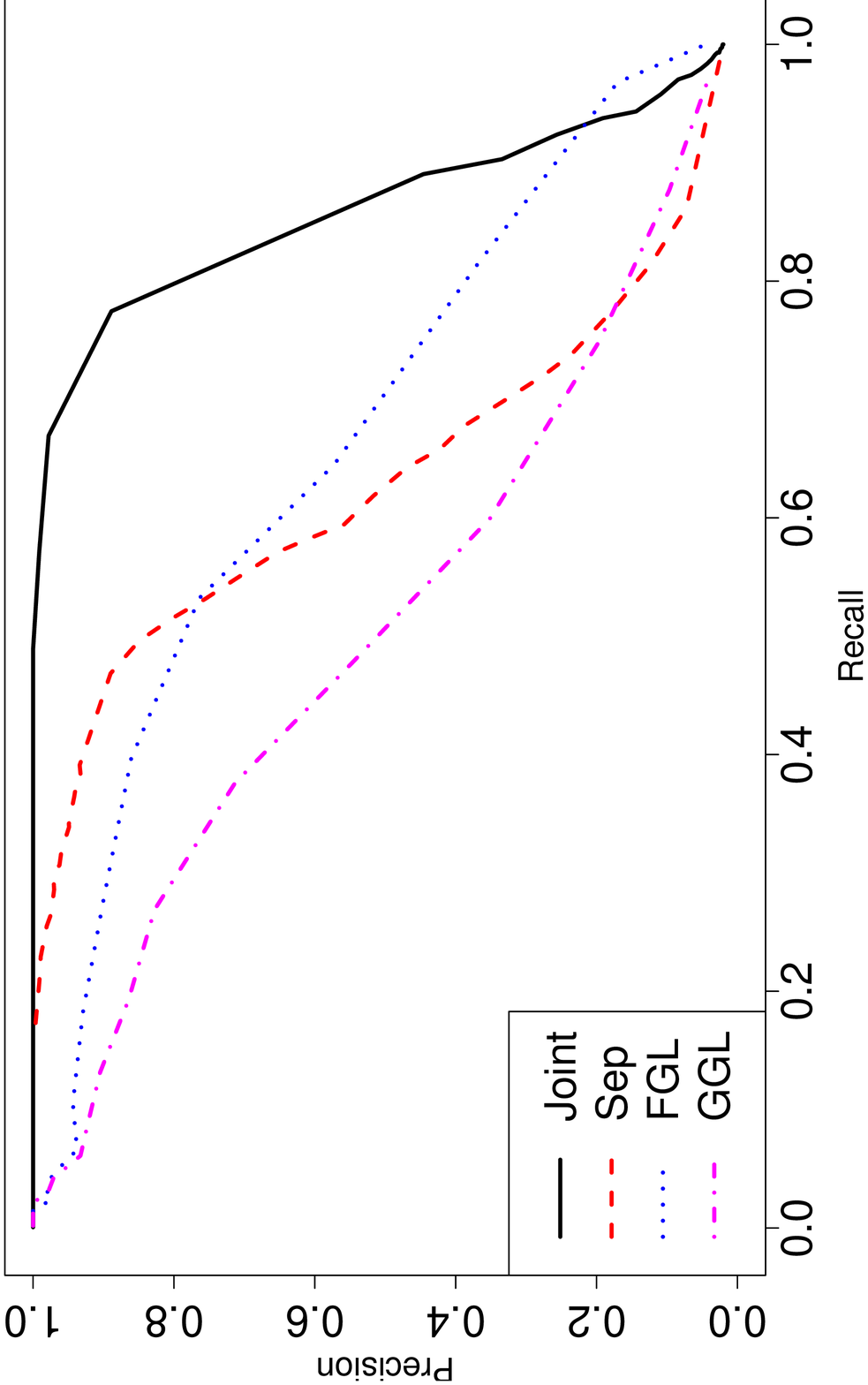}}
\hspace{0.5in}
\subfigure[AR(2) with $(n,p)=(500,200)$.]{
\label{fig2}
\includegraphics[width=2.2in,angle=270]{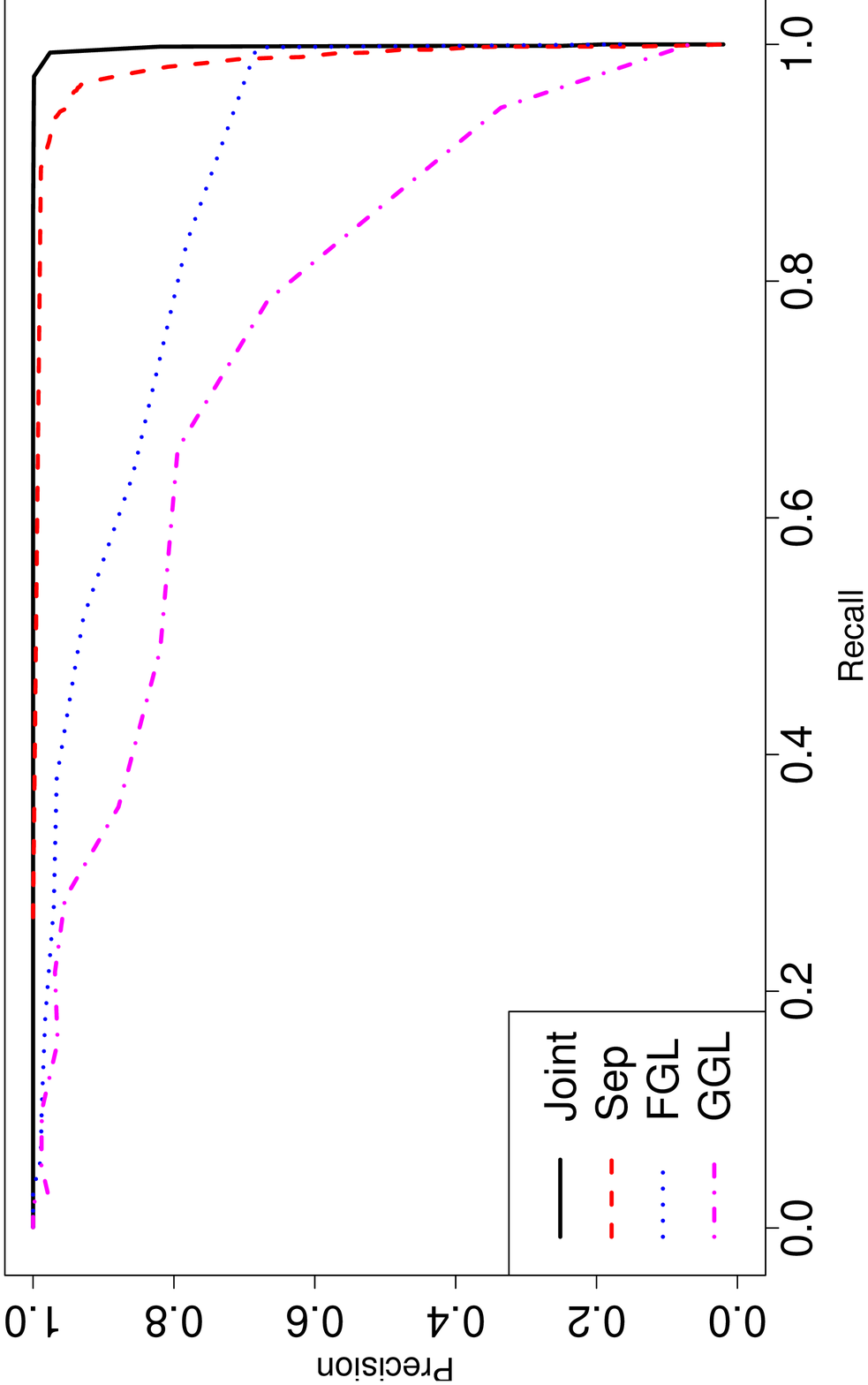}}
\\
\noindent
\subfigure[Scale-free with $(n,p)=(100,200)$.]{
\label{fig3} %% label for first subfigure
\includegraphics[width=2.2in,angle=270]{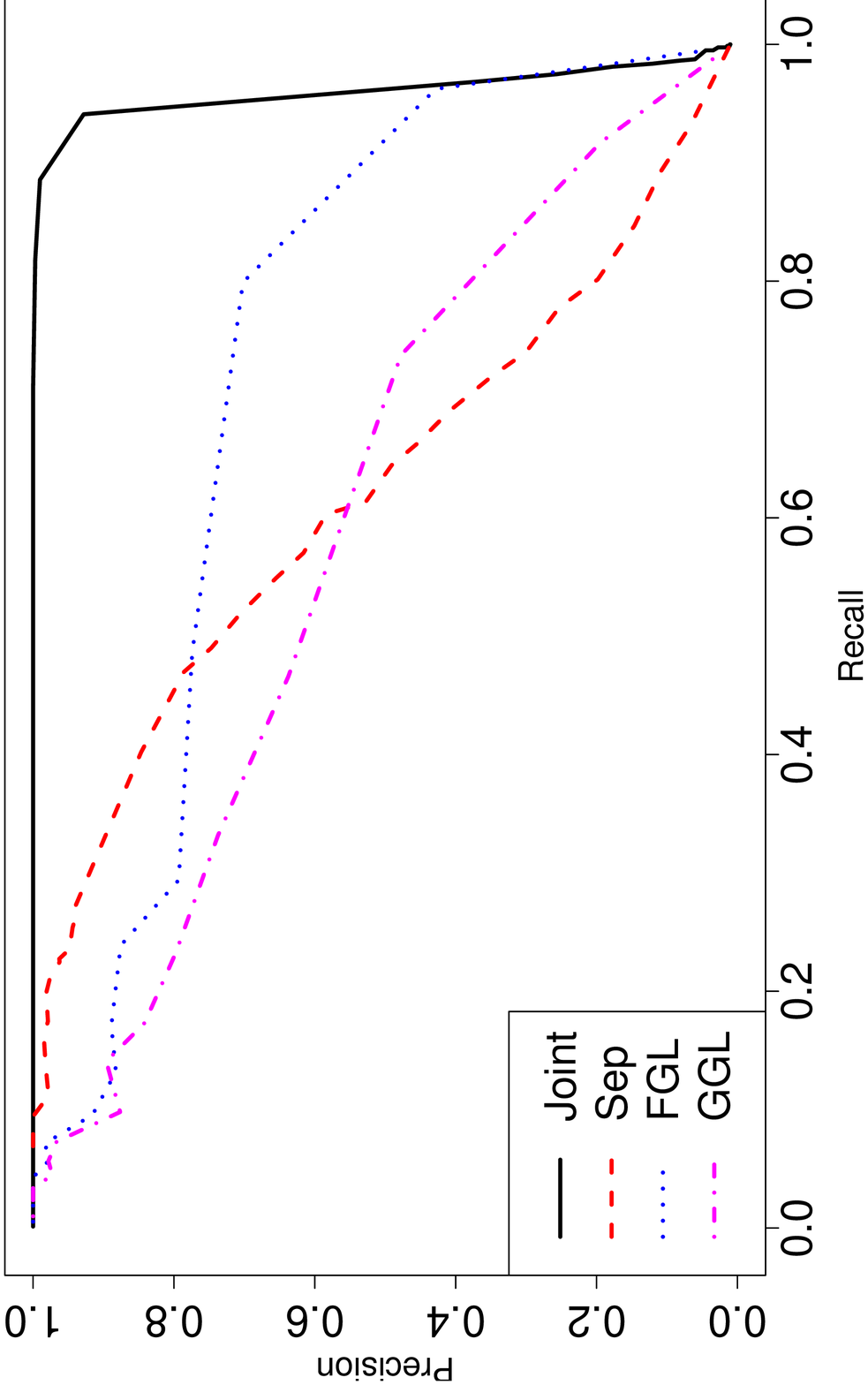}}
\hspace{0.5in}
\subfigure[Scale-free with $(n,p)=(500,200)$.]{
\label{fig4} %% label for first subfigure
\includegraphics[width=2.2in,angle=270]{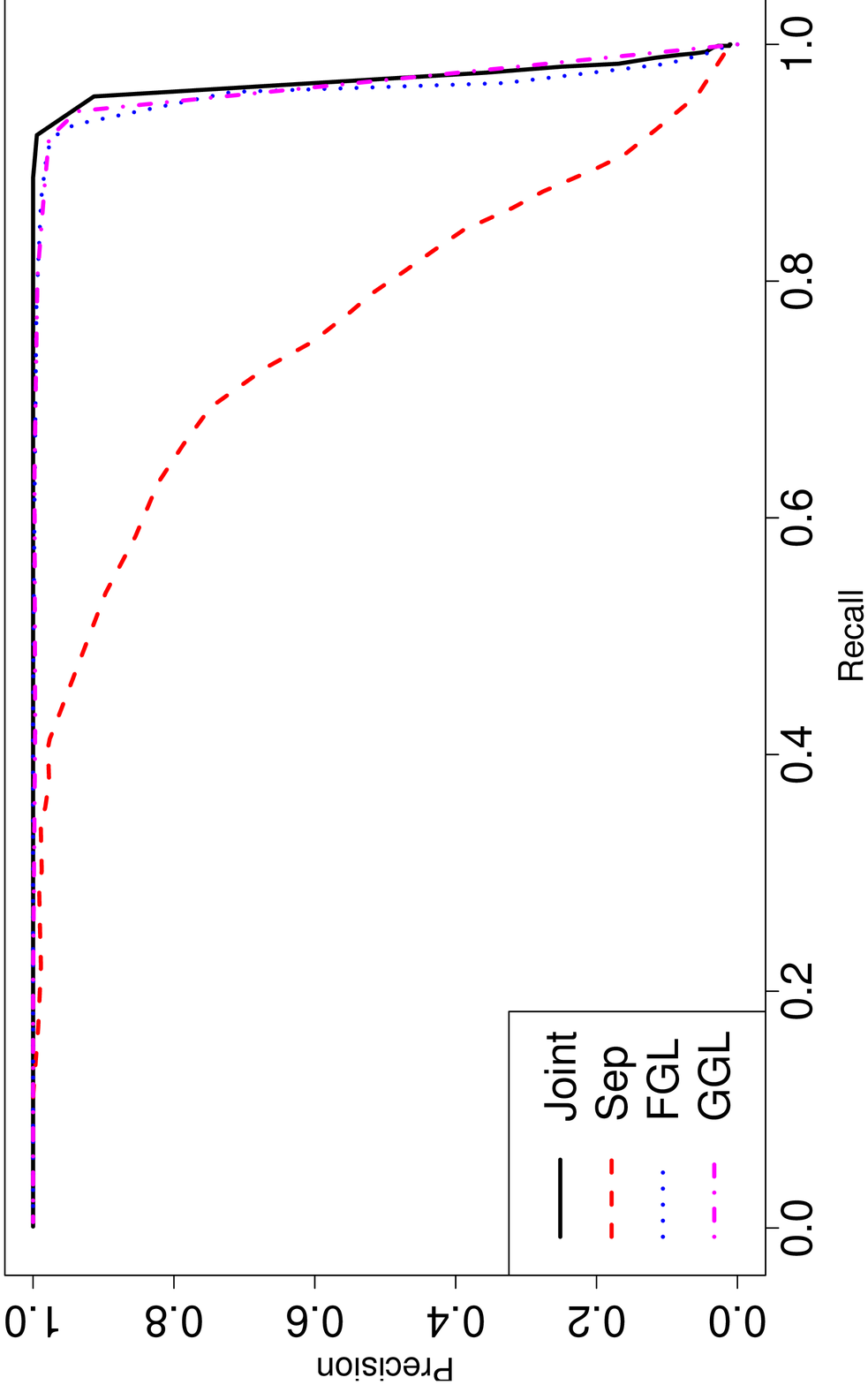}}
\noindent
\subfigure[Hub with $(n,p)=(100,200)$.]{
\label{fig5} %% label for first subfigure
\includegraphics[width=2.2in,angle=270]{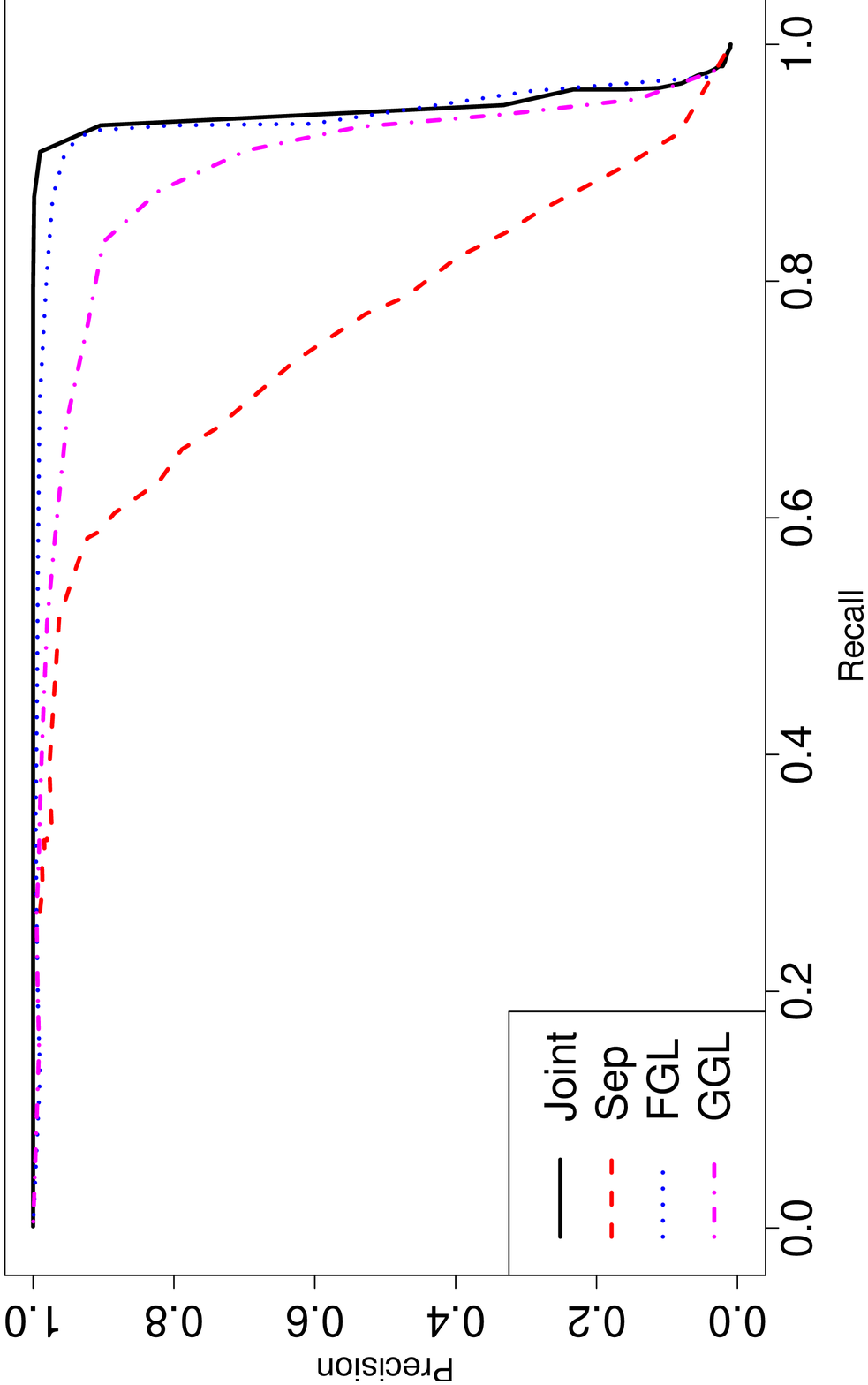}}
\hspace{0.5in}
\subfigure[Hub with $(n,p)=(500,200)$.]{
\label{fig6} %% label for first subfigure
\includegraphics[width=2.2in,angle=270]{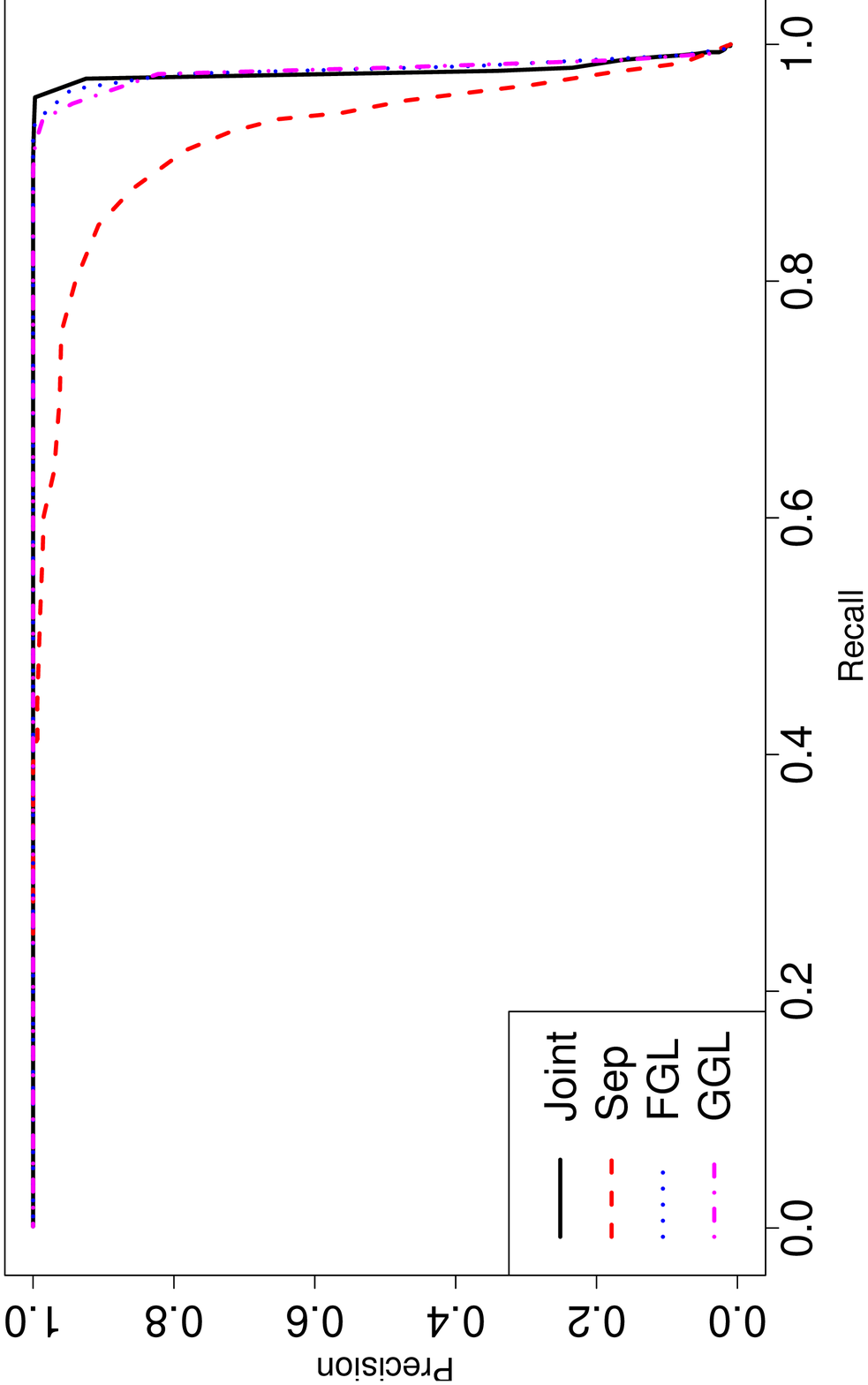}}
\\
\caption{Comparison of the FBIA method (labeled as ''Joint'' in the plots) with 
 FGL, GGL and separated $\psi$-learning (labeled as ''Sep'' in the plots) for the 
 simulated temporal data with the underlying structures: AR(2) (top row), 
 scale free (middle row), and hub (bottom row).  }
\label{fig_1} %% label for entire figure
\end{figure}

Table \ref{simTab1} summarizes the performance of the FBIA, FGL, GGL and separated $\psi$-learning methods 
on 10 datasets by reporting the averaged areas under the precision-recall curves.  
 The comparison indicates that when $n=100$, 
 FBIA significantly outperforms all other three methods; and when $n=500$,  
 FBIA still significantly outperforms all other three methods for the AR(2) network, but 
 tends to have the same performance as FGL and GGL for the scale-free and hub networks.

\begin{table}[htbp]
\begin{center}
\caption{Averaged areas under the Precision-Recall curves produced by 
 FBIA, FGL, GGL, and separated $\psi$-learning for 10 datasets simulated under the scenario of 
 temporal priors, where the 
 number in the parentheses represents the standard deviation of the averaged area. }
\label{simTab1}
\vspace{2mm}
\begin{tabular}{cccccc} \hline
 n& Structure & FGL & GGL & $\psi$-Learning & FBIA\\ \hline
  \multirow{3}{2cm}{\centering $100$} & AR(2) & 0.655(0.008) & 0.494(0.009) & 0.628(0.005) & {\bf 0.863}(0.004)  \\
  &scale-free &  0.750(0.009) & 0.609(0.008) & 0.664(0.005) & {\bf 0.965}(0.001)  \\
   &hub &  0.937(0.002) & 0.899(0.002) & 0.750(0.005) & {\bf 0.950}(0.002)   \\ \hline
  \multirow{3}{2cm}{\centering $500$} & AR(2)&  0.882(0.003)& 0.770(0.006) & 0.985(0.010) & {\bf 0.999}(0.004) \\
 & scale-free & 0.950(0.002)& 0.949(0.002) & 0.728(0.003) & {\bf 0.970}(0.005) \\
   &hub &  0.972(0.001)& 0.970(0.001) & 0.902(0.001) & {\bf 0.977}(0.001)   \\\hline
\end{tabular}
\end{center}
\end{table}

Table \ref{Time1} reports the CPU times cost by FGL, GGL, separated $\psi$-learning and FBIA for 
one dataset of AR(2) structure, where the CPU time was measured on a Linux desktop 
with Inter Core i7-4790 CPU\@3.6Ghz. All computations reported in this paper were done on 
 the same computer.  
 The CPU times of these methods for the other two graph structures are about the same.  
 FGL is extremely slow for this example, as it needs to search over a grid of possible 
 values for an optimal setting of $(\lambda_1,\lambda_2)$. The grid we used consists 
 of 100 different pairs of $(\lambda_1,\lambda_2)$.  Moreover, 
 for each pair of $(\lambda_1,\lambda_2)$, it needs to solve a generalized fused Lasso problem 
 for which a closed-form solution does not exist when $K$ is greater than 2.  
 Solving the generalized fused Lasso problem is time consuming and 
 has a computational complexity of $O(p^2 K \log K)$. 
 The GGL is better as for which there exists a closed-form solution to the regularized 
 parameter optimization problem under each setting of $(\lambda_1,\lambda_2)$, although the optimal 
 setting of $(\lambda_1,\lambda_2)$ also needs to be searched over a grid of 100 points.
 The computational complexity of FBIA is of $O(p^2 2^K)$, which can be pretty fast for a small 
 value of $K$. The separated $\psi$-learning is  a little more time consuming 
 than FBIA because it needs to conduct multiple hypothesis tests under each 
 condition. 

\begin{table}[htbp]
%\tabcolsep=3pt\fontsize{8}{14}
%\selectfont
\begin{center}
\caption{CPU time cost by FGL, GGL, separated $\psi$-learning, and  FBIA  
 for the datasets generated with the AR(2) structure.} 
\label{Time1}
\vspace{2mm}
\begin{tabular}{ccccc} \hline
 Sample size &FGL&GGL& $\psi$-Learning  & FBIA \\\hline
$n=100$&14.89 hrs & 28.66 mins & 11.48 mins & 8.95 mins \\\hline
$n=500$& 18.46 hrs & 68.89 mins &12.31 mins  &9.77 mins  \\\hline
\end{tabular}
\end{center}
\end{table}

\subsection{Scenario with Spatial Priors}

 As in the scenario with temporal priors, we considered three types of network structures:
 AR(2), scale-free and hub. 
 For each type of structures, we set $K=5$ and $p=200$, and tried 
 two sample sizes $n=100$ and $n=500$. 
 For AR(2), we first generated the precision matrix $\Omega^{(0)}$ according to (\ref{plugin}). 
 Conditioned on $\Omega^{(0)}$, we generated the precision matrices 
 $\Omega^{(k)}$, $k=1,2,\ldots,5$, independently using the random edge 
 deleting-adding procedure as described in the scenario of temporal priors.   
 For the other two types of structures, we generated the precision matrices 
 $\Omega^{(0)}$ using the R package {\it huge}, and then generated 
 $\Omega^{(k)}$, $k=1,2,\ldots,5$ independently using the random edge deleting-adding procedure. 
 Given the precision matrices, we then generated 10 independent datasets of size $n$ 
 by drawing from the multivariate Gaussian 
 distribution $N\left(0, \left(\Omega^{(k)}\right)^{-1} \right)$ for each condition $k$.

 The FBIA, FGL, GGL, separated $\psi$-learning and graphical EM (Xie, Liu and Valdar, 2016) methods 
 were applied to this example.  The graphical EM algorithm was specially designed for 
 jointly estimating multiple dependent Gaussian graphical models under this scenario. 
 It works by decomposing the problem into two graphical layers, namely, the systemic layer 
 and the category-specific layer.  
 The former induces cross-graph dependence and represents the underlying common structure, 
 and the latter represents the graph-specific variation.  
 By treating the systemic layer data as missing, the EM algorithm was applied to
 estimate the underlying precision matrices.
  
 Figure \ref{fig_2} shows the precision-recall curves produced 
 for two datasets by FBIA, FGL, GGL, separated $\psi$-learning and graphical EM. 
 Table \ref{Tab4} summarizes the performance of these methods for 
 all simulated datasets of this example. The comparison indicates that FBIA significantly 
 outperforms all other methods, especially when the sample size is small. 

\begin{figure}
\centering
\subfigure[AR(2) with $(n,p)=(100,200)$.]{
\label{fig7} %% label for first subfigure
\includegraphics[width=2.2in,angle=270]{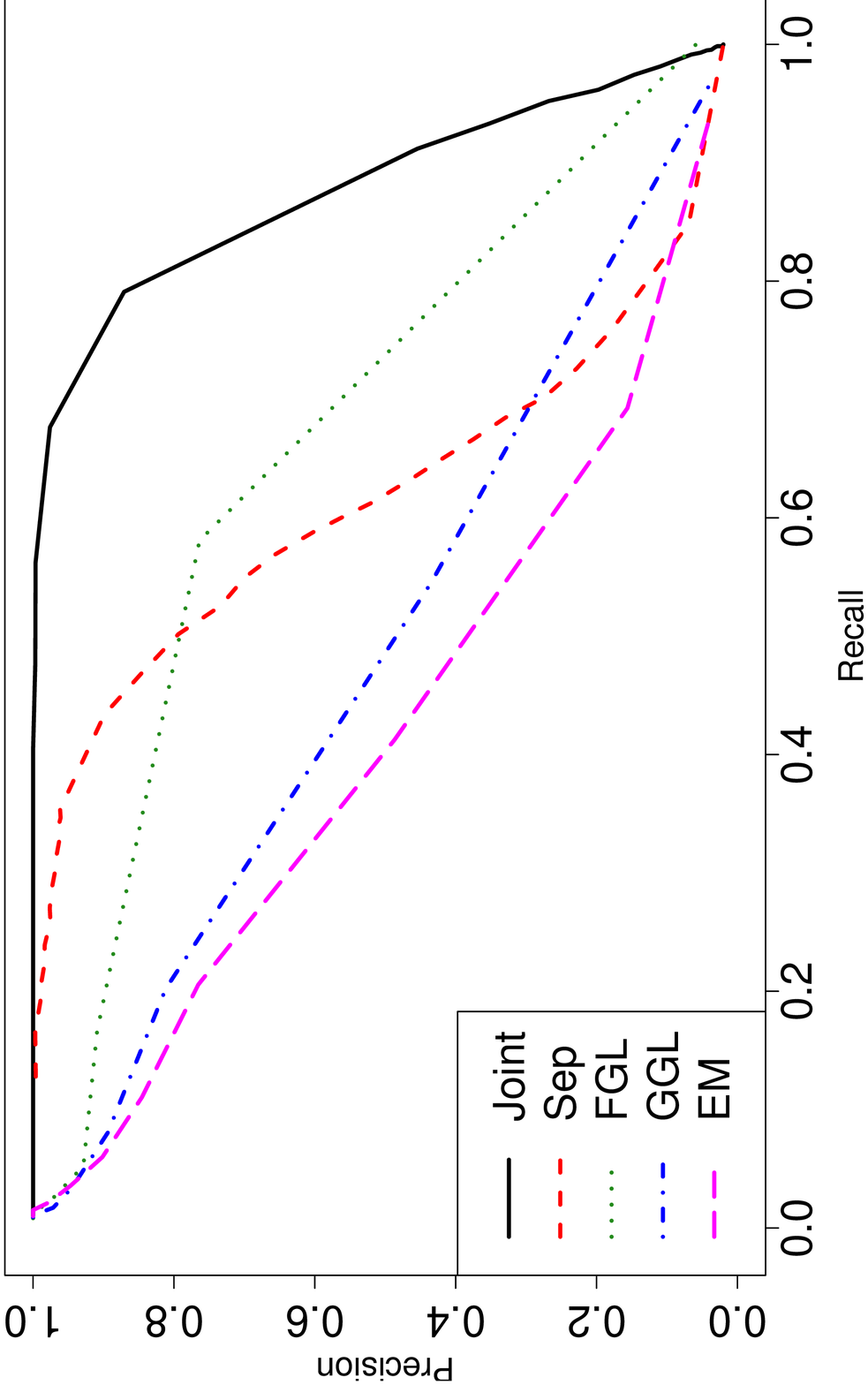}}
\hspace{0.5in}
\subfigure[AR(2) with $(n,p)=(500,200)$.]{
\label{fig8}
\includegraphics[width=2.2in,angle=270]{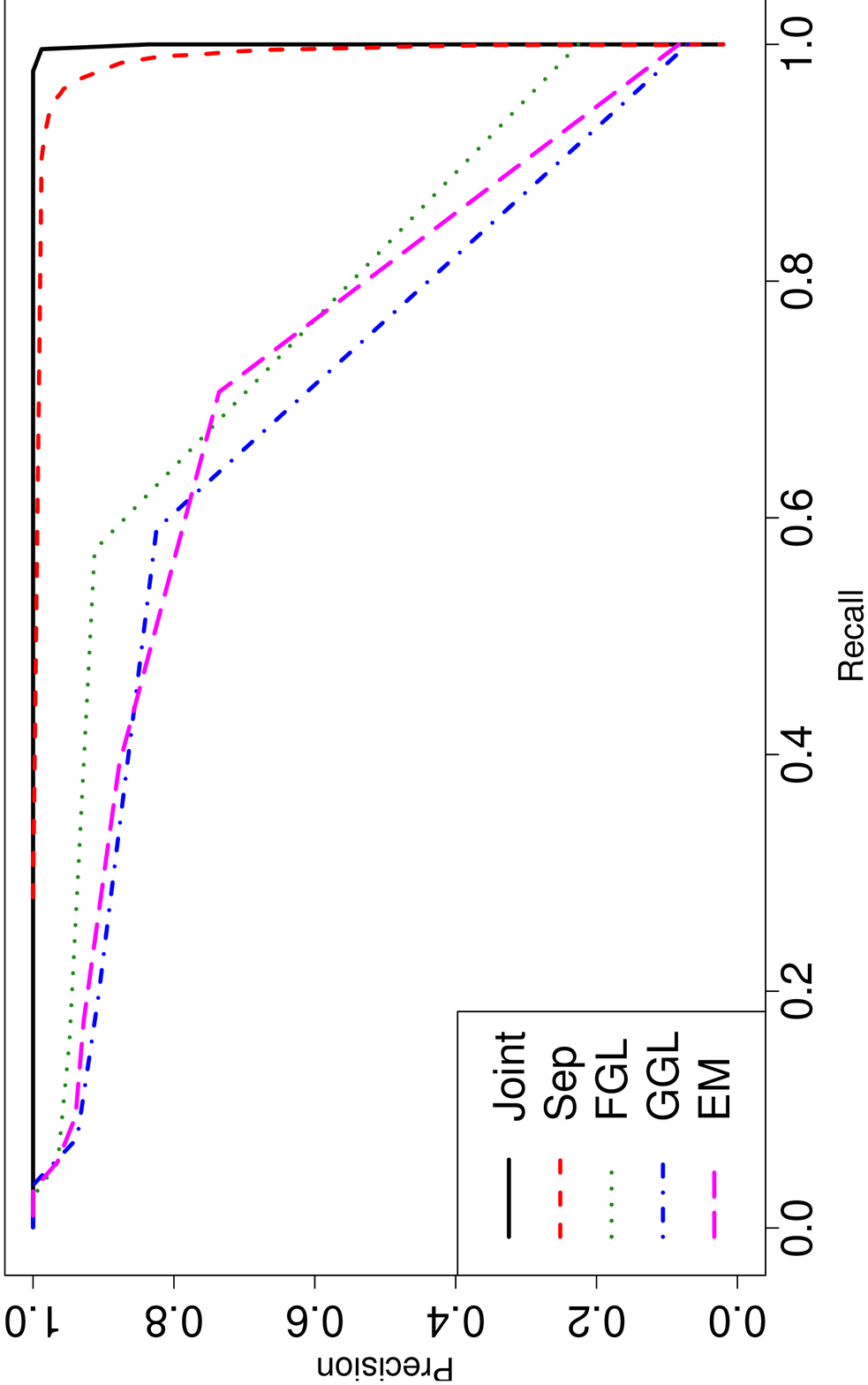}}
\\
\noindent
\subfigure[Scale-free with $(n,p)=(100,200)$.]{
\label{fig9} %% label for first subfigure
\includegraphics[width=2.2in,angle=270]{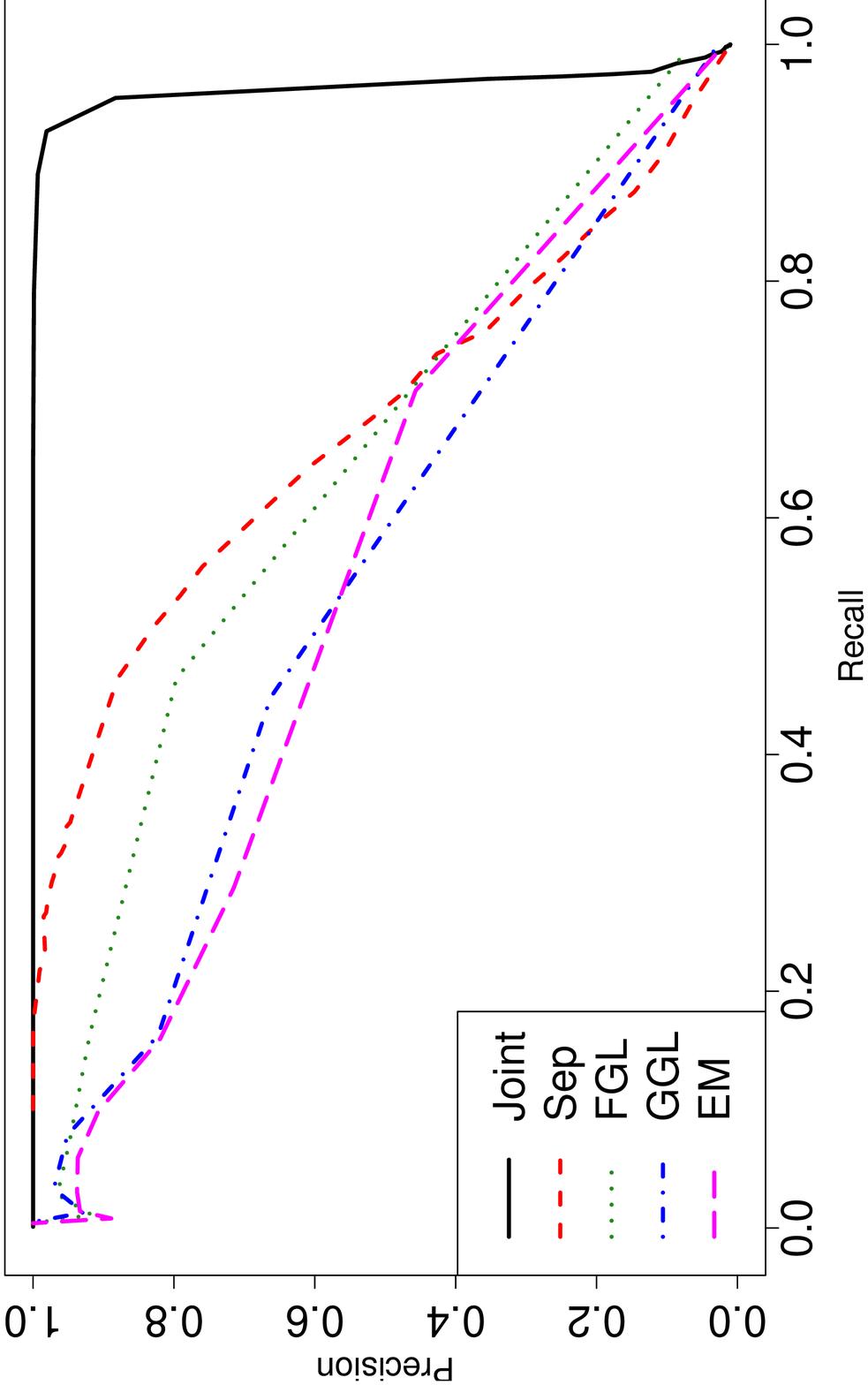}}
\hspace{0.5in}
\subfigure[Scale-free with $(n,p)=(500,200)$.]{
\label{fig10} %% label for first subfigure
\includegraphics[width=2.2in,angle=270]{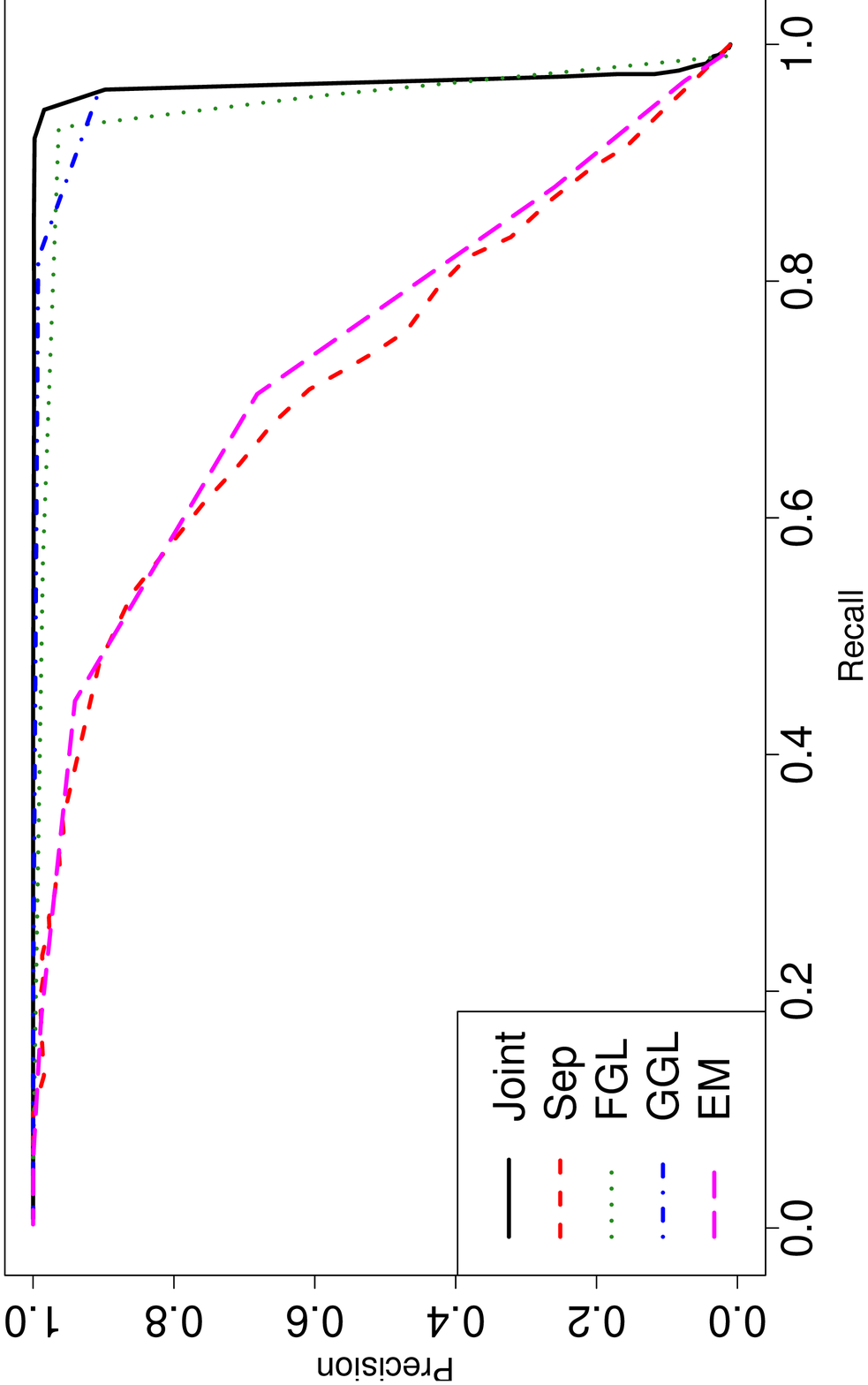}}
\noindent
\subfigure[Hub with $(n,p)=(100,200)$.]{
\label{fig11} %% label for first subfigure
\includegraphics[width=2.2in,angle=270]{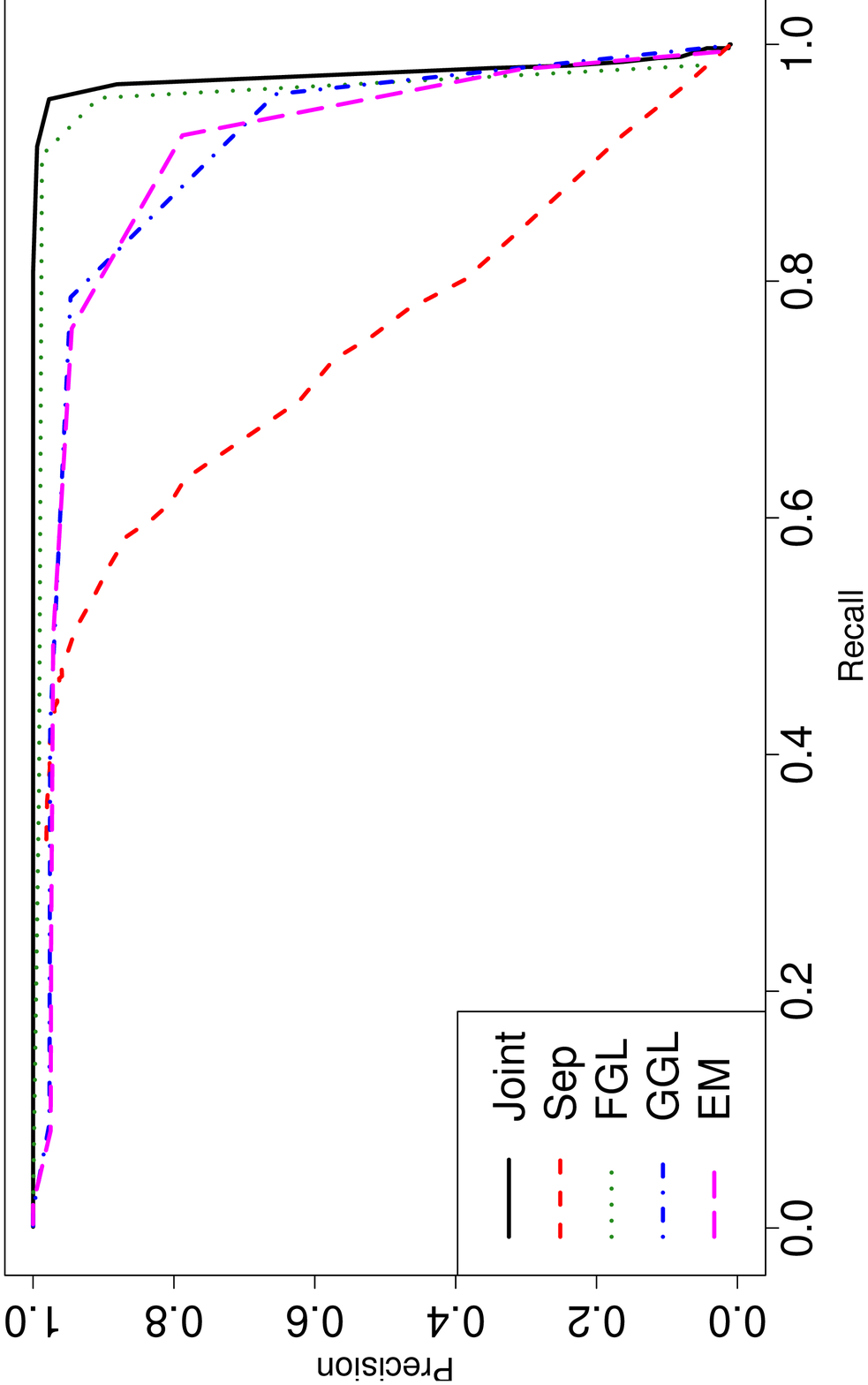}}
\hspace{0.5in}
\subfigure[Hub with $(n,p)=(500,200)$.]{
\label{fig12} %% label for first subfigure
\includegraphics[width=2.2in,angle=270]{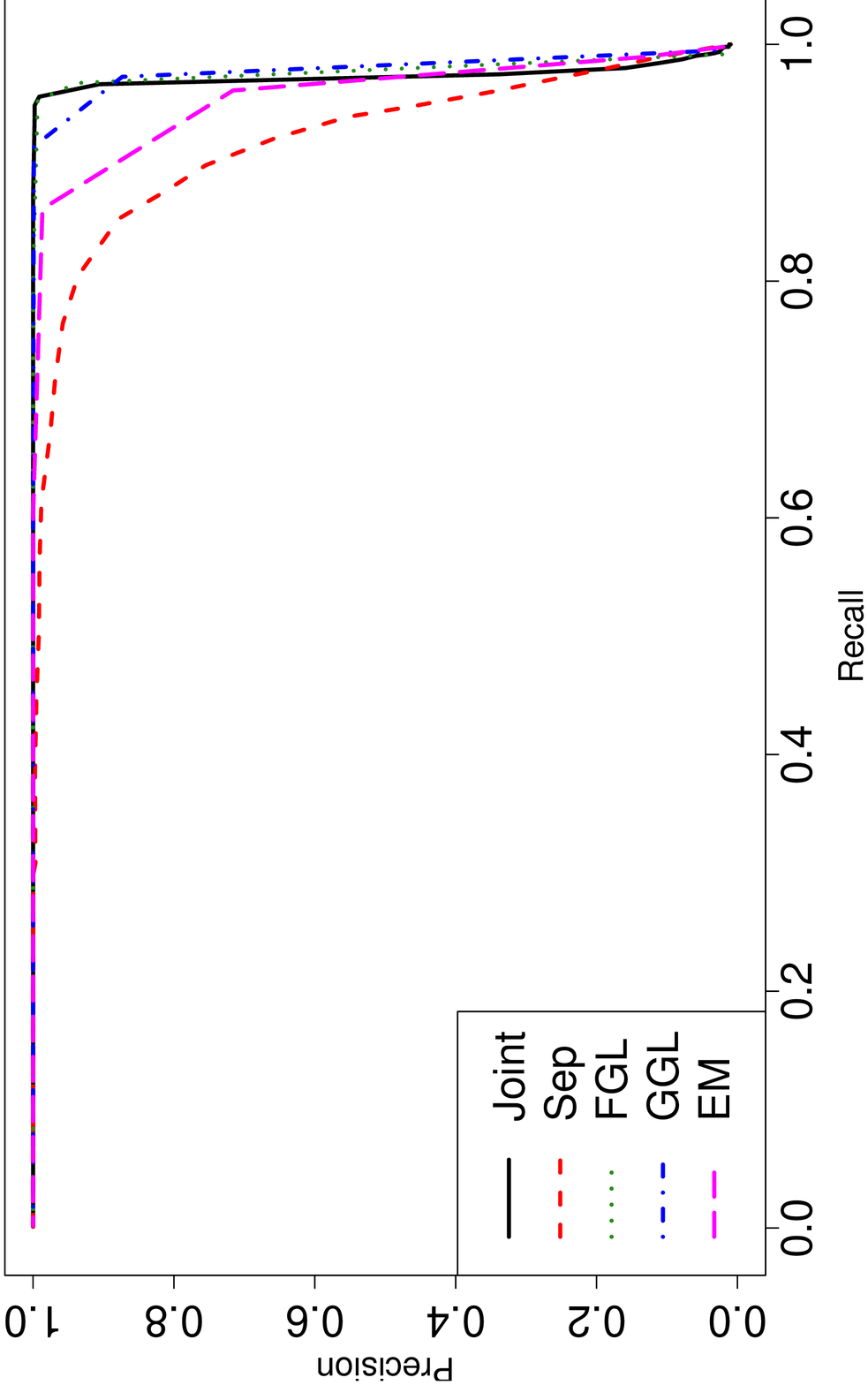}}
\\
\caption{Comparison of the FBIA method (labeled as ''Joint'' in the plots) with
 FGL, GGL and separated $\psi$-learning (labeled as ''Sep'' in the plots) for the
 simulated spatial data with the underlying structures: AR(2) (top row),
 scale free (middle row), and hub (bottom row).}
\label{fig_2} %% label for entire figure
\end{figure}

\begin{table}[htbp]
\begin{center}
\caption{Averaged areas under the Precision-Recall curves produced by
 FBIA, FGL, GGL, separated $\psi$-learning, and graphical EM for 10 datasets 
 simulated under the scenario of spatial priors, where the
 number in the parentheses represents the standard deviation of the averaged area. }
\label{Tab4}
\vspace{2mm}
\begin{tabular}{ccccccc} \hline
 n& Structure & FGL & GGL & $\psi$-Learning &  EM &  FBIA \\ \hline
  \multirow{3}{2cm}{\centering $100$} & AR(2) & 0.681(0.003) & 0.508(0.004) & 0.616(0.006) & 0.427(0.005) & {\bf 0.878}(0.004) \\
  &scale-free &  0.631(0.007) & 0.556(0.006) & 0.657(0.006) & 0.555(0.006) &  {\bf 0.961}(0.001) \\
   &hub &  0.949(0.005) & 0.885(0.007) & 0.730(0.008) &  0.876(0.008)&  {\bf 0.971}(0.001)  \\\hline 
  \multirow{3}{2cm}{\centering $500$} & AR(2)&  0.797(0.002)& 0.719(0.002) & 0.989(0.001) &  0.753(0.003)& {\bf 0.999}(0.001)\\
 & scale-free & 0.949(0.001)& 0.963(0.001) & 0.736(0.004)&  0.749(0.004) &  {\bf 0.967}(0.001)\\
   &hub &  0.972(0.001)& 0.969(0.001) & 0.916(0.001) &  0.948(0.002)&  {\bf 0.975}(0.001)\\\hline
\end{tabular}
\end{center}
\end{table}

Table \ref{Time2} reports the CPU times cost by FGL, GGL, separated $\psi$-learning, graphical EM, and FBIA for
one dataset of AR(2) structure. The CPU times for the other two graph structures are about the same.
For FGL, this example is even more time consuming than the previous one, although it was run 
under exactly the same setting for the two examples. One reason is that $K$ has increased from 4 
to 5. For FBIA, the CPU time is not much increased  compared to the previous example. 

\begin{table}[htbp]
\begin{center}
\caption{CPU time cost by FGL, GGL, separated $\psi$-learning, graphical EM and FBIA for
 the datasets generated with AR(2) structure.} 
\label{Time2}
\vspace{2mm}
\begin{tabular}{cccccc} \hline
Sample Size&FGL&GGL&  $\psi$-Learning & EM &FBIA \\\hline
$n=100$&38.43 hrs & 27.21 mins & 14.27 mins & 20.71 mins & 11.47 mins \\\hline
$n=500$& 59.51 hrs & 29.80 mins &14.94 mins  &22.57 mins &13.95 mins \\\hline
\end{tabular}
\end{center}
\end{table}

\section{TEDDY Data Analysis}

 This section applied the FBIA method to the mRNA gene expression data collected 
 in the study of The Environmental Determinants of Diabetes in the Young (TEDDY). 
 In the study, to reduce potential bias and retain study power while reducing the costs by limiting 
 the numbers of samples requiring laboratory analyses, 
 the gene expression data were collected from the nested matched case-control cohort. 
 A subject who developed two primary outcomes, persistent confirmed islet autoimmunity 
 (i.e. the presence of one confirmed autoantibody, GADA65A, IA-2A or IAA, on two or more consecutive samples) 
 and/or T1D, was defined as a case. The controls are randomly selected among cohort members who 
 have not yet developed the disease at the time a case is diagnosed. 
 For each subject, the gene expression data were collected at multiple time points within  
 four years of age. Refer to Lee et al. (2014) for the detailed description for the study.
 Our goal is to integrate all the data to  construct one gene network under each distinct condition. 

 The dataset consists of 21285 genes and 742 samples collected at multiple time points from a 
 total of 313 subjects. Among the 742 samples, half of them are for the case and  half of them are for the control.  
 The dataset also contains some external variables for each patient, which include age (the time of data collected),
 gender, race, race ethnicity, season of birth, number of older siblings, and country.
 To simplify the analysis, we first filtered out some non-differentially expressed 
 genes across the case and control conditions. This was done by conducting a paired $t$-test for each gene 
 at each time point and then applied the multiple hypothesis test method by Liang and Zhang (2008) 
 to identify the set of genes that are significantly differentially expressed under the two conditions at least 
 at one time point.  With this filtering process, 572 genes were selected for further study. 
 Figure \ref{abs0} shows the histogram of the ages of the samples.  Based on this histogram, 
 we selected only the samples fallen into the first 9 groups for the further analysis, where 
 each mode of the histogram is treated as a group. 
 The respective group sizes are  29, 40, 49, 43, 32, 27, 27, 23, and 21, 
 which are the same for both the case and control.  
 Since the samples were grouped in ages, the index $k=1,2,\ldots, 9$ can be understood as 
 the time of experiments. In grouping the samples we have ensured that in each group, 
 each sample corresponds to a different patient and thus the samples within the same group 
 can be treated as mutually independent. Since the sample size of each group is 
 small, we set $\alpha_1=0.05$ and $\alpha_2=0.01$, which are smaller than the default values.  

\begin{figure}[htbp]
\includegraphics[scale=1.0]{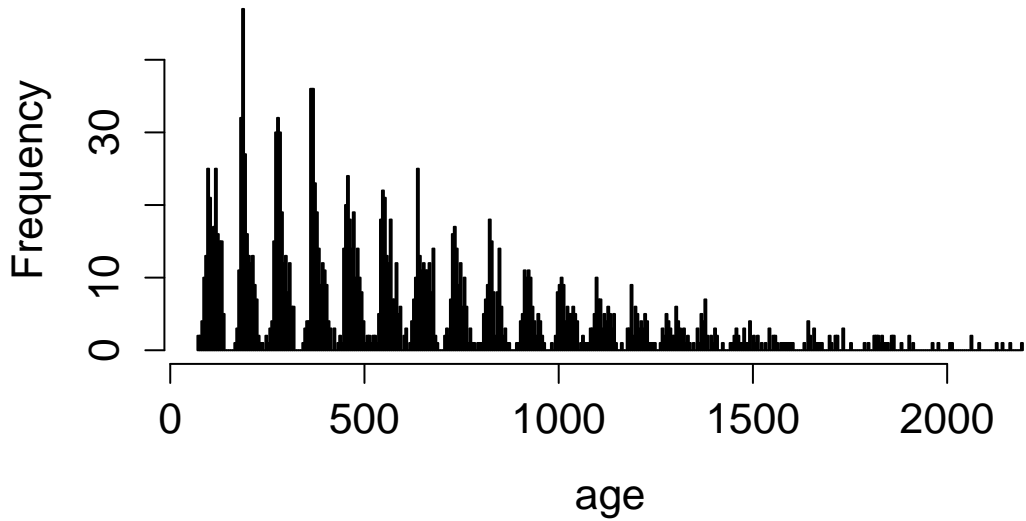}
\caption{Histogram of Ages (in days) for all samples in the case and control groups.}
\label{abs0}
\end{figure}

 To adjust the effect of external variables, we adopted the method proposed by Liang et al. (2015). 
 Let $W_1^{(k)},..., W_q^{(k)}$ denote the external variables observed at condition $k$. 
 To adjust for their effects, 
 we can replace the empirical correlation coefficient used in the $\psi$-score calculation step 
 by the p-value obtained in testing the hypotheses $H_0 :\beta_{q+1} =0 \leftrightarrow H_1 :\beta_{q+1}\neq 0$ 
 for the regression
\begin{equation}
X_{i}^{(k)}=\beta_0+\beta_1W_1^{(k)}+\cdots+ \beta_qW_q^{(k)}+\beta_{q+1}X_{j}^{(k)}+\epsilon,
\end{equation}
where $X_{i}^{(k)}$ denote the expression value of gene $i$ measured at condition $k$, 
 and $\epsilon$ denotes a vector of Gaussian random errors. 
Similarly, we can replace the $\psi$-partial correlation coefficient calculated in the $\psi$-score calculation step
by the p-value obtained in testing the hypotheses $H_0 :\beta_{q+1} =0 \leftrightarrow H_1 :\beta_{q+1}\neq 0$ 
 for the regression
\begin{equation}\label{p-reg}
X_{i}^{(k)}=\beta_0+\beta_1W_1^{(k)}+\cdots+ \beta_qW_q^{(k)}+\beta_{q+1}X_{j}^{(k)}
  +\sum_{s \in S_{ij}^{(k)}}\gamma_s X_{s}^{(k)}+\epsilon,
\end{equation}
where $S_{ij}^{(k)}$ is the separator of $X^{(i)}$ and $X^{(j)}$ under condition $k$. 
With the $p$-values, we can define the adjusted $\psi$-score as $\psi_{l}^{(k)}=\Phi^{-1}(1-p_{l}^{(k)})$, 
where $p_{l}^{(k)}$ is the p-value obtained from equation (\ref{p-reg}) 
for edge $l$ at condition $k$. 

For this dataset, the effect of all available demographical variables, including 
 age (the time of data collection), gender, race, race ethnicity, season of birth, number of older siblings, 
 and country, have been adjusted. 
With the adjusted $\psi$-scores, the FBIA method is ready to be applied 
to construct the gene networks. Given the complexity of the dataset, which 
contains case and control groups and multiple time points for each group, we 
calculated the integrated $\psi$-scores in two steps. First, we integrated the $\psi$-scores  
across 9 time points under the case and control, separately. 
Then, for each time point, we integrated the $\psi$-scores across the case and control 
conditions. In this way, all information of the data collected under the 18 conditions 
were integrated together.  
Figure \ref{dis1} shows a schematic diagram  for this two-step procedure. 
Finally, we applied the multiple hypothesis test to the Bayesian integrated $\psi$-scores  
to determine the structure of the gene networks under the 18 conditions. 
The total CPU time cost by FBIA was 19.2 hours, which is pretty long as 
$K=9$ is large. For a larger value of $K$, we might resort to MCMC for estimating 
the posterior probabilities $\pi(\be_l|\bpsi_l)$'s.

 Figure \ref{fig_case} shows the networks constructed by FBIA for the 
 case samples at 9 time points. The networks have identified quite a few hub genes, 
 which refer to the genes with high connectivity. Table \ref{tab_case} shows the 
 top 5 hub genes identified at each time point for the case samples.  
 The lists of hub genes are pretty stable. For example, 
 RPS26P11 and RPS26 consistently appear as top 2 genes at all time points,  
 the gene ADAM10 appeared at 5 out of 9 time points, and quite a few genes appeared 
 twice or more times, such as PRF1, POGZ, BCL11B, GGNBP2, and TMEM159.  Note that 
 RPS26P11 is a pseudo-gene, which represents a segment of the gene RPS26.

\begin{figure}[]
\centering
\subfigure[time 1]{
\label{fig13} %% label for first subfigure
\includegraphics[width=2in]{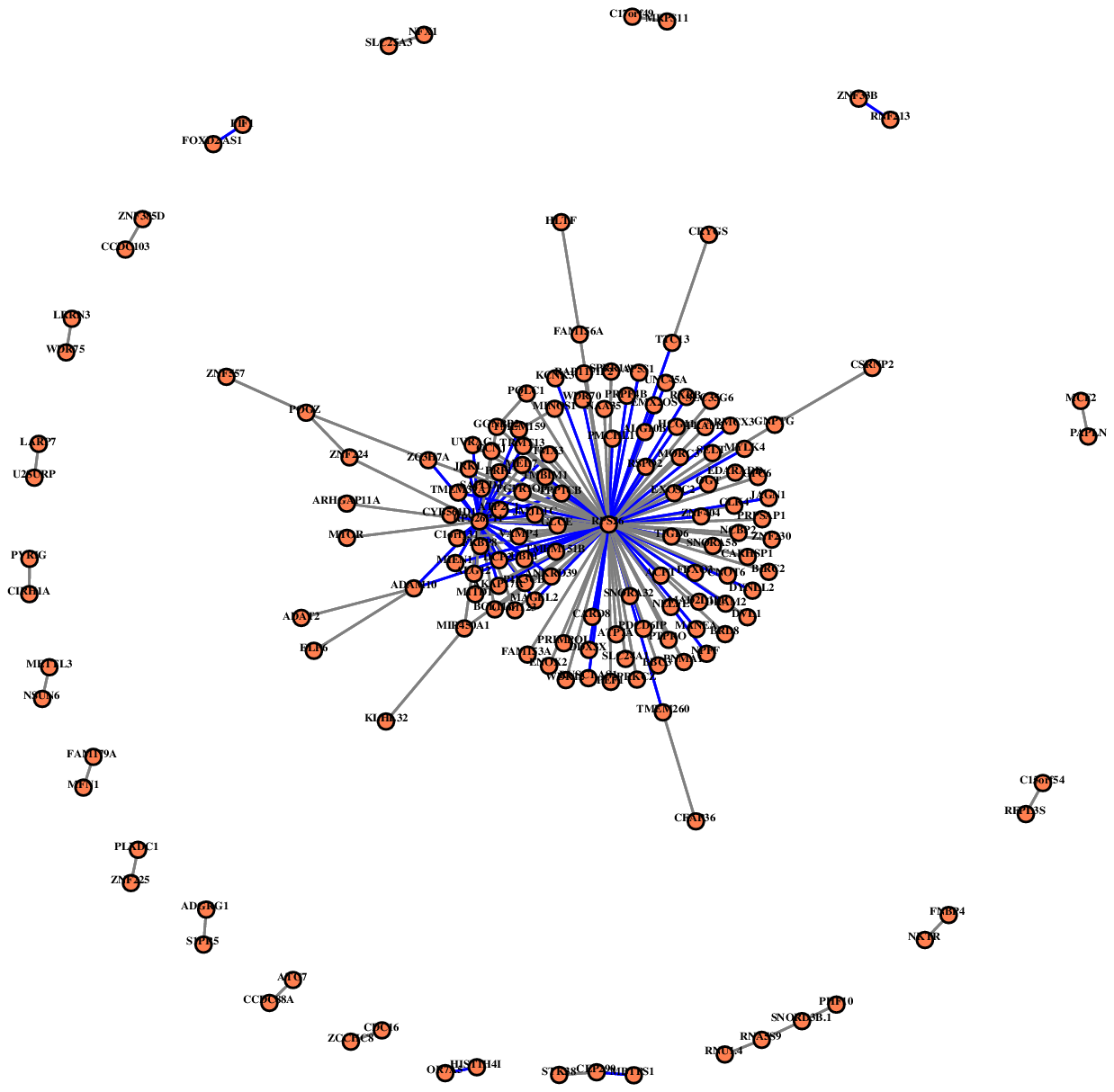}}
\subfigure[time 2]{
\label{fig14}
\includegraphics[width=2in]{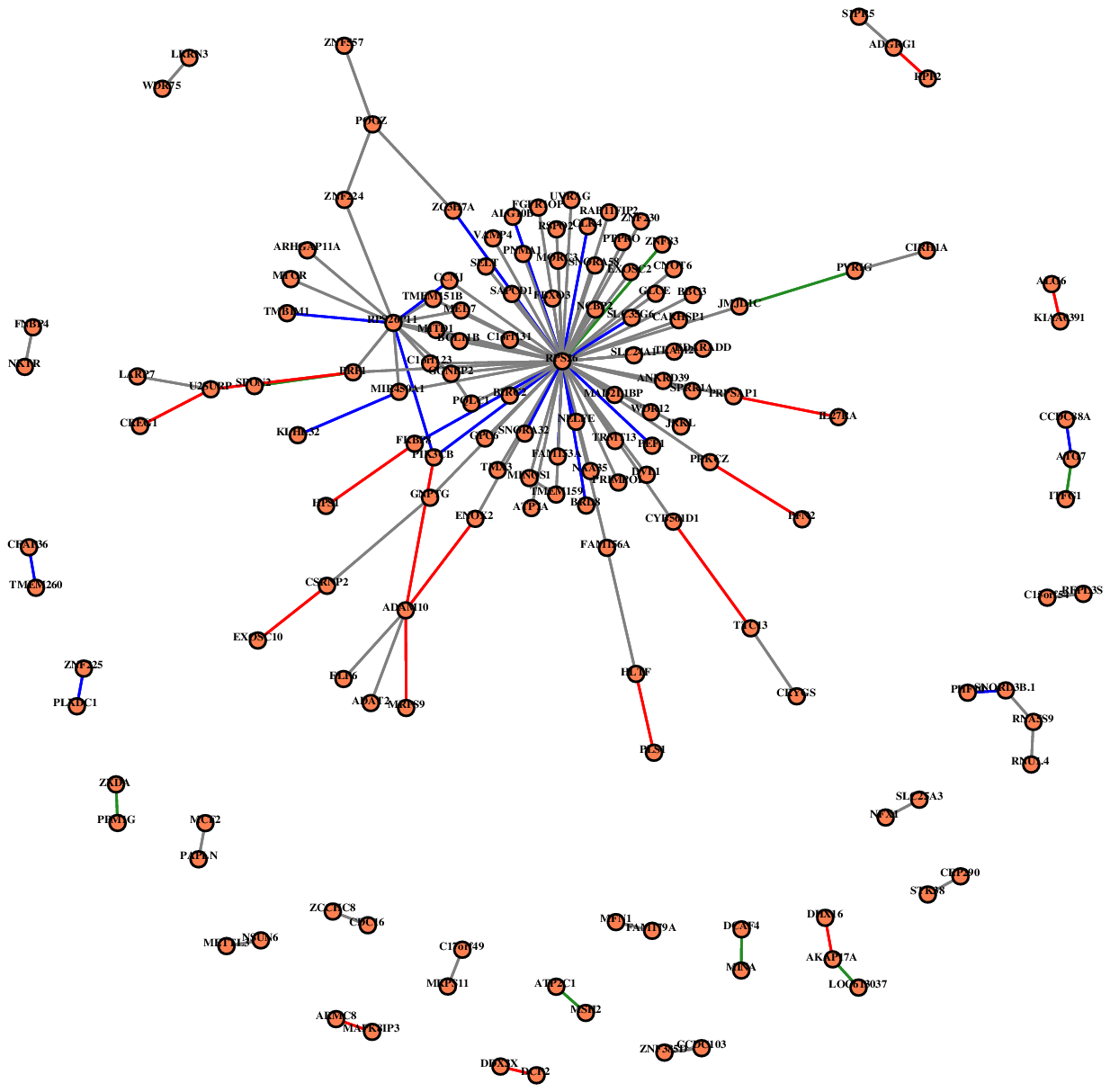}}
\subfigure[time 3]{
\label{fig15}
\includegraphics[width=2in]{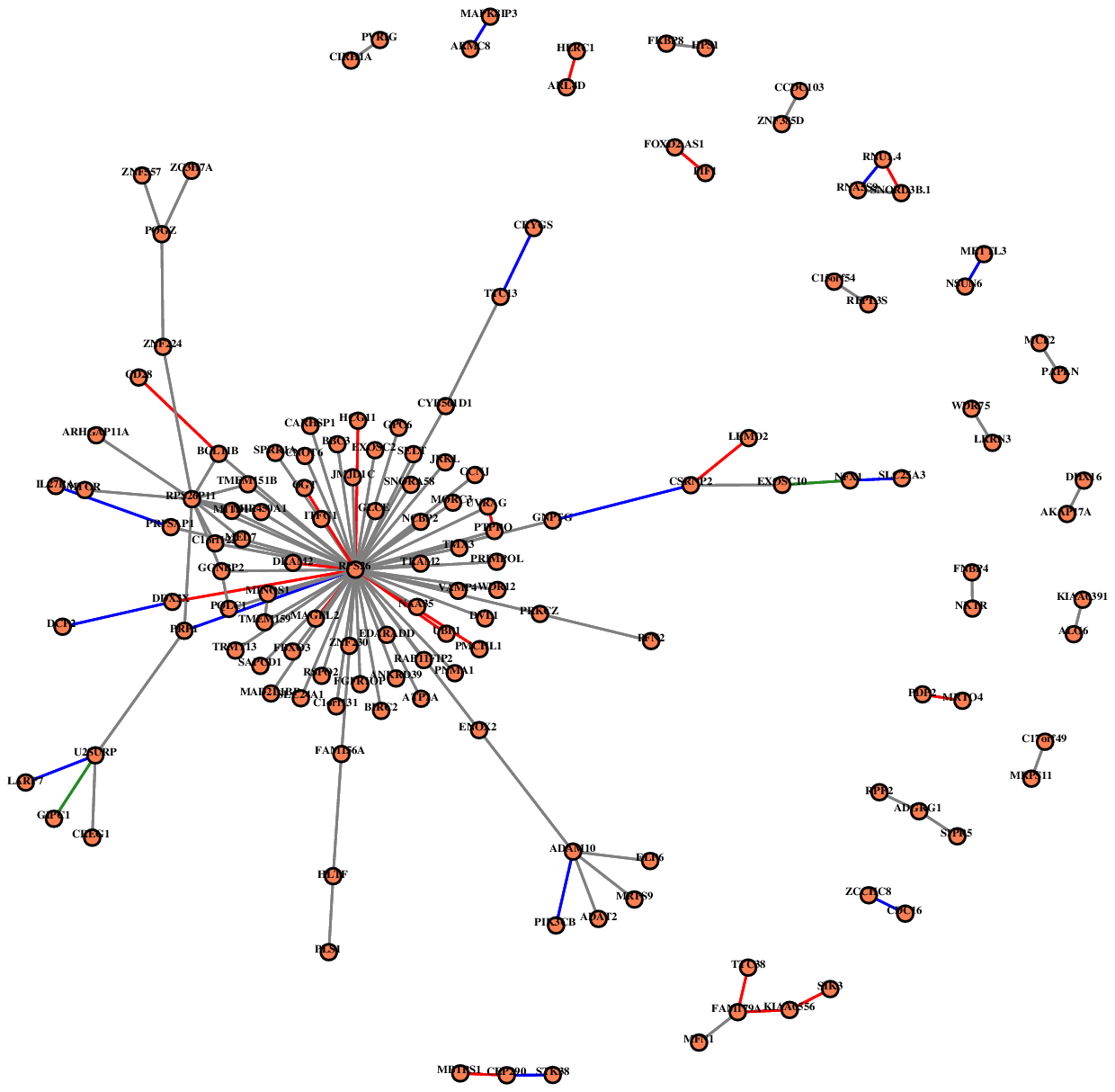}}
\\
\noindent
\subfigure[time 4]{
\label{fig16} %% label for first subfigure
\includegraphics[width=2in]{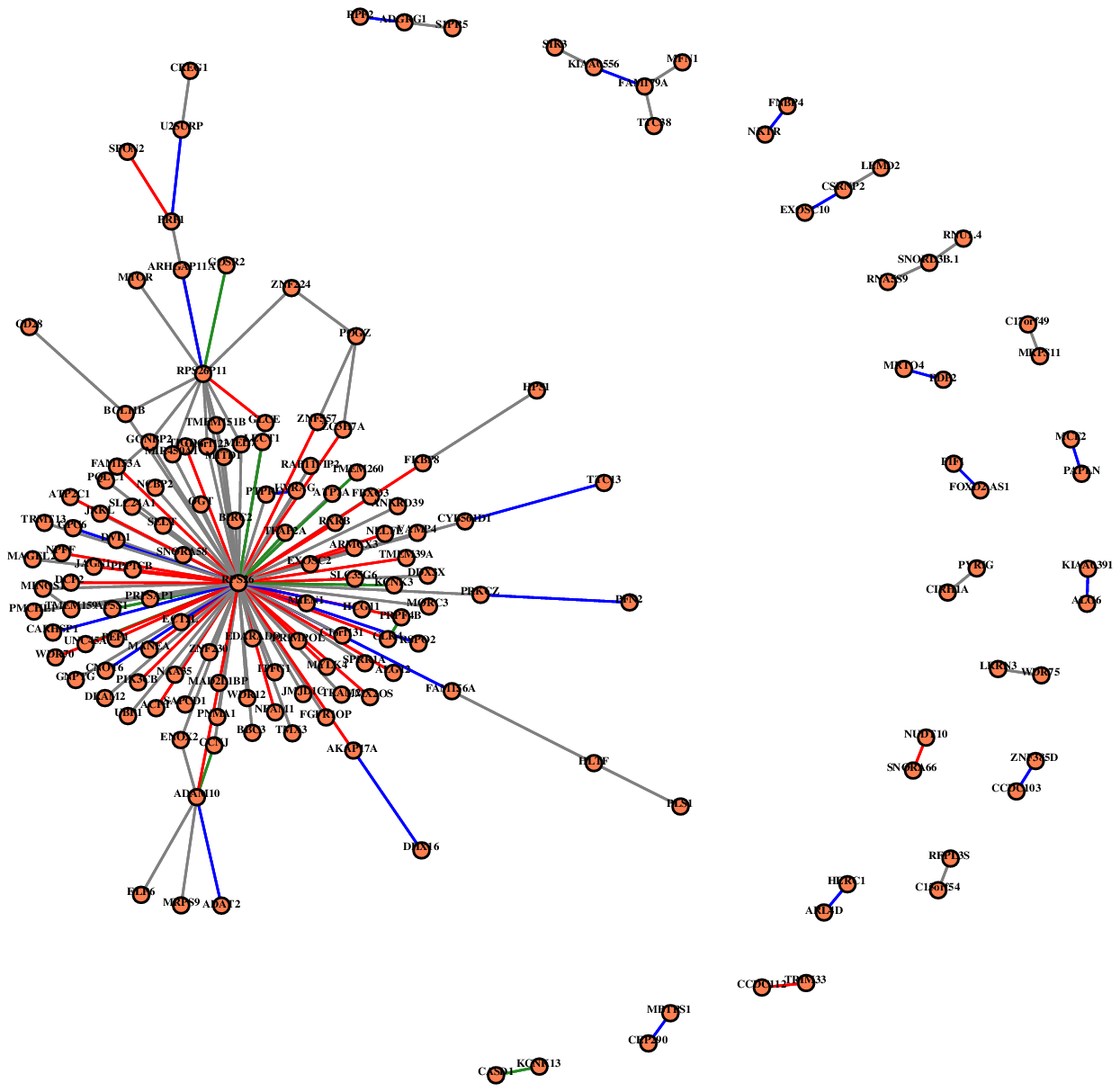}}
\subfigure[time 5]{
\label{fig17} %% label for first subfigure
\includegraphics[width=2in]{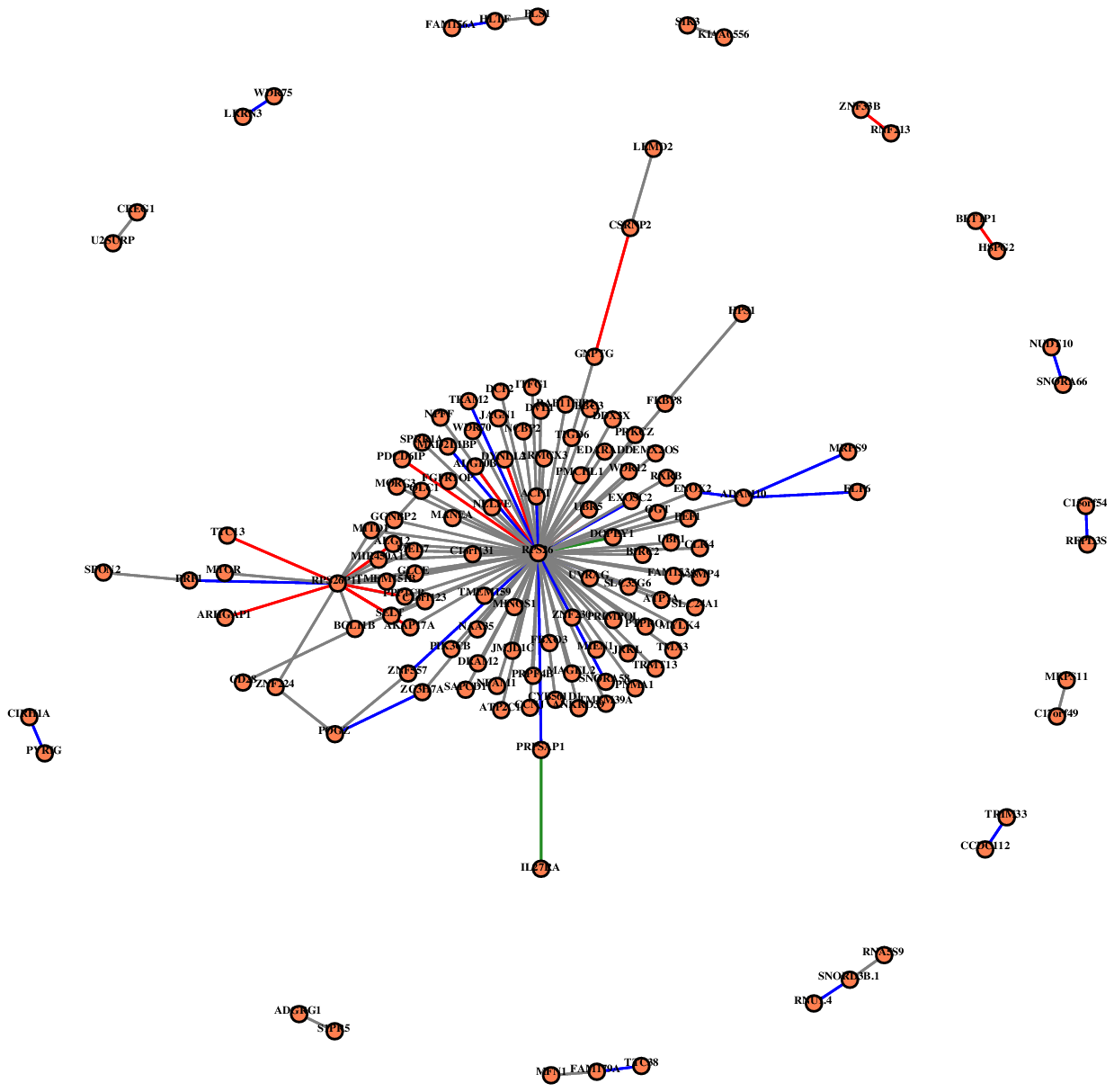}}
\subfigure[time 6]{
\label{fig18}
\includegraphics[width=2in]{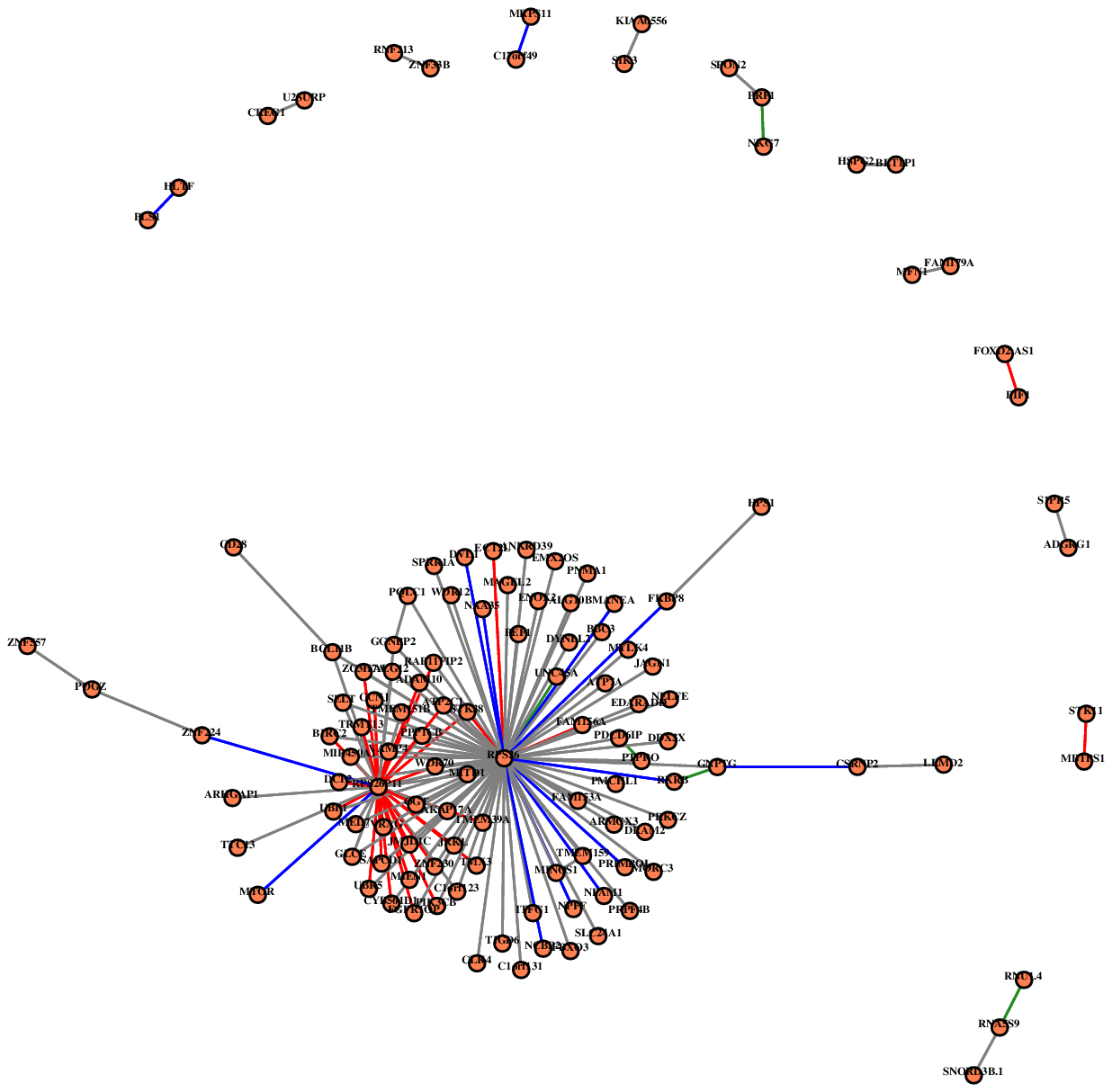}}
\\
\noindent
\subfigure[time 7]{
\label{fig19} %% label for first subfigure
\includegraphics[width=2in]{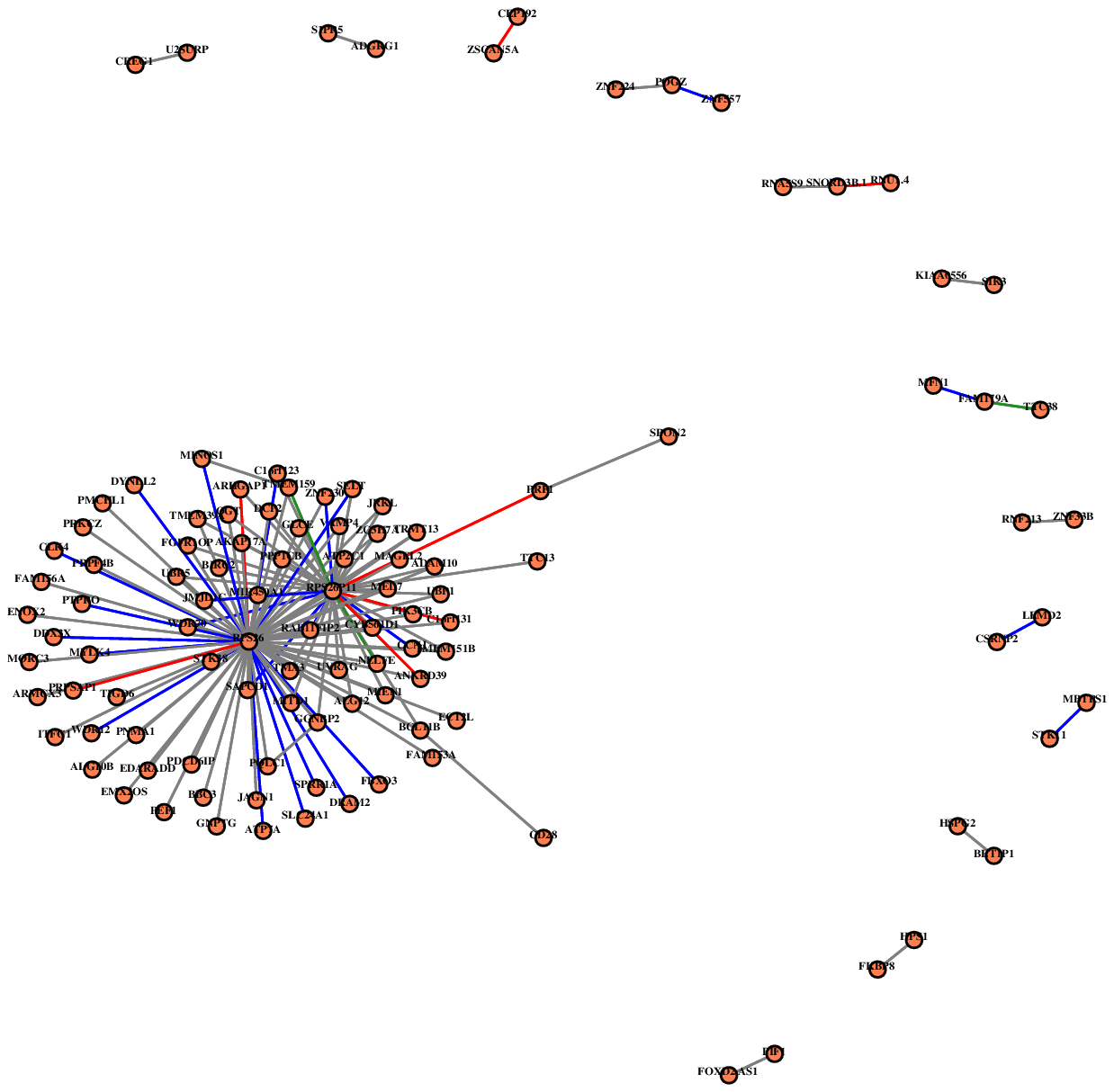}}
\subfigure[time 8]{
\label{fig20} %% label for first subfigure
\includegraphics[width=2in0]{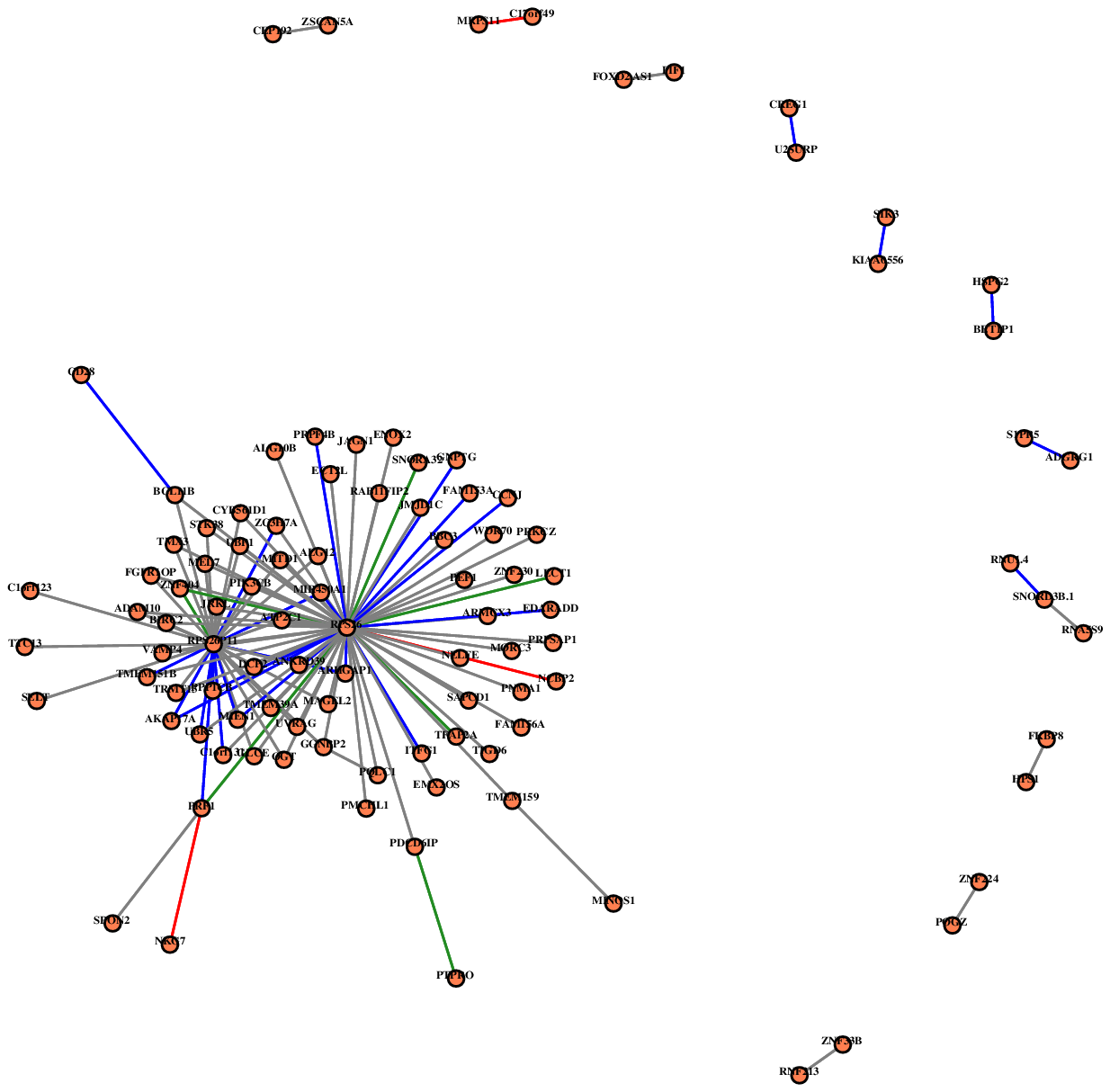}}
\subfigure[time 9]{
\label{fig21}
\includegraphics[width=2in]{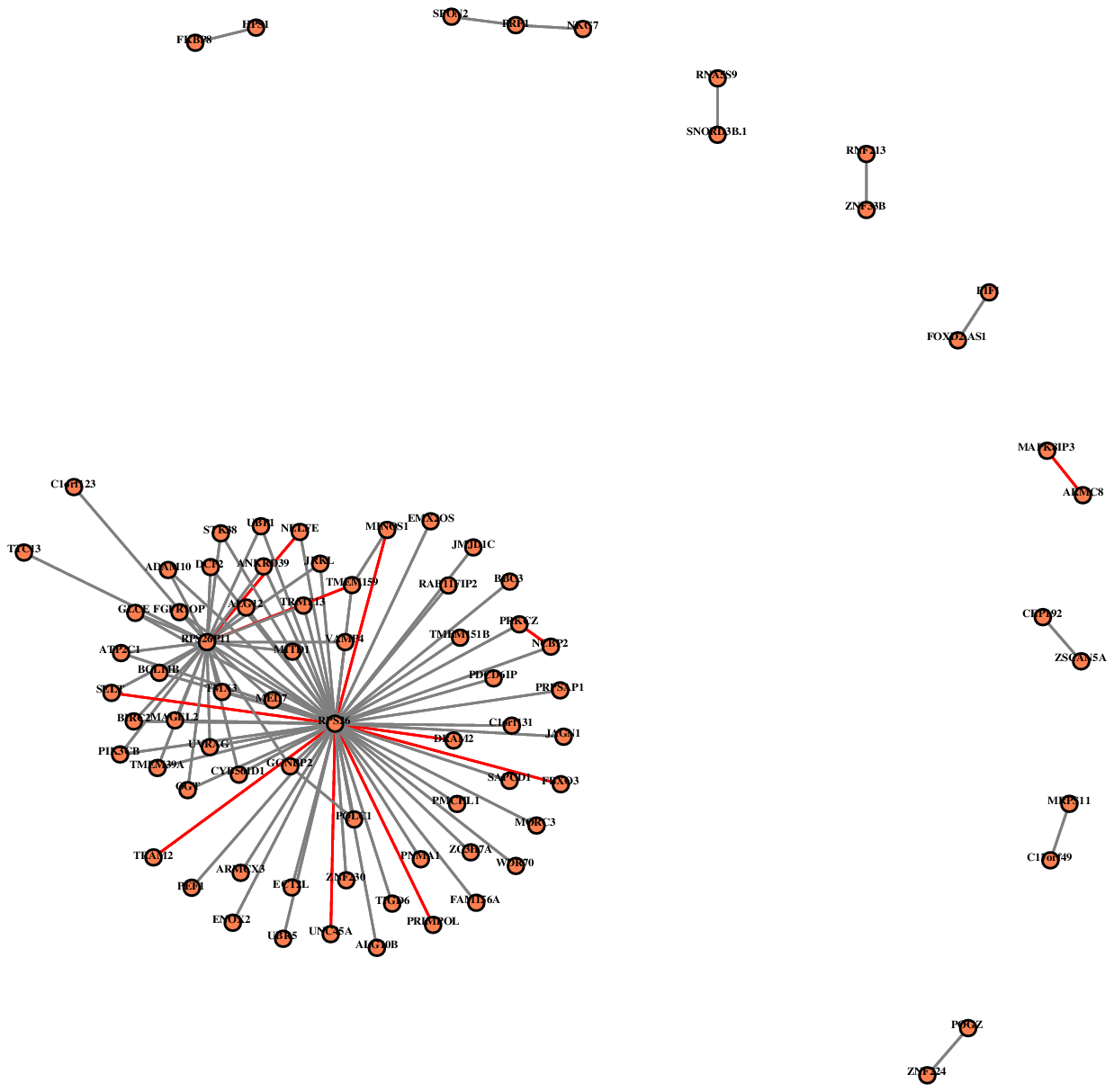}}
\\
\caption{Gene networks produced by FBIA for the case TEEDY samples at 9 time points. The red edge lines denote the new connections appearing in the current network compared with the network in the last one time point; the blue edge lines denote the disappearing connections in the network of next time point; the green edge lines denote that these lines are both new appearing connections and disappearing connections; the gray edge lines denote the unchanged connections in the current network and network in the last one time point.}
\label{fig_case} %% label for entire figure
\end{figure}

Table \ref{tab_case} includes 11 different genes in total. Among the 11 genes, 9 genes have been verified in the literature 
to be T1D associated genes. For example, Schadt et al (2008) reported that RPS26 is a T1D causal gene, 
 and Ma and Hart (2013) reported that the gene 
 O-GlcNAc transferase (OGT) is directly linked to many metabolic diseases including diabetes. Other than identifying some verified T1D associated genes, we have also some new findings such as gene PRF1. Orilieri (2008) claimed that PRF1 variations are susceptibility factors for type 1 diabetes development. In Table \ref{tab_case},  PRF1 appeared as a hub gene twice, which suggests that the connection between PRF1 and  type 1 diabetes might be worth to be further explored. Moreover, we also identifies some connection changes in the networks. As showed in Figure \ref{fig_case}, the new appearing and disappearing connections are marked in different colors at each time point, which identify some evolvement patterns of the network.

\begin{table}[!h]
\tabcolsep=3pt\fontsize{10}{14}
\selectfont
\begin{center}
\caption{Top 5 hub genes identified by FBIA for the case TEDDY samples at 9 time points: 
 'Links' denotes the number of links of the gene to other genes, $k$ is the index of time points, 
 * indicates that there exist other genes which has the same number of links with this genes,
 $\Delta$ indicates that this gene has been verified as a T1D-related gene in the literature.}
\label{tab_case}
\vspace{2mm}
\begin{tabular}{ccccccccc}
\hline\hline
  \multicolumn{9}{c}{Case Group}\\\hline
    &Gene& Links  & &Gene& Links& &Gene& Links\\ \hline
   &  $^\Delta$RPS26   & 104   & & $^\Delta$RPS26   & 68 & & $^\Delta$RPS26 & 64\\ %\cline{2-3} \cline{5-6}\cline{8-9}
  &  $^\Delta$RPS26P11 & 40   &&  $^\Delta$RPS26P11 & 15 &&  $^\Delta$RPS26P11 & 12\\ %\cline{2-3} \cline{5-6}\cline{8-9}
     k=1& $^\Delta$ADAM10   & 4   &k=2 & $^\Delta$ADAM10   & 5 &k=3 & $^\Delta$ADAM10 & 5\\ %\cline{2-3} \cline{5-6}\cline{8-9}
 &  $^\Delta$ POGZ & 3   &&  $^\Delta$ PRF1 & 4 &&  U2SURP & 4\\ %\cline{2-3} \cline{5-6}\cline{8-9}
  &$^\Delta$TMEM159*   & 3   & & $^\Delta$ POGZ & 3 & & $^\Delta$BCL11B* & 3\\\hline\hline
  
     & $^\Delta$RPS26   & 99  & & $^\Delta$RPS26   & 91 & & $^\Delta$RPS26 & 86\\ %\cline{2-3} \cline{5-6}\cline{8-9} 
  &  $^\Delta$RPS26P11 & 14   &&  $^\Delta$RPS26P11 & 18 &&  $^\Delta$RPS26P11 & 42\\ %\cline{2-3} \cline{5-6}\cline{8-9}
     k=4& $^\Delta$ADAM10   & 6   &k=5 & $^\Delta$ADAM10   & 4 &k=6 & $^\Delta$BCL11B & 3\\ %\cline{2-3} \cline{5-6}\cline{8-9}
 &  $^\Delta$BCL11B & 3   &&  $^\Delta$BCL11B & 3 &&  GNPTG & 3\\ %\cline{2-3} \cline{5-6}\cline{8-9}
  & $^\Delta$POGZ*   & 3   & & $^\Delta$POGZ*   & 3 & & $^\Delta$GGNBP2 & 3\\\hline \hline

 & $^\Delta$RPS26   & 78   & & $^\Delta$RPS26   & 70 & & $^\Delta$RPS26 & 61\\ %\cline{2-3} \cline{5-6}\cline{8-9}
  &  $^\Delta$RPS26P11 & 46   &&  $^\Delta$RPS26P11 & 39 &&  $^\Delta$RPS26P11 & 30\\ %\cline{2-3} \cline{5-6}\cline{8-9}
     k=7& $^\Delta$BCL11B & 3  &k=8 & $^\Delta$ PRF1   & 4 &k=9 & $^\Delta$TMEM159   & 3 \\ %\cline{2-3} \cline{5-6}\cline{8-9}
 & $^\Delta$TMEM159   & 3   &&  $^\Delta$BCL11B & 3 && $^\Delta$ GGNBP2 & 3 \\ %\cline{2-3} \cline{5-6}\cline{8-9}
  & $^\Delta$GGNBP2 & 3   & & $^\Delta$GGNBP2   & 3 & & $^\Delta$OGT*  & 2\\\hline\hline
\end{tabular}
\end{center}
\end{table}

For comparison, the GGL method was also applied to this example, 
for which the regularization parameters were chosen according
to the minimum AIC criterion. The total CPU time cost by the method was 20.2 hours. 
FGL was not applied to this example, as it would take extremely long CPU time.  
 Figure \ref{fig_GGL} shows the networks constructed by GGL for the
 case samples at all 9 time points. 
 Table \ref{tab_GGL} shows the top 5 hub genes identified by GGL at each time point for the case samples.
 The lists of hub genes are pretty stable, which consists of 7 different genes only. 
 Among the 7 genes, only 3 genes RPS26, OGT and JMJD1C have been verified in the literature as 
 T1D-associated genes.  
 Moreover, as showed in Figure \ref{fig_GGL}, the hub genes in networks are almost identical at each time point. 
 %This implies that GGL has led to some loss of information about
 % evolvement of the gene network.

\begin{figure}[]
\centering
\subfigure[time 1]{
%\label{fig13} %% label for first subfigure
\includegraphics[width=2in]{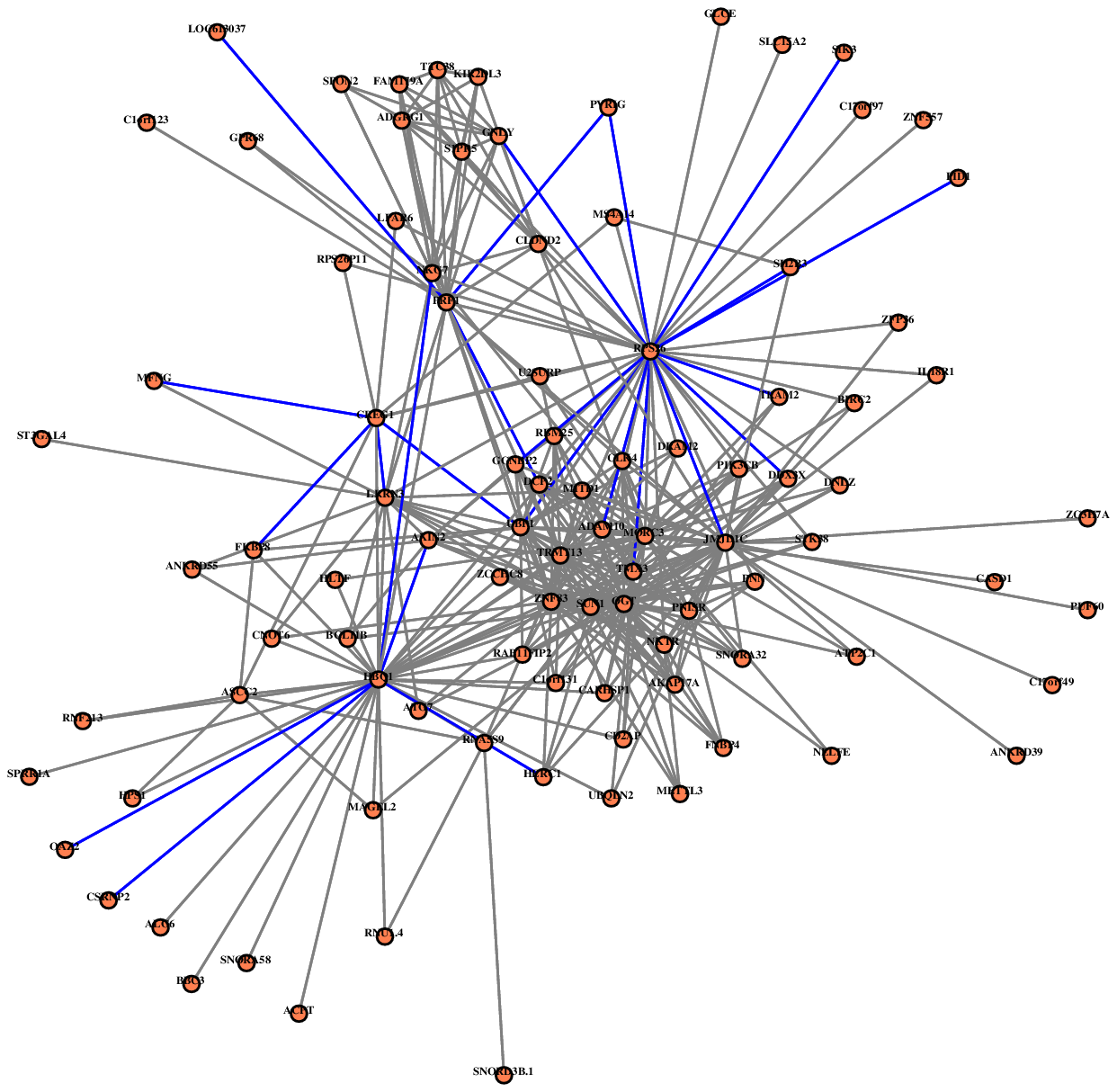}}
\subfigure[time 2]{
\includegraphics[width=2in]{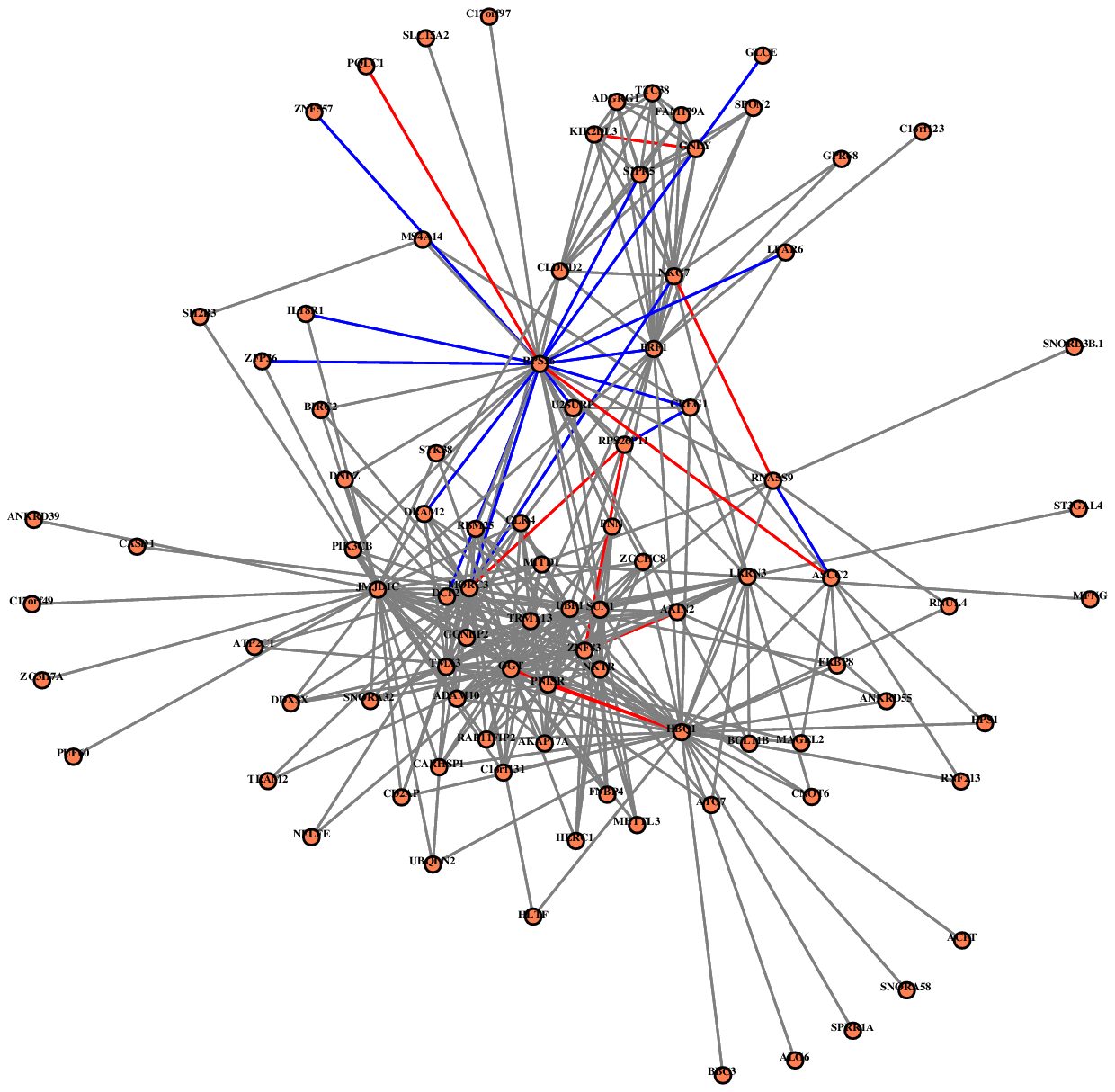}}
\subfigure[time 3]{
\includegraphics[width=2in]{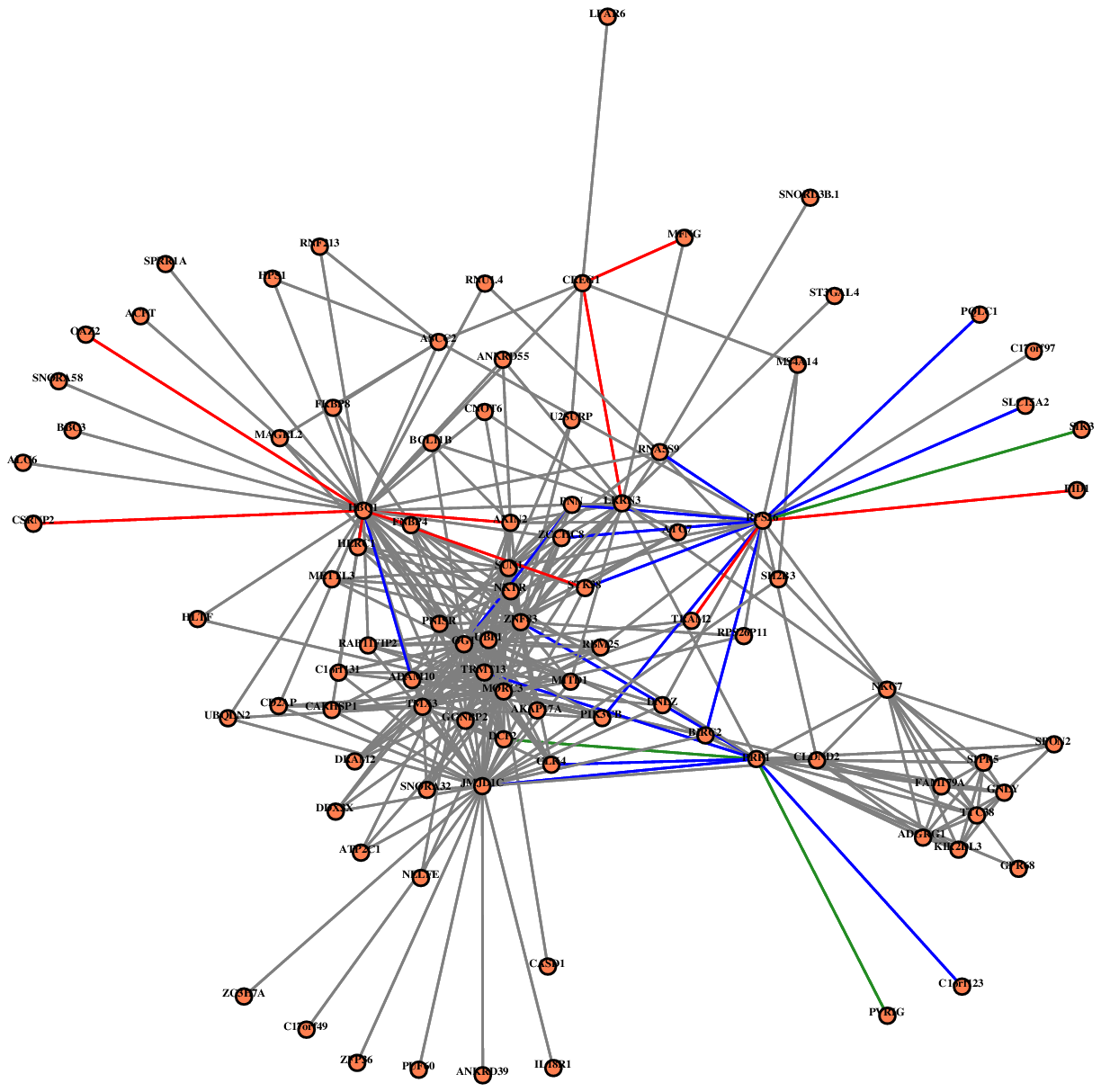}}
\\
\noindent
\subfigure[time 4]{
 %% label for first subfigure
\includegraphics[width=2in]{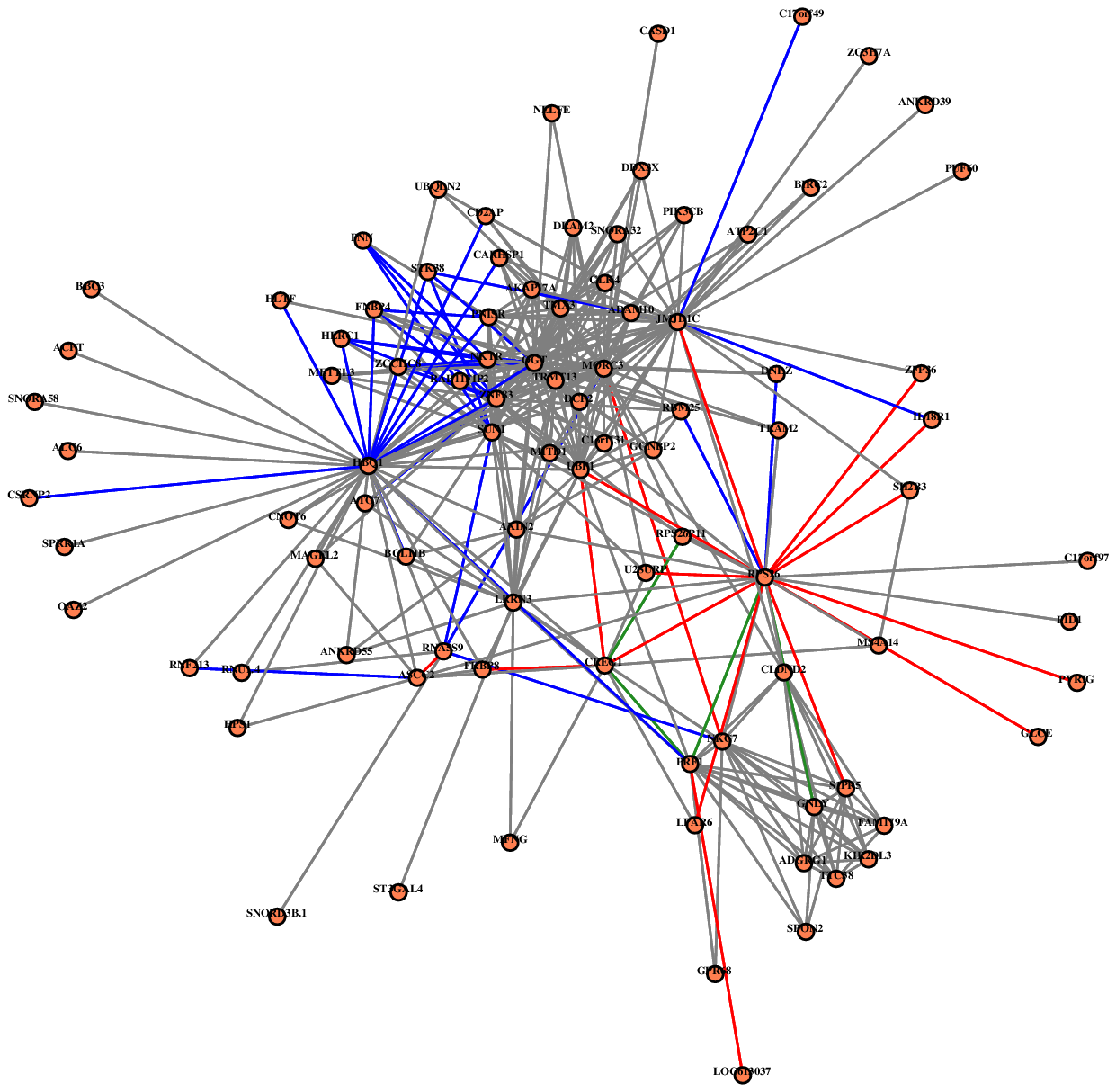}}
\subfigure[time 5]{
%% label for first subfigure
\includegraphics[width=2in]{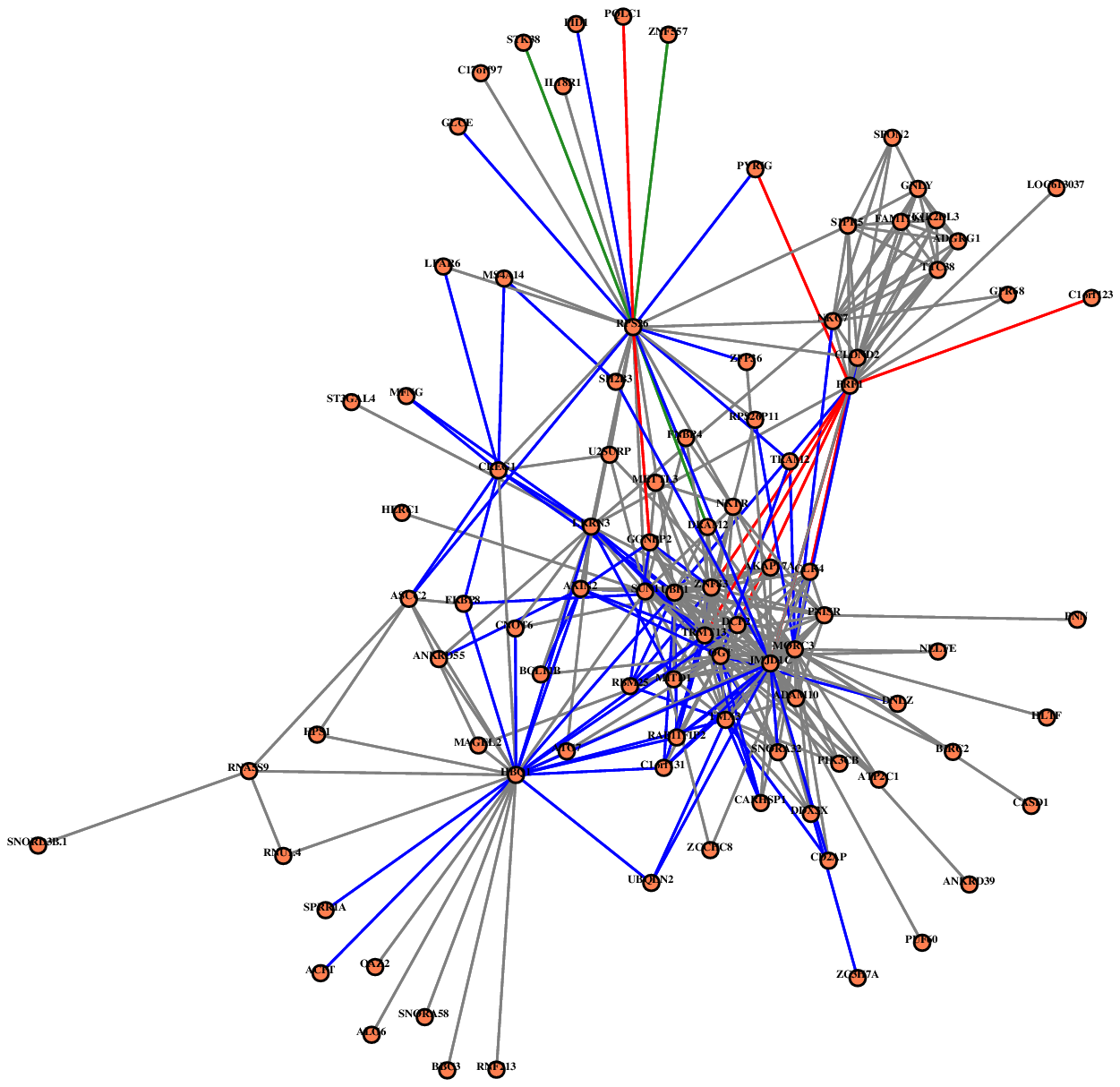}}
\subfigure[time 6]{
\includegraphics[width=2in]{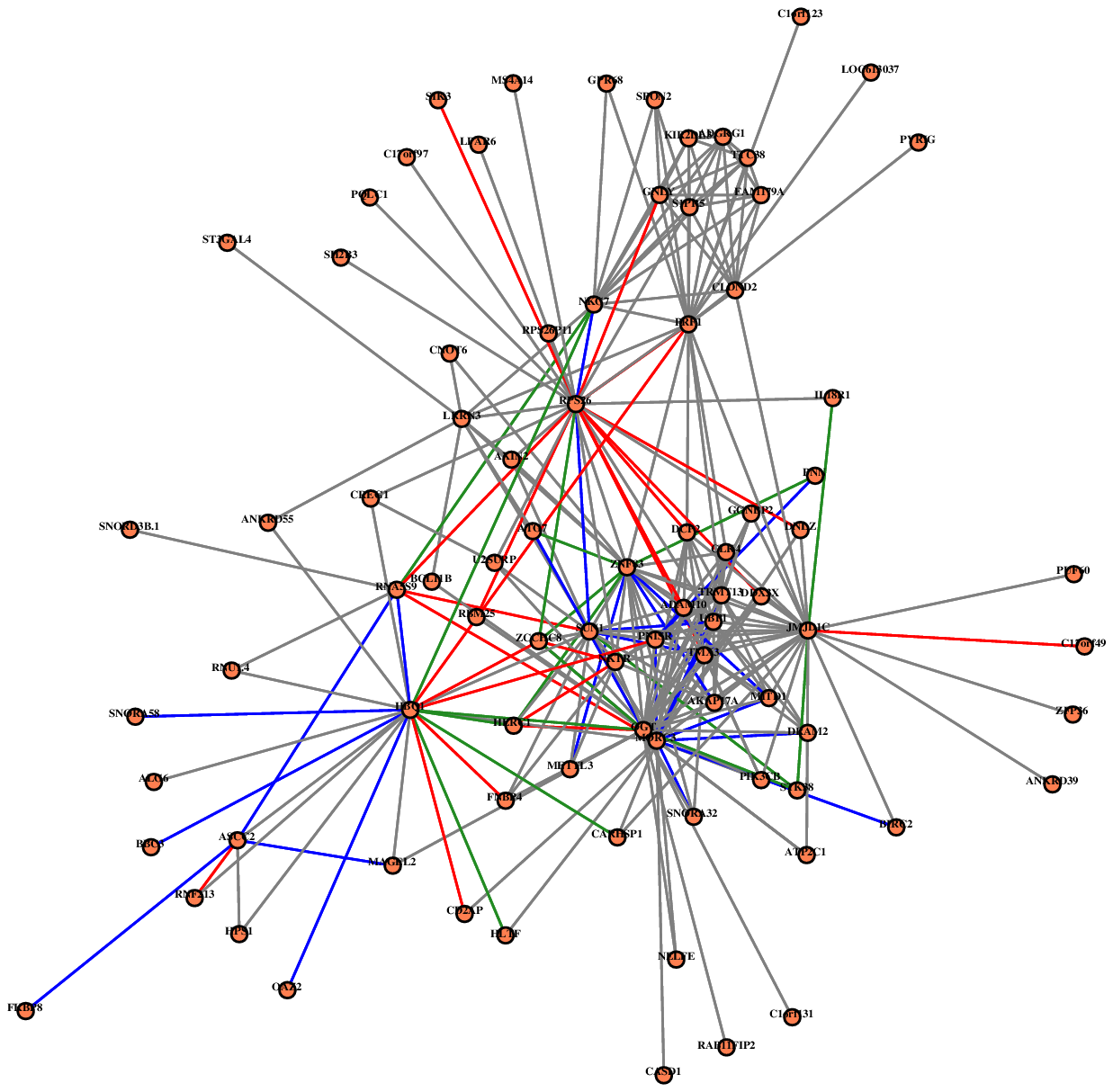}}
\\
\noindent
\subfigure[time 7]{
 %% label for first subfigure
\includegraphics[width=2in]{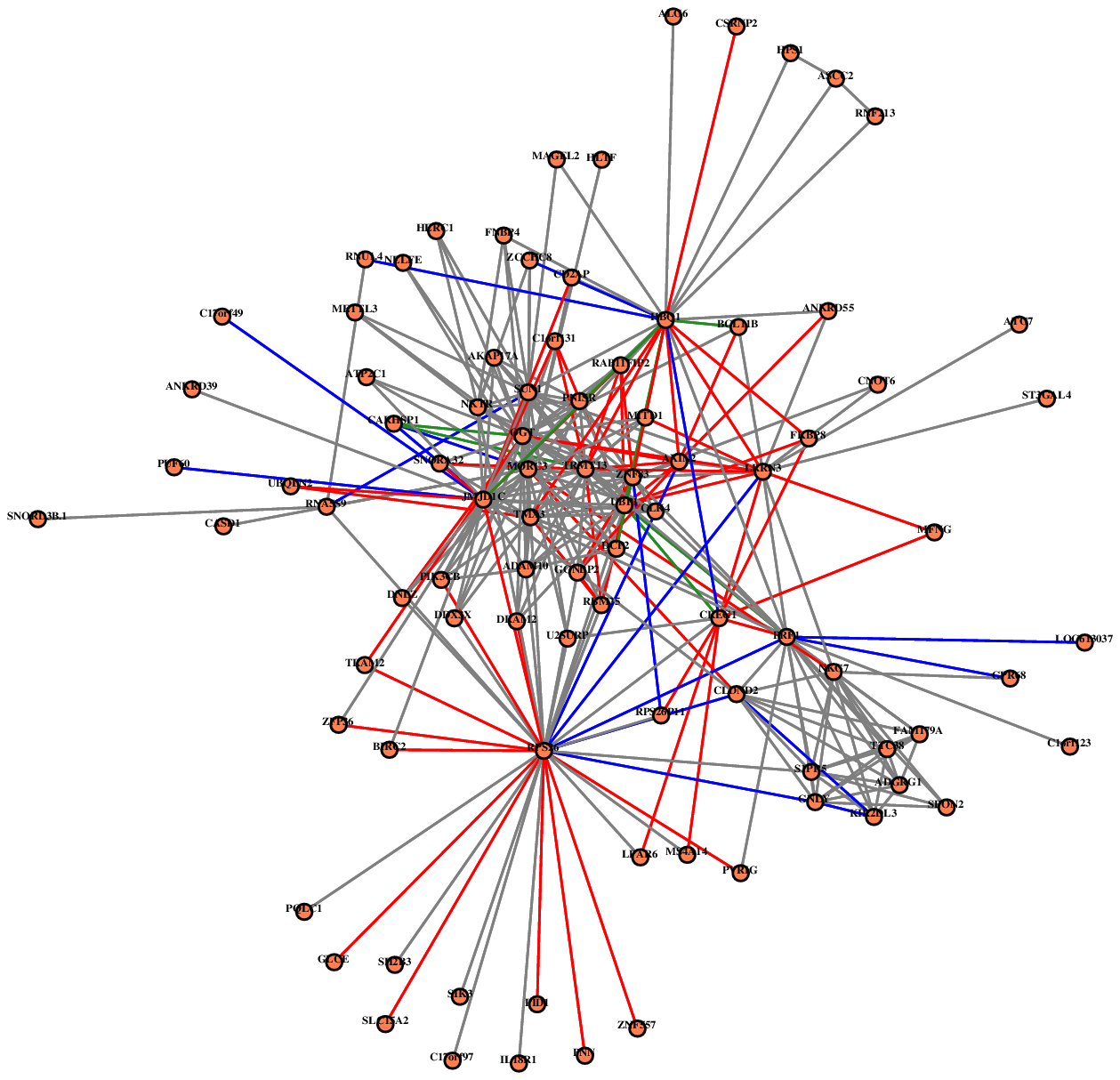}}
\subfigure[time 8]{
%% label for first subfigure
\includegraphics[width=2in]{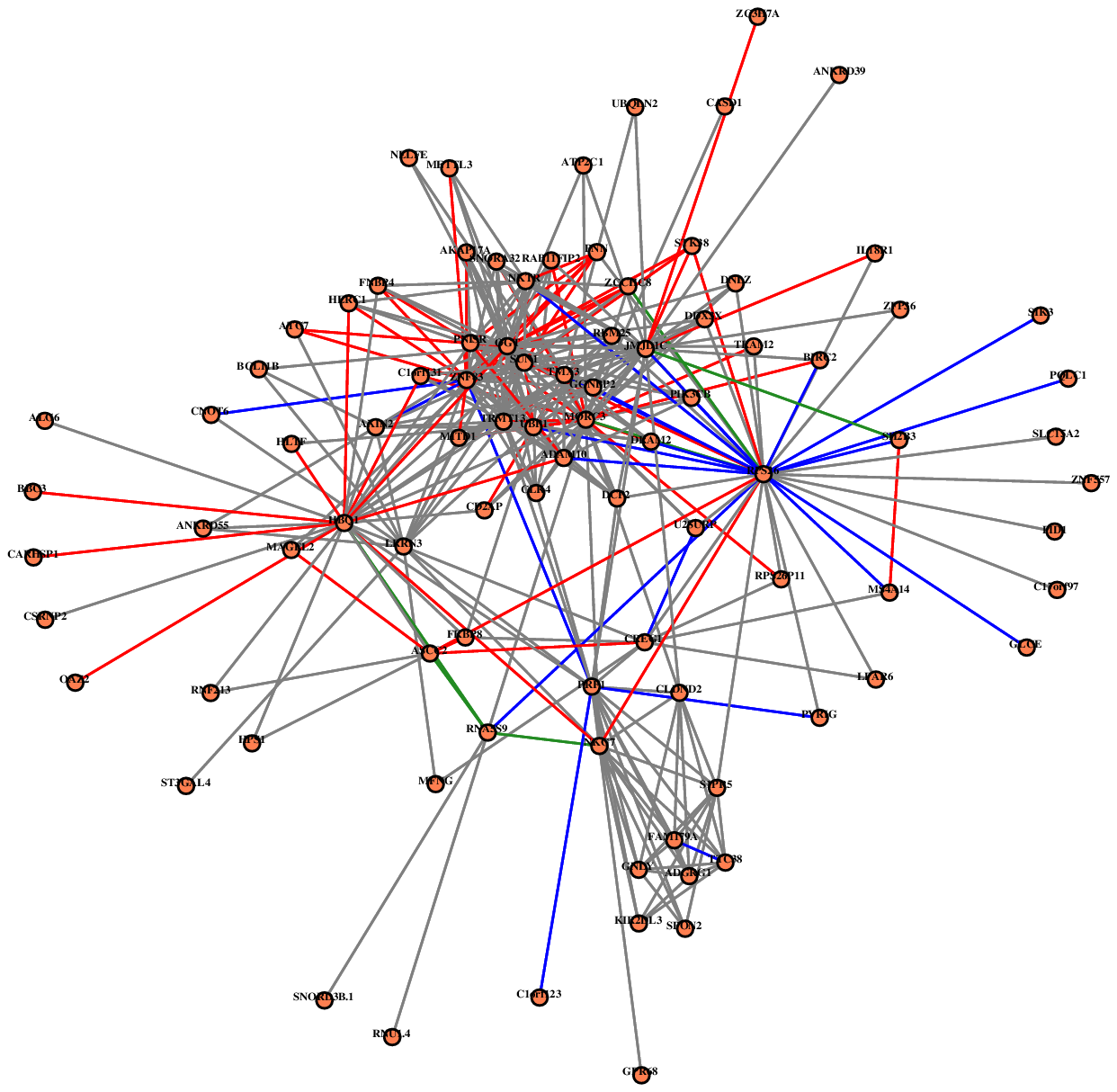}}
\subfigure[time 9]{
\includegraphics[width=2in]{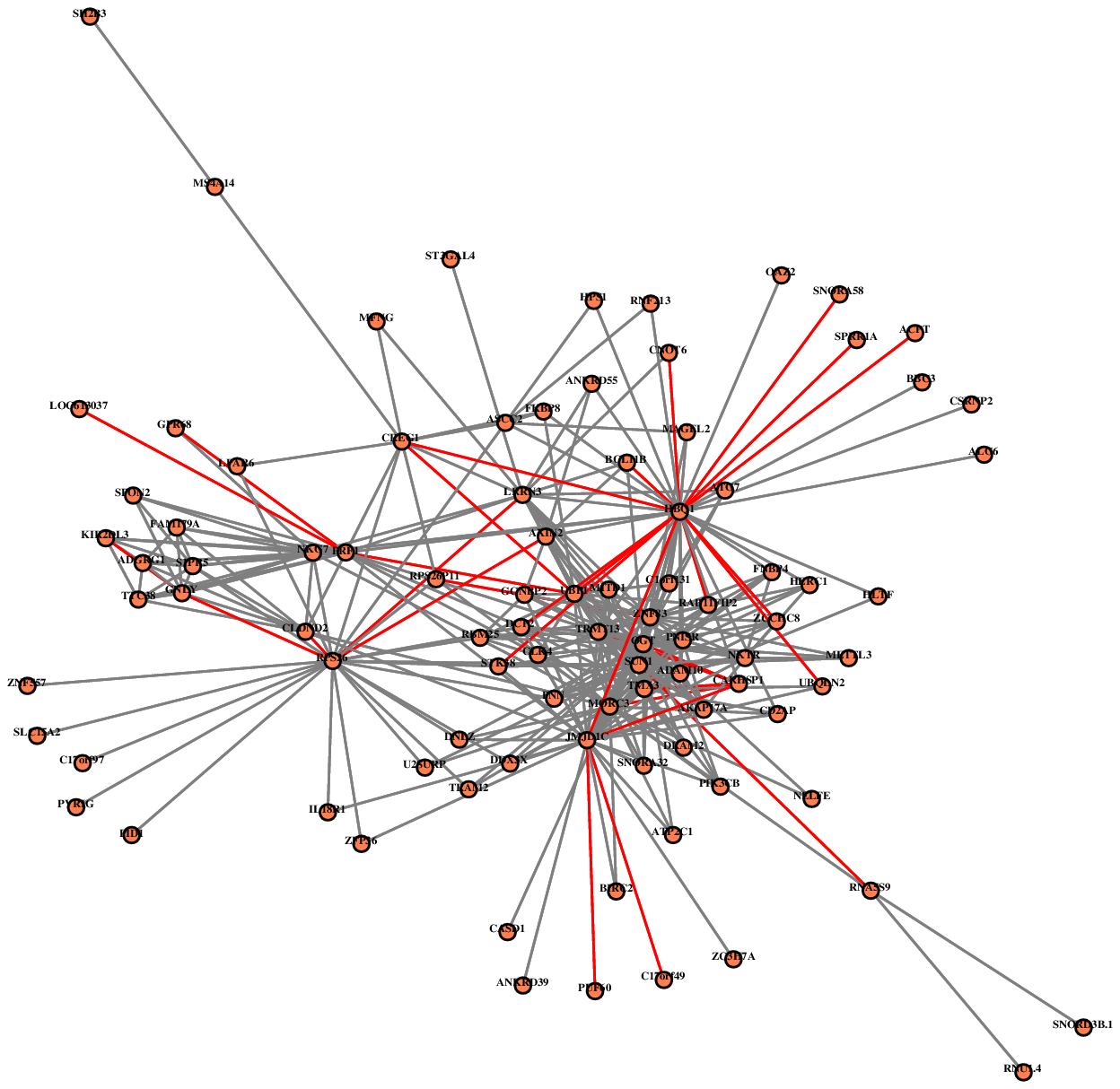}}
\\
\caption{Gene networks constructed by GGL for the case TEDDY samples at 9 time points. The red edge lines denote the new connections appearing in the current network compared with the network in the last one time point; the blue edge lines denote the disappearing connections in the network of next time point; the green edge lines denote that these lines are both new appearing connections and disappearing connections; the gray edge lines denote the unchanged connections in the current network and network in the last one time point.}
\label{fig_GGL} %% label for entire figure
\end{figure}

\begin{table}[!h]
\tabcolsep=3pt\fontsize{10}{14}
\selectfont
\begin{center}
\caption{ Top 5 hub genes identified by GGA for the case TEDDY samples at 9 time points:
 'Links' denotes the number of links of the gene to other genes, $k$ is the index of time points,
 * indicates that there exist other genes which has the same number of links with this genes,
 $\Delta$ indicates that this gene has been verified as a T1D-related gene in the literature. }
\label{tab_GGL}
\vspace{2mm}
\begin{tabular}{ccccccccc}
\hline\hline
  \multicolumn{9}{c}{Case Group}\\\hline
    &Gene& Links  & &Gene& Links& &Gene& Links\\ \hline
   &  $^\Delta$RPS26   & 42   & & $^\Delta$ OGT   & 39 & & HBQ1 & 42\\ % \cline{2-3} \cline{5-6}\cline{8-9}
   
  &  HBQ1 & 39   &&  $^\Delta$JMJD1C   & 37 &&  $^\Delta$OGT & 39\\ % \cline{2-3} \cline{5-6}\cline{8-9}
  
     k=1& $^\Delta$OGT   & 38   &k=2 &HBQ1 & 36 &k=3 & $^\Delta$JMJD1C & 37\\ % \cline{2-3} \cline{5-6}\cline{8-9}
     
 &  $^\Delta$JMJD1C   & 38   &&  MORC3   & 35 &&  MORC3   & 33\\ % \cline{2-3} \cline{5-6}\cline{8-9}
 
  & MORC3   & 34  & & $^\Delta$RPS26   & 32 & & ZNF83 & 29\\\hline\hline

     & HBQ1   & 41   & & $^\Delta$JMJD1C    & 35 & & $^\Delta$OGT & 33\\ % \cline{2-3} \cline{5-6}\cline{8-9}
     
  &  $^\Delta$OGT & 38  &&  $^\Delta$OGT & 34 &&  $^\Delta$RPS26 & 29\\ % \cline{2-3} \cline{5-6}\cline{8-9}
  
     k=4& $^\Delta$JMJD1C     & 37  &k=5 & MORC3   & 33 &k=6 & $^\Delta$JMJD1C   & 29\\ % \cline{2-3} \cline{5-6}\cline{8-9}
     
 &  MORC3 & 34   &&  $^\Delta$RPS26 & 28 &&   MORC3  & 28\\ % \cline{2-3} \cline{5-6}\cline{8-9}
 
  & ZNF83  & 28   & &HBQ1   & 28 & &  ZNF83 & 24\\\hline \hline

 & $^\Delta$RPS26   & 38   & & $^\Delta$RPS26    & 39 & & HBQ1  & 41\\ % \cline{2-3} \cline{5-6}\cline{8-9}
 
  &  $^\Delta$OGT & 34   &&  $^\Delta$OGT & 38 &&  $^\Delta$OGT & 39\\ % \cline{2-3} \cline{5-6}\cline{8-9}
  
     k=7& $^\Delta$JMJD1C   & 34  &k=8 & MOR3 & 34 &k=9 & $^\Delta$JMJD1C  & 36\\ % \cline{2-3} \cline{5-6}\cline{8-9}
     
 & HBQ1 & 27   &&  HBQ1 & 29 && MOR3  & 34\\ % \cline{2-3} \cline{5-6}\cline{8-9}
 
  & MOR3   & 23   & & ZNF83   & 28 & &  ZNF83     & 26\\\hline\hline
\end{tabular}
\end{center}
\end{table}

To further assess the quality of the networks produced by FBIA and GGL, 
we fit them by the power law curve (see, e.g., Kolaczyk 2009, pp.80-85). 
A nonnegative random variable $X$ is said to have a power law distribution if
\begin{equation}
P(X=x)\propto x^{-\upsilon},
\end{equation}
for some positive constant $\upsilon$. 
The power law states that the majority of nodes are of very low degree, although some are of
much higher degree. A network whose degree distribution follows the power law is called a scale-free network 
and it has been verified that many biological networks, such as gene expression networks, 
protein-protein interaction networks, and metabolic networks (Barab\'{a}si and Albert 1999), follow
the power law. 
As shown in Figure \ref{power}, where the connections of all 9 networks are combined to 
 generate a single log-log plot for each method, the networks produced by 
 FBIA seem to be more fit to the power law than those generated by GGL.  
 GGL tends to identify too many high connectivity genes. 

\begin{figure}
\centering
\subfigure[FBIA]{
\label{power} %% label for first subfigure
\includegraphics[width=2in,angle=270]{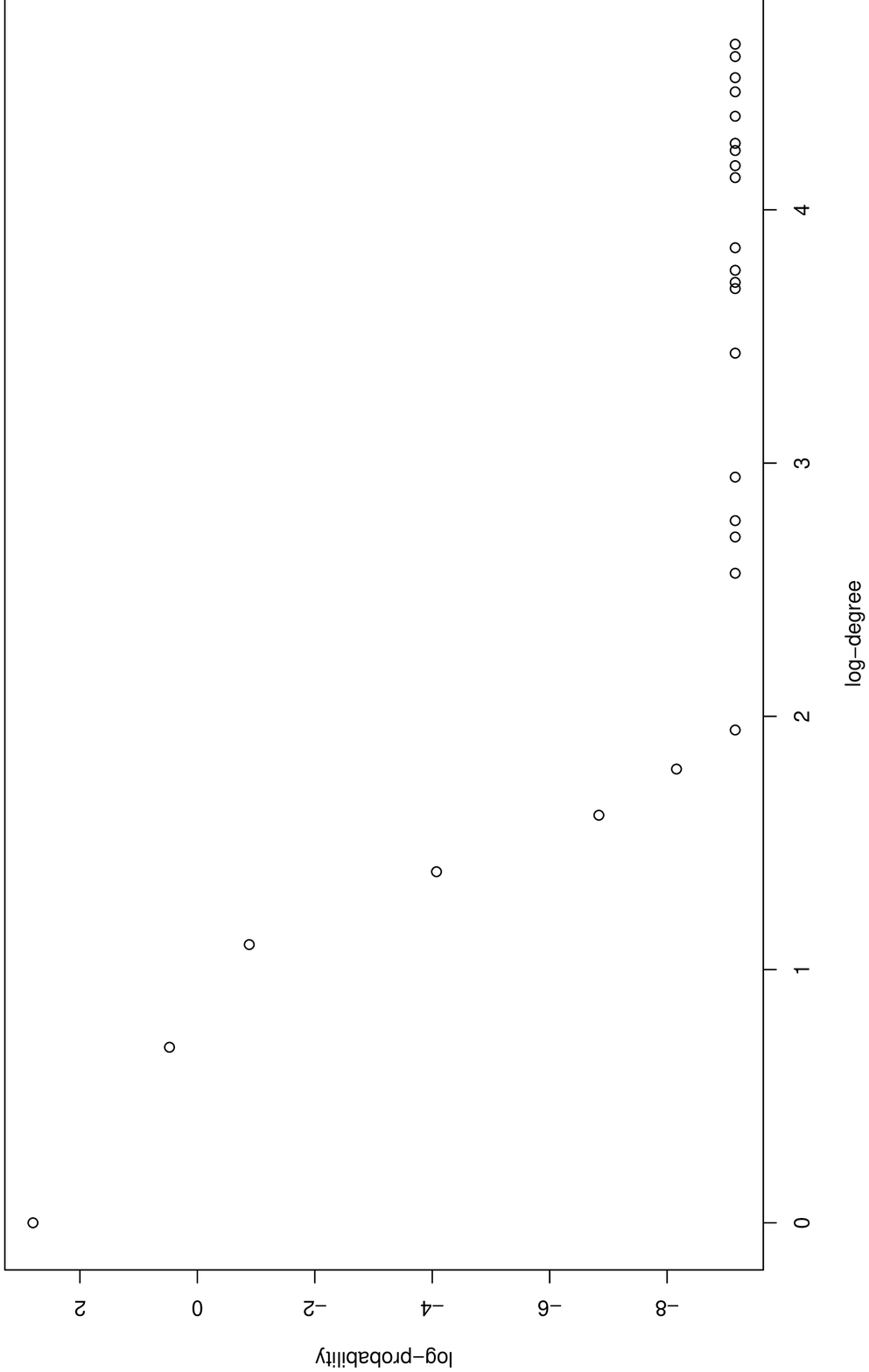}}
\hspace{0.5in}
\subfigure[GGL]{
\includegraphics[width=2in,angle=270]{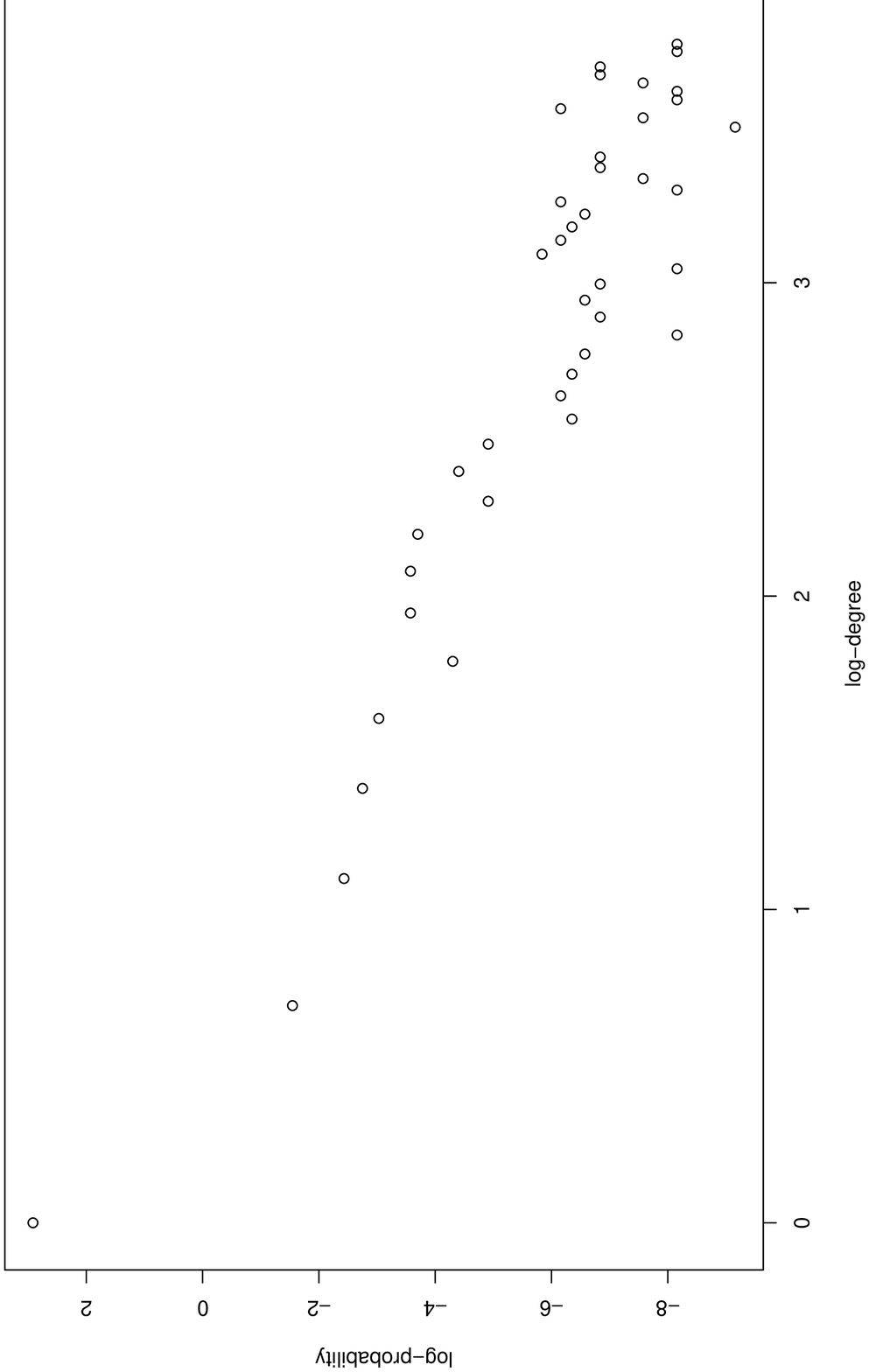}}
\\
\caption{Power law plots generated by FBIA (left) and GGL (right) for case TEDDY samples. } 
\end{figure}

 In summary, FBIA tends to outperform GGL for this real data example. 
 First, FBIA can identify more hub genes which are associated with T1D.  
 Second, the networks produced by FBIA are more fit to the power law than those
 generated by GGL. 
 % Third, the networks produced by FBIA reveal some changes on network evolvement between 
 % adjacent time points, while the networks produced by GGL are overly stable.  

From the perspective of data analysis, one might also be interested in estimating the 
gene networks constructed from the controls, as well as the differences 
between the networks from the cases and controls. 
For comparing the networks from the cases and controls, we can adopt the method
described in Section 6 of Liang et al. (2015). However, since the method by Liang et al. (2015) requires that
the two networks under comparison are independent, the sample information from the cases and controls
should not be integrated in this case. We left this work to the future.

\section{Discussion}

In this paper, we have proposed FBIA as a promising method for jointly estimating multiple GGMs under 
 distinct conditions and applied FBIA to TEDDY data.  
 The FBIA method consists of a few important steps, which is to first summarize the graph structure
 information contained in the data using the $\psi$-learning algorithm (Liang, Song and Qiu, 2015),
 then integrate information via a meta-analysis procedure under the Bayesian framework, and
 finally determine the structures of multiple graphs via a multiple hypothesis test. Compared to the existing
 methods, FBIA has a few significant advantages.
 First, FBIA includes a meta-analysis procedure to explicitly integrate 
 information across distinct conditions. However, the existing methods 
 often integrate information through prior distributions or penalty function, 
 which is usually less efficient. 
 Second, FBIA can be run very fast, especially when $K$ is small. 
 The overall computational complexity of FBIA is $O(p^2 2^K)$, where the factor $2^K$ is  
 the total number of possible configurations of an edge across all $K$ conditions. 
 When $K$ is large, we need to resort to MCMC for an efficient estimation of 
 the posterior probabilities $\pi(\be_l|\bpsi_l)$'s for $l=1,2,\ldots, p(p-1)/2$.  
 Since $\pi(\be_l|\bpsi_l)$'s can be estimated for each $l \in \{1,2,\ldots,p(p-1)/2\}$ independently, 
 this step can be done in parallel. In addition, we note that the correlation coefficients and 
 $\psi$-scores can also be calculated in parallel. 
 Hence, the whole method can be executed very fast on a parallel architecture.   
 Moreover, instead of working on the original data, the Bayesian integration step chooses to work 
 on the edge-wise $\psi$-scores, which avoids to invert high-dimensional covariance matrices 
 and thus can be very fast. Note that, in calculation of $\psi$-scores,
 the $\psi$-learning algorithm (Liang et al., 2015) also successfully avoids to 
 invert high-dimensional covariance matrices through correlation screening.   
 Third, FBIA can provide an overall uncertainty measure for the edges detected in the 
 multiple graphical models. This has been beyond the ability of many of the existing methods, 
 especially when $p$ is large.  

 The FBIA method has a very flexible framework, which can be easily extended to 
 joint estimation of multiple mixed graphical models. For example, we consider the
 scenario that the data consists of only Gaussian and multinomial random variables, 
 for which the joint distribution is well defined (Lee and Hastie, 2015).  
 For such mixed data, the $\psi$-learning algorithm can be performed under the 
 framework of generalized linear models; that is, we can replace the correlation 
 coefficients and $\psi$-partial correlation coefficients used in the algorithm 
 by the corresponding $p$-values obtained in the marginal variable screening tests (Fan and Song, 2010) 
 and conditional independence tests. 
 Then we can replace the $\psi$-scores by the $Z$-scores corresponding to the $p$-values 
 of the conditional independence tests.  
 For other types of continuous random variables, we can apply the 
 nonparanormal transformation (Liu et al., 2009) 
 to Gaussianize them prior to the application of the FBIA method.

\section*{Acknowledgments} 

This study was supported by grant 2015PG-T1D050 provided by the Leona M. and Harry B. Helmsley Charitable Trust. 
Liang's research was support in part by the grants USF-ITN-15-11-MH, DMS-1612924, DMS/NIH R01-GM117597, and NIH R01-GM126089.  
The TEDDY Study is funded by U01 DK63829, U01 DK63861, U01 DK63821, U01 DK63865, U01 DK63863, U01 DK63836, U01 DK63790, UC4 DK63829, UC4 DK63861, UC4 DK63821, UC4 DK63865, UC4 DK63863, UC4 DK63836, UC4 DK95300, UC4 DK100238, UC4 DK106955, UC4 DK112243, UC4 DK117483, and Contract No. HHSN267200700014C from the National Institute of Diabetes and Digestive and Kidney Diseases (NIDDK), National Institute of Allergy and Infectious Diseases (NIAID), National Institute of Child Health and Human Development (NICHD), National Institute of Environmental Health Sciences (NIEHS), Centers for Disease Control and Prevention (CDC), and JDRF. This work supported in part by the NIH/NCATS Clinical and Translational Science Awards to the University of Florida (UL1 TR000064) and the University of Colorado (UL1 TR001082).  Members of the TEDDY Study Group are listed in the Supplementary File.
The authors thank Dr. George Tseng for his comments/suggestions on the FBIA method.  

\section*{Contributions of Authors} 

 Liang initiated the project, proposed the FBIA method, and participated the writing of the manuscript; 
 Jia conducted the simulation and data analysis, and participated the development of the FBIA method as well as  
 the writing of the manuscript. 
 TEDDY Study Group provided the real dataset as well as the grant support in part to the research. 
  
\appendix
\section{Appendix: Three Component Mixture Distribution} \label{App:AppendixA}

\subsection{Bayesian Clustering and Meta-Analysis.}

Considering the scores $(\psi_l^{(k)})$ follow a three-component Gaussian mixture distribution:

\begin{equation}\label{pluginA}
     p(\psi_{l}^{(k)}|e_{l}^{(k)})=\left\{\begin{array}{ll}
                  N(\mu_{l0},\sigma_{l0}^2),&\textrm{if $e_{l}^{(k)}=-1$},\\
                  N(\mu_{l1},\sigma_{l1}^2),&\textrm{if $e_{l}^{(k)}=0$},\\
                  N(\mu_{l2},\sigma_{l2}^2),&\textrm{if $e_{l}^{(k)}=1$},
                \end{array}\right.
\end{equation}
for $l=1,2,\ldots, N$ and $k=1,2,\ldots, K$. Each pair $(l,k)$ corresponds to one candidate edge in 
 graph $k$ and $e_l^{(k)}$ is the indicator for the status of 
 edge $l$ in graph $k$; $e_l^{(k)}=-1$ if the edge exists with a large negative $\psi$-score, $e_l^{(k)}=0$ 
 if the edge does not exist, and $e_l^{(k)}=1$ if the edge exists with a large positive $\psi$-score. 
 It is reasonable to assume that the components 
 $N(\mu_{l0},\sigma_{l0}^2)$ , $N(\mu_{l1},\sigma_{l1}^2)$ and $N(\mu_{l2},\sigma_{l2}^2)$ 
 are all independent of $k$.  
Let $\bpsi_l=(\psi_{l}^{(1)},...,\psi_{l}^{(K)})$ and 
 $\be_{l}=(e_{l}^{(1)},...,e_{l}^{(K)})$. Conditioned on $\be_l$, the joint likelihood function of $\bpsi_l$ is given by  
\begin{equation} \label{jointeq3}
 p(\bpsi_{l}|\be_{l},\mu_{l0},\sigma_{l0}^2,\mu_{l1},\sigma_{l1}^2) 
  = \prod_{\{k:e_{l}^{(k)}=-1\}}\phi(\psi_{l}^{(k)}|\mu_{l0},\sigma_{l0}^2)\prod_{\{k:e_{l}^{(k)}=0\}}
  \phi(\psi_{l}^{(k)}|\mu_{l1},\sigma_{l1}^2)\prod_{\{k:e_{l}^{(k)}=1\}}
  \phi(\psi_{l}^{(k)}|\mu_{l2},\sigma_{l2}^2), 
\end{equation}
where $\phi(.|\mu,\sigma^2)$ is the density function of the Gaussian distribution 
 with mean $\mu$ and variance $\sigma^2$. Then we still consider two types of priors 
 for $\be_l$'s, namely, temporal prior and spatial prior.

 \subsubsection{Temporal Prior}

 To enhance the similarity for the networks between adjacent conditions, we let $\be_l$ be subject to 
 the following prior distribution 
\begin{equation}\label{priorA2}
  p(\be_l|\bq) = \frac{(K-1)!}{N_{l0}! N_{l1}! N_{l2}!}\hspace{0.2cm} q_0^{N_{l0}}\cdot q_1^{N_{l1}}\cdot q_2^{N_{l2}},
\end{equation}
 where $\sum_{i=0}^{2}q_i=1$, and $N_{li}=\#\{k: |e_l^{(k+1)}-e_l^{(k)}|=i, k=1,2,\ldots, K-1\}$ 
 denotes the number of edges with the changing mode $i$,  and 
 $\bq=(q_0,q_1,q_2)$ are the prior probabilities for different changing modes. 
 In this paper, we assume that $\bq$ follows a Dirichlet distribution $Dir(\alpha_0,\alpha_1,\alpha_2)$, 
 where $\alpha_0$, $\alpha_1$ and $\alpha_2$ are pre-specified positive parameters.  
 Further, we let $\mu_{l0}$, $\mu_{l1}$ and $\mu_{l2}$ be subject to an improper uniform distribution, i.e., 
 $\pi(\mu_{l0}) \propto 1$, $\pi(\mu_{l1}) \propto 1$and $\pi(\mu_{l2}) \propto 1$, and let $\sigma_{l0}^2$, $\sigma_{l1}^2$ and $\sigma_{l2}^2$  
 be subject to an inverted-gamma distribution, i.e., 
 $\sigma_{l0}^2, \sigma_{l1}^2, \sigma_{l2}^2  \sim IG(a_2,b_2)$, where $a_2$ and $b_2$ are pre-specified constants. 
 Then the joint posterior distribution of $(\be_l,\mu_{l0},\sigma_{l0}^2, \mu_{l1}, \sigma_{l1}^2, \mu_{l2}, \sigma_{l2}^2, \bq)$ 
 is given by 
 \[
 \pi(\be_l,\mu_{l0},\sigma_{l0}^2, \mu_{l1}, \sigma_{l1}^2, \mu_{l2}, \sigma_{l2}^2, \bq|\bpsi_l)\propto 
 p(\bpsi_l|\be_l,\mu_{l0},\sigma_{l0}^2, \mu_{l1}, \sigma_{l1}^2, \mu_{l2}, \sigma_{l2}^2) \pi(\mu_{l0},\sigma_{l0}^2, \mu_{l1}, \sigma_{l1}^2, \mu_{l2}, \sigma_{l2}^2) 
 \pi(\be_l|\bq) \pi(\bq),
 \]
 where $\pi(\cdot)$'s denote the respective prior distributions.  
 After integrating out the parameters $\mu_{l0}$, $\sigma_{l0}^2$, $\mu_{l1}$, $\sigma_{l2}^2$ and $q$, we have 
 the marginal posterior distribution of $\be_l$ given by 
\begin{equation} \label{posteqA1}
\begin{split}
\pi(\be_l|\bpsi_l) & \propto \frac{\prod_{i=0}^2\Gamma(\alpha_i+N_{li})}{\Gamma\left(\sum_{i=0}^2(\alpha_i+N_{li})\right)} \\
&\times \frac{1}{\sqrt{n_0}}(\frac{1}{\sqrt{2\pi}})^{n_0}\Gamma(\frac{n_0-1}{2}+a_2)
\left[\frac{1}{2}\sum_{\{k:e_{l}^{(k)}=-1\}}(\psi_l^{(k)})^2-\frac{(\sum_{\{k:e_{l}^{(k)}=0\}}\psi_l^{(k)})^2}{2n_0}+b_2\right]^{-(\frac{n_0-1}{2}+a_2)}\\
&\times \frac{1}{\sqrt{n_1}}(\frac{1}{\sqrt{2\pi}})^{n_1}\Gamma(\frac{n_1-1}{2}+a_2)
\left[\frac{1}{2}\sum_{\{k:e_{l}^{(k)}=0\}}(\psi_l^{(k)})^2-\frac{(\sum_{\{k:e_{l}^{(k)}=1\}}\psi_l^{(k)})^2}{2n_1}+b_2\right]^{-(\frac{n_1-1}{2}+a_2)}\\
&\times \frac{1}{\sqrt{n_2}}(\frac{1}{\sqrt{2\pi}})^{n_2}\Gamma(\frac{n_2-1}{2}+a_2)
\left[\frac{1}{2}\sum_{\{k:e_{l}^{(k)}=1\}}(\psi_l^{(k)})^2-\frac{(\sum_{\{k:e_{l}^{(k)}=1\}}\psi_l^{(k)})^2}{2n_2}+b_2\right]^{-(\frac{n_2-1}{2}+a_2)}\\
& = (G) \times (H) \times (I) \times (J), \\
\end{split}
\end{equation}
when $n_0>0$, $n_1>0$ and $n_2>0$ hold, where $n_0=\#\{k:e_{l}^{(k)}=-1\}$, 
 $n_1=\#\{k:e_{l}^{(k)}=0\}$, and $n_2=\#\{k:e_{l}^{(k)}=1\}$. 
When any $n_i=0$ where $i=0,1,2$, we exclude the term (H), (I), (J) in the equation (\ref{posteqA1}),
 respectively. Given $K$ distinct conditions, the total number of possible configurations 
 of $\be_l$ is $3^K$. For each possible configuration of $\be_l$, we can 
 calculate its posterior probability and integrated $\psi$-scores. 
 We denote the corresponding posterior probabilities by $\pi_{ld}$,   
 and denote the corresponding integrated $\psi$-scores by 
 $\bar{\bpsi}_{ld}=(\bar{\psi}_{ld}^{(1)},...,\bar{\psi}_{ld}^{(K)})$ for $d=1,2,\ldots, 3^K$. 
 Here, according to Stouffer's meta-analysis method (Stouffer et al., 1949; Mosteller and Bush, 1954), we define 
 \begin{equation} \label{inteqA}
\bar{\psi}_{ld}^{(k)}=\begin{cases}
\sum_{\{i:e_{ld}^{(i)}=-1\}} w_i \psi_l^{(i)}/\sqrt{\sum_{\{i: e_{ld}^{(i)}=0\}} w_i^2},  & \mbox{if $e_{ld}^{(k)}=-1$}, \\
\sum_{\{i:e_{ld}^{(i)}=0\}}w_i \psi_l^{(i)}/\sqrt{\sum_{\{i: e_{ld}^{(i)}=1\}} w_i^2},  & \mbox{if $e_{ld}^{(k)}=0$}, \\
\sum_{\{i:e_{ld}^{(i)}=1\}}w_i \psi_l^{(i)}/\sqrt{\sum_{\{i: e_{ld}^{(i)}=1\}} w_i^2},  & \mbox{if $e_{ld}^{(k)}=1$}, \\
\end{cases}
\end{equation} 
for $k=1,\ldots, K$, 
where the weight $w_i$ might account for the size or quality of the samples collected under each condition. 
In this paper, we set $w_i=1$ for all $i=1,\ldots, K$. 
Then the Bayesian integrated $\psi$-scores are given by  
\begin{equation} \label{aveeqA}
\hat{\psi}_l^{(k)}=\sum_{d=1}^{3^K} \pi_{ld} \bar{\psi}_{ld}^{(k)}, \quad l=1,2,\ldots,N; \ k=1,2,\ldots,K, 
\end{equation}
which has integrated information across all conditions. 
When $K$ is large, the posterior probabilities 
 $\pi_{ld}$'s can be estimated with a short MCMC run. Since the MCMC can be run in parallel for different $l$'s, 
 the computation is not a big burden when $K$ is large.   
 
\subsubsection{Spatial Prior}

 To enhance our prior knowledge that there exits a common structure 
 for all the networks from which they evolve independently, we let $\be_l$'s 
 be subject to the following prior distribution  
\begin{equation}\label{priorA3}
  p(\be_l|\bq) = \frac{(K-1)!}{N^*_{l0}!\cdot N^*_{l1}!\cdot N^*_{l2}!}\hspace{0.2cm} q_0^{N^*_{l0}}\cdot q_1^{N^*_{l1}}\cdot q_2^{N^*_{l2}},
\end{equation}
where $\sum_{i=0}^{2}q_i=1$ and $N^*_{li}=\#\{k,|e_l^{(k)}-e_l^{mod}|=i\}$, indicates the number of different edge changes at 
 condition $k$ from $e_l^{mod}$, where $k=1,2,\ldots,K$ and $e_l^{mod}$ is the mode of $\be_l$ and  
 represents the common status of the edge $l$ across all networks. 
 With this prior distribution, the posterior distribution $\pi(\be_l|\bpsi_l)$ can also 
 be expressed in the form of (\ref{posteqA1}) but with 
$N_{li}$ changes to $N^*_{li}$, where $i=0,1,2$.

 \section{Appendix: Consistency of the FBIA method.} 

 Without loss of generality, we assume that the sample size is the same under each condition, i.e., 
  $n_1=n_2=\cdots=n_k=n$. 
 Let $\{X_1^{(k)}, \ldots, X_n^{(k)}\}$ denote the samples collected under condition $k$ for 
 $k=1,\ldots, K$, where each $X_i \in \mR^p$ has a probability distribution $P^{(k)}$.
 To indicate that the dimension $p$ can grow as a function of the sample size $n$, we will
 rewrite $p$ as $p_n$, rewrite $K$ as $K_n$, $P^{(k)}$ as $P_n^{(k)}$, and the true conditional
  independence graph $\bG^{(k)}$ as $\bG_n^{(k)}$.
 Let $\mG_n^{(k)}$ denote the true correlation graph
 under condition $k$, which has the same set of nodes as $\bG_n^{(k)}$.
 Let $\gamma_{nk}$ denote a threshold value of the empirical correlation coefficient,
 let $\hat{\mE}_{\gamma_{nk}}^{(k)}$ denote the edge set of the
 network obtained through correlation thresholding at $\gamma_{nk}$, and
 let $\hat{\mE}_{\gamma_{nk},i}^{(k)}$ denote the neighborhood of node $i$ in $\hat{\mE}_{\gamma_{nk}}^{(k)}$.
 That is, we define
 \begin{equation} \label{noteq1}
  \hat{\mE}_{\gamma_{nk}}^{(k)}=\{(i,j): |\hat{r}_{ij}^{(k)}|> \gamma_{nk}\}, \quad \mbox{and} \quad 
  \hat{\mE}_{\gamma_{nk},i}^{(k)}=\{j: j \ne i, |\hat{r}_{ij}^{(k)}|> \gamma_{nk}\}.
 \end{equation}
 For convenience, we call the network with the edge set $\hat{\mE}_{\gamma_{nk}}^{(k)}$ the
 thresholding correlation network under condition $k$. Similar to (\ref{noteq1}), we define
 \begin{equation} \label{noteq2}
  \tbE_n^{(k)}=\{ (i,j): \rho_{ij|\bV\setminus \{i,j\}^{(k)} } \ne 0, \ i, j=1,\ldots,p_n\},  \ \ 
  \tmE_n^{(k)}=\{ (i,j):  r_{ij}^{(k)} \ne 0, \ i, j=1,\ldots,p_n\},
 \end{equation}
 as the edge sets of $\bG_n^{(k)}$ and $\mG_n^{(k)}$, respectively.

 To establish the consistency of the FBIA 
 method, we assume the following conditions. 

 \begin{itemize}
 \item[$(A_1)$] The distribution $P_n^{(k)}$ satisfies the conditions:
                \begin{itemize}
                \item[(i)] $P_n^{(k)}$ is multivariate Gaussian;
                \item[(ii)] $P_n^{(k)}$ satisfies the Markov property
                and faithfulness condition with respect to the
                undirected graph $\bG_n^{(k)}$ for each $k=1,2,\ldots, K_n$ and $n \in \mN$.
               \end{itemize}

 \item[$(A_2)$] The dimension $p_n=O(\exp(n^{\delta}))$ for some constant $ 0\leq \delta < 1$. 
                Note that $p_n$ is the same under all conditions.

 \item[$(A_3)$] The correlation coefficients satisfy
                \begin{equation} \label{correq1}
                  \min\{ |r_{ij}^{(k)}|; r_{ij}^{(k)} \ne 0, \ i,j=1,2,\ldots,p_n, \ i \ne j, k=1,2 \ldots, K_n \} 
                   \geq c_0 n^{-\kappa},  
                 \end{equation}
                for some constants $c_0>0$ and $0<\kappa<(1-\delta)/2$, and
                \begin{equation} \label{correq2}
                 \max\{ |r_{ij}^{(k)}|; i, j=1,\ldots, p_n, i\ne j, k=1,2,\ldots, K_n \} \leq M_r <1,
                \end{equation}
                for some constant $0<M_r<1$.
 \end{itemize}
 
 Following from the faithfulness property, we have $\tbE_n^{(k)} \subseteq \tmE_n^{(k)}$ for $k=1,2,\ldots,K_n$,
  see Liang et al. (2015) for the detail. Therefore,  there exist
  constants $c_1>0$ and $0<\kappa' \leq \kappa$ such that
  \begin{equation} \label{correq3}
       \min \{ |r_{ij}^{(n)}|; (i,j) \in \tbE_n{(k)}, i,j=1,\ldots,p_n, \ k=1,2,\ldots,K_n \} \geq c_1 n^{-\kappa'}.
  \end{equation}
  This result is quite understandable, as the directly dependent
  variables, i.e., those connected by edges in $\bG^{(n)}$, tend to have higher correlations
  than the indirectly dependent variables. 

 Lemma \ref{lem1} concerns the sure screening property of the thresholding correlation network,
 which is modified from Luo, Song and Witten (2015).
 \begin{lemma} \label{lem1}
  Assume $(A_1)$, $(A_2)$, and $(A_3)$ hold.
  Let $\gamma_{nk}=2/3 c_1 n^{-\kappa'}$.
  Then there exist constants $c_2$ and $c_3$ such that
 \[
   P(\tbE_n^{(k)} \subseteq \hat{\mE}_{\gamma_{nk}}^{(k)}) \geq 1- c_2 \exp(-c_3 n^{1-2\kappa'}),
  \]
  \[
    P(b_{\bG_n^{(k)}}(i) \subseteq \hat{\mE}_{\gamma_{nk},i}^{(k)} )  \geq 1- c_2 \exp(-c_3 n^{1-2\kappa'}),
  \]
  where $b_{\bG_n^{(k)}}(i)$ denotes the neighborhood of node $i$ in the graph $\bG_n^{(k)}$. 
 \end{lemma}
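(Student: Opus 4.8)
The plan is to reduce both inclusions to a large-deviation bound for the sample correlation coefficients and then apply a union bound over node pairs, essentially adapting the argument of Luo, Song and Witten (2015) to the present notation. Fix a condition index $k$ and a pair $(i,j)\in\tbE_n^{(k)}$. By the faithfulness consequence (\ref{correq3}) the population correlation satisfies $|r_{ij}^{(k)}|\geq c_1 n^{-\kappa'}$, and writing $\hat r_{ij}^{(k)}$ for the sample correlation based on the $n$ observations under condition $k$, the triangle inequality gives
\[
|\hat r_{ij}^{(k)}| \;\geq\; |r_{ij}^{(k)}| - |\hat r_{ij}^{(k)}-r_{ij}^{(k)}| \;\geq\; c_1 n^{-\kappa'} - |\hat r_{ij}^{(k)}-r_{ij}^{(k)}|.
\]
Hence, with $\gamma_{nk}=\tfrac{2}{3}c_1 n^{-\kappa'}$, the ``failure'' event $\{(i,j)\notin\hat{\mE}_{\gamma_{nk}}^{(k)}\}$ is contained in $\{|\hat r_{ij}^{(k)}-r_{ij}^{(k)}|\geq\tfrac{1}{3}c_1 n^{-\kappa'}\}$.

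Under $(A_1)$(i) the data are Gaussian, and (\ref{correq2}) bounds $|r_{ij}^{(k)}|\leq M_r<1$ uniformly, which keeps the variance of Fisher's $z$-transformed statistic bounded away from degeneracy. A standard exponential concentration inequality for the sample correlation coefficient --- obtained via Fisher's transformation, as in Luo, Song and Witten (2015) --- then yields constants $a_1,a_2>0$ depending only on $M_r$ such that $P(|\hat r_{ij}^{(k)}-r_{ij}^{(k)}|\geq\epsilon)\leq a_1\exp(-a_2 n\epsilon^2)$ for all $\epsilon$ in a fixed neighborhood of $0$; taking $\epsilon=\tfrac{1}{3}c_1 n^{-\kappa'}$ bounds the single-pair failure probability by $a_1\exp(-\tfrac{a_2 c_1^2}{9}n^{1-2\kappa'})$. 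Summing over the at most $p_n^2/2$ pairs and invoking $(A_2)$, the union bound contributes a factor $\exp(2n^{\delta})$; since $\kappa'\leq\kappa<(1-\delta)/2$ we have $1-2\kappa'>\delta$, so $n^{1-2\kappa'}$ strictly dominates $n^{\delta}$ and this factor can be absorbed into the exponent at the cost of enlarging the constants. This produces $c_2,c_3>0$ with $P(\tbE_n^{(k)}\not\subseteq\hat{\mE}_{\gamma_{nk}}^{(k)})\leq c_2\exp(-c_3 n^{1-2\kappa'})$, the first assertion.

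The second assertion is then immediate: since $b_{\bG_n^{(k)}}(i)=\{j:(i,j)\in\tbE_n^{(k)}\}$, the event $\{b_{\bG_n^{(k)}}(i)\subseteq\hat{\mE}_{\gamma_{nk},i}^{(k)}\}$ contains $\{\tbE_n^{(k)}\subseteq\hat{\mE}_{\gamma_{nk}}^{(k)}\}$, and in fact one may union-bound over only the $\leq p_n-1$ pairs incident to $i$ to get the same rate. I expect the main obstacle to be establishing the exponential tail bound for $\hat r_{ij}^{(k)}-r_{ij}^{(k)}$ with the correct $n\epsilon^2$ rate and with the constant $a_2$ uniform over the admissible range of correlations; this is precisely where (\ref{correq2}) is needed, to prevent the variance of the (approximately normal) transformed statistic from blowing up as $|r|\to 1$, and one must also check that the ``$\epsilon$ small'' regime is eventually entered, which holds because $\epsilon=\tfrac{1}{3}c_1 n^{-\kappa'}\to 0$. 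The remaining steps --- the triangle-inequality reduction and the union-bound bookkeeping using $\delta+2\kappa'<1$ --- are routine.
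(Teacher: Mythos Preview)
The paper does not give a self-contained proof of this lemma; it states the result and cites Luo, Song and Witten (2015) as its source, with the required exponential tail bound for the sample correlation appearing later as Lemma~\ref{lem1.1} (adapted from B\"uhlmann and van de Geer). Your proposal correctly reconstructs precisely that standard argument---reduce the inclusion failure to a deviation $\{|\hat r_{ij}^{(k)}-r_{ij}^{(k)}|\geq \tfrac{1}{3}c_1 n^{-\kappa'}\}$ via the triangle inequality and the signal bound (\ref{correq3}), apply the sub-Gaussian tail for the sample correlation (uniform thanks to (\ref{correq2})), and union-bound over $O(p_n^2)$ pairs using $1-2\kappa'>\delta$---so it matches the paper's intended route in all essentials.
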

 
  Lemma \ref{lem1} implies that the $\psi$-partial correlation coefficient can be evaluated based on the
  thresholding correlation network, while ensuring
  its equivalence to the full conditional partial correlation coefficient.
  Lemma \ref{lem2} concerns the sparsity of the thresholding correlation network, which is modified 
  from Theorem 2 of Luo, Song and Witten (2015). 
 
 \begin{itemize}

 \item[$(A_4)$] There exist constants $c_4 >0$ and $0\leq \tau < 1-2 \kappa'$ such that
                $\max_k \lambda_{\max}(\Sigma_k) \leq c_4 n^{\tau}$, where $\Sigma_k$ denotes the
                covariance matrix of $P_n^{(k)}$, and $\lambda_{\max}(\Sigma_k)$ is
                the largest eigenvalue of $\Sigma_k$.  
 \end{itemize}

  \begin{lemma} \label{lem2}
  Assume $(A_1)$, $(A_2)$, $(A_3)$, and $(A_4)$ hold.
  Let $\gamma_{nk}=2/3 c_1 n^{-\kappa'}$. Then for each node $i$,
  \[
  P\left[ |\hat{\mE}_{\gamma_{nk},i}^{(k)}| \leq O(n^{2 \kappa'+\tau}) \right] \geq 1- c_2 \exp(-c_3 n^{1-2 \kappa'}), 
  \quad k=1,2,\ldots,K_n,
  \]
   where $c_2$ and $c_3$ are as given in Lemma \ref{lem1}.
 \end{lemma}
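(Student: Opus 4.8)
The plan is to transfer the statement from empirical correlations to population correlations, and then to control the number of large population correlations in a single node's row by the largest eigenvalue of the (correlation) covariance matrix, exactly in the spirit of Theorem~2 of Luo, Song and Witten (2015). Throughout I fix $k$, since the assertion is stated per condition.

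\emph{Step 1 (stochastic part).} I would first recall the Gaussian large-deviation bound for sample correlation coefficients (as used in Kalisch and B\"uhlmann, 2007, and Luo, Song and Witten, 2015): there are constants $C_1,C_2$ such that $P(|\hat{r}_{ij}^{(k)}-r_{ij}^{(k)}|>t)\le C_1\exp(-C_2 n t^2)$ for $t$ in a fixed neighborhood of $0$. Applying this with $t=\tfrac13 c_1 n^{-\kappa'}$ and taking a union bound over the at most $p_n^2/2$ pairs $(i,j)$ defines an event $\mathcal{C}_{nk}$ with
\[
P(\mathcal{C}_{nk})\ \ge\ 1-c_2\exp(-c_3 n^{1-2\kappa'}),\qquad
\mathcal{C}_{nk}=\Big\{\,\sup_{i<j}\big|\hat{r}_{ij}^{(k)}-r_{ij}^{(k)}\big|\le\tfrac13 c_1 n^{-\kappa'}\,\Big\}.
\]
The union bound is absorbed because $(A_2)$ gives $\log(p_n^2)=O(n^\delta)$ while $1-2\kappa'>\delta$ by $(A_3)$ and (\ref{correq3}) (indeed $\kappa'\le\kappa<(1-\delta)/2$). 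This is essentially the same high-probability event that already drives Lemma~\ref{lem1}, so it can simply be quoted.

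\emph{Step 2 (deterministic counting).} On $\mathcal{C}_{nk}$, if $|\hat{r}_{ij}^{(k)}|>\gamma_{nk}=\tfrac23 c_1 n^{-\kappa'}$ then $|r_{ij}^{(k)}|>\tfrac23 c_1 n^{-\kappa'}-\tfrac13 c_1 n^{-\kappa'}=\tfrac13 c_1 n^{-\kappa'}$, whence
\[
\hat{\mE}_{\gamma_{nk},i}^{(k)}\ \subseteq\ \big\{\,j\ne i:\ |r_{ij}^{(k)}|>\tfrac13 c_1 n^{-\kappa'}\,\big\}.
\]
I would then invoke the elementary bound $\sum_{j\ne i}(r_{ij}^{(k)})^2\le\lambda_{\max}(R_k)$ for the population correlation matrix $R_k$ of $P_n^{(k)}$ (proved via the spectral decomposition, using that $R_k$ has unit diagonal), together with $\lambda_{\max}(R_k)\le\lambda_{\max}(\Sigma_k)/\min_j(\Sigma_k)_{jj}=O(n^\tau)$ from $(A_4)$ (up to the standard normalization, or equivalently a lower bound on the variances). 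Since each index $j$ in the displayed set contributes at least $\tfrac19 c_1^2 n^{-2\kappa'}$ to $\sum_{j\ne i}(r_{ij}^{(k)})^2$, that set, and hence $\hat{\mE}_{\gamma_{nk},i}^{(k)}$, has at most $O(n^\tau)\big/(\tfrac19 c_1^2 n^{-2\kappa'})=O(n^{2\kappa'+\tau})$ elements. Combining with Step~1 yields the claim for the fixed $k$ and $i$.

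\emph{Main obstacle.} Everything after Step~1 is a purely deterministic counting argument controlled by $(A_4)$, so the only substantive ingredient is the uniform concentration of the sample correlations at rate $n^{-\kappa'}$ with a tail that beats the $p_n^2$ union bound. But this is exactly the ingredient already established for Lemma~\ref{lem1} (at bottom, Fisher's $z$-transform plus a sub-Gaussian tail for the empirical correlation under Gaussianity), so there is nothing genuinely new to prove beyond invoking it and doing the bookkeeping; the one place requiring care is making sure the eigenvalue bound is applied to the \emph{correlation} matrix rather than to $\Sigma_k$ itself, so that the exponent comes out as $2\kappa'+\tau$ rather than $2\kappa'+2\tau$.
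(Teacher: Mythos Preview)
Your proposal is correct and is precisely the argument the paper defers to: the paper does not prove Lemma~\ref{lem2} but simply cites Theorem~2 of Luo, Song and Witten (2015), and your two-step scheme---uniform concentration of sample correlations (the same event underlying Lemma~\ref{lem1}) followed by the deterministic eigenvalue counting $\sum_{j}(r_{ij}^{(k)})^2\le\lambda_{\max}(R_k)$ via the unit-diagonal spectral identity---is exactly that argument. Your remark about passing from $\Sigma_k$ to the correlation matrix $R_k$ (requiring either standardization or a lower bound on the variances, which is implicit in the paper's setup) is the only bookkeeping caveat, and you have already flagged it.
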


 Lemma \ref{lem1.1} concerns uniform consistency of the estimated correlation coefficient, 
 which is modified from Lemma 13.1 of B\"uhlmann and van de Geer (2011). 

 \begin{lemma} \label{lem1.1}
  Assume $(A_1)$-(i) and condition (\ref{correq2}) in $(A_3)$.
  Then, for any $0<\gamma<2$,
 \[
 \sup_{k \in \{1,2,\ldots, K_n\}} 
  \sup_{i,j \in \{1,\ldots,p_n\}} P[ |\hat{r}_{ij}^{(k)}-r_{ij}^{(k)}| > \gamma ] \leq 
  c_5 (n-2) \exp\left\{ (n-4) \log( \frac{4-\gamma^2}{4+\gamma^2}) \right\},
 \]
 for some constant $0< c_5 < \infty$ depending on $M_r$ in $(A_3)$ only.
 \end{lemma}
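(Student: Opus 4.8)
The plan is to reduce the statement to a single-pair, single-condition deviation inequality and then to an explicit tail bound obtained from the exact law of a sample correlation coefficient; the uniformity over $k\in\{1,\dots,K_n\}$ will then come for free. First I would fix a triple $(i,j,k)$ and note that $\hat r_{ij}^{(k)}$ is a function of the $n$ i.i.d.\ draws of the pair $(X_i^{(k)},X_j^{(k)})$ only, and that it is invariant under marginal affine transformations. By $(A_1)$-(i) this pair is bivariate Gaussian, so I may assume it is standard bivariate normal with correlation $r:=r_{ij}^{(k)}$, and by (\ref{correq2}) we have $|r|\le M_r<1$. Hence $P[\,|\hat r_{ij}^{(k)}-r_{ij}^{(k)}|>\gamma\,]$ depends only on $n$, $\gamma$ and $r$; once I have a bound uniform over $r\in[-M_r,M_r]$ with a constant depending on $M_r$ alone, the supremum over $i,j$ and over the $K_n$ conditions is immediate, since $K_n$ never enters the bound.

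Next I would write down a tractable law for $\hat r:=\hat r_{ij}^{(k)}$. Because the centered sample cross-product matrix of $n$ i.i.d.\ bivariate Gaussians is Wishart with $n-1$ degrees of freedom, there exist independent $\boldsymbol u\sim N(0,I_{n-1})$ and $\boldsymbol w\sim N(0,I_{n-1})$ with $\hat r\stackrel{d}{=}\boldsymbol u^{T}(r\boldsymbol u+\sqrt{1-r^{2}}\,\boldsymbol w)\big/(\|\boldsymbol u\|\,\|r\boldsymbol u+\sqrt{1-r^{2}}\,\boldsymbol w\|)$. Decomposing $\boldsymbol u^{T}\boldsymbol w$ into its component along $\boldsymbol u$ and its orthogonal remainder collapses this to
\[
\hat r\stackrel{d}{=}\frac{N}{\sqrt{N^{2}+(1-r^{2})B^{2}}},\qquad N=r\,\|\boldsymbol u\|+\sqrt{1-r^{2}}\,A,
\]
where $\|\boldsymbol u\|^{2}\sim\chi^{2}_{n-1}$, $A\sim N(0,1)$ and $B^{2}\sim\chi^{2}_{n-2}$ are mutually independent. (Equivalently one may quote the classical density of the sample correlation coefficient or its relation to a noncentral $t$-statistic; the only point is to have a closed-form law built from independent Gaussian and chi-square ingredients.)

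Then I would bound the two one-sided deviations separately. The event $\{\hat r-r>\gamma\}$ is empty when $r+\gamma\ge1$ (so the whole statement is vacuous for $\gamma\ge2$); otherwise, since $t\mapsto t/\sqrt{t^{2}+1-r^{2}}$ is increasing, clearing denominators shows it is equivalent to the half-space event $\{N>c_\gamma B\}$ with $c_\gamma=(r+\gamma)\sqrt{(1-r^{2})/(1-(r+\gamma)^{2})}$, and symmetrically for $\{r-\hat r>\gamma\}$. Since $N$ and $B$ are independent, a Chernoff bound on $N-c_\gamma B$ using the explicit moment generating functions of the $N(0,1)$, $\chi^{2}_{n-1}$ and $\chi^{2}_{n-2}$ pieces, optimized over the exponential-tilting parameter, yields a bound of the form $c_5(n-2)\exp\{(n-4)\log\tfrac{4-\gamma^{2}}{4+\gamma^{2}}\}$: the polynomial prefactor $n-2$ absorbs the chi-square normalizing constants, while the tilting optimization is where the constant $\tfrac{4-\gamma^{2}}{4+\gamma^{2}}$ emerges. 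Adding the two one-sided bounds and maximizing the resulting constant over $|r|\le M_r$ (the extremum occurs at an endpoint, which is why $c_5$ can be taken to depend on $M_r$ only) gives the claim.

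The main obstacle is this last step: extracting the precise exponential rate $\log\tfrac{4-\gamma^{2}}{4+\gamma^{2}}$ with only a polynomial-in-$n$ prefactor and, crucially, \emph{uniformly} in $r\in[-M_r,M_r]$. One has to track carefully the competition between the $\chi^{2}_{n-2}$ factor $B^{2}$ and the Gaussian/chi factors inside $N$, and verify that the worst case over $r$ is no worse than the stated rate — which is exactly why the hypothesis only requires the true correlations to be bounded away from $\pm1$. This is precisely the computation in Lemma~13.1 of B\"uhlmann and van de Geer (2011), so once the reduction in the first paragraph makes the bound manifestly independent of $i$, $j$ and $k$, that argument can be imported essentially verbatim.
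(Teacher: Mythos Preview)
Your proposal is correct and takes essentially the same approach as the paper: the paper does not give a self-contained proof but simply states that the lemma is ``modified from Lemma~13.1 of B\"uhlmann and van de Geer (2011),'' and your argument is precisely that reduction --- observing that the bound depends only on $(n,\gamma,r)$ with $|r|\le M_r$, so the supremum over $k$ (and over $i,j$) is automatic --- followed by invoking that same Lemma~13.1. Your sketch of the underlying Chernoff computation is more detailed than anything the paper provides, but it is the standard route to that lemma and matches the cited reference.
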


 \begin{itemize}
 \item[$(A_5)$] The $\psi$-partial correlation coefficients  satisfy
  \[
  \inf\{ \tilde{\psi}_{ij}^{(k)}; \tilde{\psi}_{ij} \ne 0, \ 0<|S_{ij}^{(k)}| \leq q_n, \ 
   i,j=1,\ldots, p_n, \ i\ne j, \ k=1,2,\ldots, K_n \} \geq c_6 n^{-d},
  \]
  where $q_n=O(n^{2\kappa'+\tau})$, $0<c_6<\infty$, $0<d< (1-\delta)/2$ are some constants, 
  and $S_{ij}^{(k)}$ is as defined in the $\psi$-score calculation step. In addition,
  \[
  \sup\{ \tilde{\psi}_{ij}; \ 0<|S_{ij}^{(k)}| \leq q_n, \ i,j=1,\ldots, p_n, \ i\ne j, \ k=1,2,\ldots,K_n \} 
  \leq M_{\tilde{\psi}} < 1,
  \]
  for some constant $0< M_{\tilde{\psi}}<1$. 

 \item[$(A_6)$] The number of distinct conditions $K_n=O(n^{\delta+2d+\epsilon-1})$ for some constant $\epsilon>0$ 
                such that  $\delta+2d+\epsilon-1 \geq 0$, 
                where $\delta$ is as defined in $(A_2)$ and $d$ is as defined in $(A_5)$. 
 \end{itemize} 

 Note that combining $(A_3)$ and $(A_5)$, we will get condition $(A_4)$ used by  
 Kalisch and B\"uhlmann (2007) in studying the convergence of the PC algorithm (Spirtes, Glymour and Scheines, 2000). 
 Since we used different notations for correlation coefficient and $\psi$-partial correlation coefficients, 
 we wrote them as two conditions. Condition $(A_6)$ is rather weak. As $\delta+2d<1$, we can choose 
 $\epsilon$ such that $K_n=O(1)$. This is consistent with our numerical results; the method 
 can perform very well even with a small value of $K_n$.

 \begin{lemma} \label{lem1.2} Assume $(A_1)$-(i), $(A_2)$,
   $(A_3)$ and $(A_6)$. If $\eta_{nk}=1/2 c_0 n^{-\kappa}$, then
  \[
   P[\hat{\mE}_{\eta_{nk}}^{(k)}=\tmE_n^{(k)}, k=1,\ldots, K_n] =1-o(1), \quad \mbox{as $n \to \infty$}.
   \]
 \end{lemma}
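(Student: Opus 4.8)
The plan is to reduce the simultaneous statement to a single favorable event on which, for every condition $k$, the thresholded correlation network already coincides with the true correlation network, and then to bound the probability of the complement of that event by a union bound fed with the uniform deviation estimate of Lemma \ref{lem1.1}.

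First I would introduce the event
\[
 B_n=\bigcap_{k=1}^{K_n}\ \bigcap_{1\le i<j\le p_n}\bigl\{\,|\hat r_{ij}^{(k)}-r_{ij}^{(k)}|<\eta_{nk}\,\bigr\},
\]
where $\eta_{nk}=\tfrac12 c_0 n^{-\kappa}$ (which in fact does not depend on $k$), and check that $B_n$ forces $\hat{\mE}_{\eta_{nk}}^{(k)}=\tmE_n^{(k)}$ for all $k$. Indeed, if $(i,j)\in\tmE_n^{(k)}$ then $|r_{ij}^{(k)}|\ge c_0 n^{-\kappa}=2\eta_{nk}$ by (\ref{correq1}), so on $B_n$ one has $|\hat r_{ij}^{(k)}|\ge|r_{ij}^{(k)}|-|\hat r_{ij}^{(k)}-r_{ij}^{(k)}|>2\eta_{nk}-\eta_{nk}=\eta_{nk}$ and the pair is retained; if instead $(i,j)\notin\tmE_n^{(k)}$ then $r_{ij}^{(k)}=0$, so on $B_n$ we get $|\hat r_{ij}^{(k)}|=|\hat r_{ij}^{(k)}-r_{ij}^{(k)}|<\eta_{nk}$ and the pair is discarded. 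Hence $B_n\subseteq\{\hat{\mE}_{\eta_{nk}}^{(k)}=\tmE_n^{(k)},\ k=1,\dots,K_n\}$, and it suffices to show $P(B_n)\to1$.

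Next I would bound $P(B_n^c)$ by the union bound over the at most $K_n\binom{p_n}{2}$ pair--condition combinations and apply Lemma \ref{lem1.1} with $\gamma=\eta_{nk}$:
\[
 P(B_n^c)\le K_n\binom{p_n}{2}\,c_5(n-2)\exp\Bigl\{(n-4)\log\Bigl(\tfrac{4-\gamma^2}{4+\gamma^2}\Bigr)\Bigr\}.
\]
Since $\gamma^2=\tfrac14 c_0^2 n^{-2\kappa}\to0$, for all large $n$ one has $\log\bigl(\tfrac{4-\gamma^2}{4+\gamma^2}\bigr)\le-\gamma^2/4$, so $(n-4)\log(\cdot)\le-\tfrac{c_0^2}{16}(n-4)n^{-2\kappa}\le-C n^{1-2\kappa}$ for some $C>0$ and $n$ large. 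Using $(A_2)$ to write $\binom{p_n}{2}\le p_n^2=O(\exp(2n^{\delta}))$ and $(A_6)$ to write $K_n(n-2)=O(n^{\delta+2d+\epsilon})$, this yields
\[
 P(B_n^c)\le C'\,n^{\delta+2d+\epsilon}\exp\bigl(2n^{\delta}-C n^{1-2\kappa}\bigr)\longrightarrow0,
\]
because (\ref{correq1}) requires $\kappa<(1-\delta)/2$, i.e.\ $1-2\kappa>\delta$, so $-Cn^{1-2\kappa}$ dominates $2n^{\delta}$ and absorbs the polynomial prefactor. This gives $P[\hat{\mE}_{\eta_{nk}}^{(k)}=\tmE_n^{(k)},\ k=1,\dots,K_n]\ge P(B_n)=1-o(1)$.

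The only real work here is the exponent bookkeeping in the last display: one must verify that the union-bound factor $K_n\binom{p_n}{2}n$ and the crude linearization of $\log\bigl(\tfrac{4-\gamma^2}{4+\gamma^2}\bigr)$ are both swamped by $\exp(-Cn^{1-2\kappa})$, which is exactly where $(A_2)$, $(A_6)$ and the range $\kappa<(1-\delta)/2$ of $(A_3)$ enter. There is no conceptual obstacle once Lemma \ref{lem1.1} is in hand; the computation is essentially the one behind Lemma \ref{lem1} with $\kappa$ in place of $\kappa'$, and can be presented as a mild variant of it.
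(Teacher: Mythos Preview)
Your proposal is correct and follows essentially the same approach as the paper: both reduce the claim to a union bound over all $(i,j,k)$ of the deviation event $\{|\hat r_{ij}^{(k)}-r_{ij}^{(k)}|\ge \tfrac{c_0}{2}n^{-\kappa}\}$, invoke Lemma~\ref{lem1.1} with the linearization $\log\bigl(\tfrac{4-\gamma^2}{4+\gamma^2}\bigr)\sim-\gamma^2/2$, and finish with the exponent comparison using $(A_2)$, $(A_6)$ and $\kappa<(1-\delta)/2$. The only cosmetic difference is that the paper phrases the argument via false-positive/false-negative error events $A_{ij}^{(k,1)},A_{ij}^{(k,2)}$ rather than your single favorable event $B_n$, but both decompositions collapse to the same deviation bound.
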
 
   \begin{proof} Let $A_{ij}^{(k)}$ denote that an error event occurs
  when testing the hypotheses $H_0: r_{ij}^{(k)}=0$ versus $H_1: r_{ij}^{(k)} \ne 0$
  for variables $i$ and $j$ under condition $k$. Thus
\begin{equation} \label{Jan25eq1}
 P[\mbox{an error occurs in $\hat{\mE}_{\eta_{nk}}^{(k)}$ for $k=1,\ldots, K_n$}]  = 
 P\left[ \cup_k \cup_{i \ne j} A_{ij}^{(k)} \right] \leq O(p_n^2 K_n) \sup_k \sup_{i \ne j} P(A_{ij}^{(k)}).
\end{equation}
 Let $A_{ij}^{(k,1)}$ and $A_{ij}^{(k,2)}$ denote the false positive and
 false negative errors, respectively. Then
 \begin{equation} \label{Jan25eq2}
 A_{ij}^{(k)}=A_{ij}^{(k,1)} \cup A_{ij}^{(k,2)},
 \end{equation}
 where,
 \begin{equation} \label{Jan25eq20}
 \begin{cases} 
   \mbox{False positive error $A_{ij}^{(k,1)}$}: \ |\hat{r}_{ij}^{(k)}|> 
     \frac{c_0}{2} n^{-\kappa} \quad \mbox{and $r_{ij}^{(k)}=0$}, &  \\
   \mbox{False negative error $A_{ij}^{(k,2)}$}: \ |\hat{r}_{ij}^{(k)}|\leq  \frac{c_0}{2}n^{-\kappa}  
    \quad \mbox{and $r_{ij}^{(k)} \ne 0$}. & \\
 \end{cases}
 \end{equation}
 Then there exists  some constant $0<C<\infty$,
 \begin{equation} \label{Jan25eq3}
 \sup_k \sup_{ij} P(A_{ij}^{(k,1)}) =\sup_k \sup_{ij} P\left(|\hat{r}_{ij}^{(k)}-r_{ij}^{(k)}|> \frac{c_0}{2} n^{-\kappa} \right) 
 \leq O(n) \exp(-C  n^{1-2\kappa}),
 \end{equation}
 using Lemma \ref{lem1.1} and the fact that
 $\log((4-a^2)/(4+a^2)) \sim -a^2/2$ as $a \to 0$.
 Furthermore,
  \begin{equation} \label{Jan25eq21}
 \sup_k \sup_{ij} P(A_{ij}^{(k,2)}) =\sup_k \sup_{ij} P\left(|\hat{r}_{ij}^{(k)}| \leq \frac{c_0}{2} n^{-\kappa} \right) 
 \leq \sup_k \sup_{ij} P\left( |\hat{r}_{ij}^{(k)}-r_{ij}^{(k)}| > \frac{c_0}{2} n^{-\kappa} \right),
 \end{equation}
 since, by $(A_3)$, $\min_k \min_{ij} |r_{ij}^{(k)}| \geq c_0 n^{-\kappa}$ in this case.
 By Lemma \ref{lem1.1}, we have
 \begin{equation} \label{Jan25eq4}
 \sup_k \sup_{ij} P(A_{ij}^{(k,2)}) \leq O(n) \exp(-C n^{1-2\kappa} ), 
 \end{equation}
 for some $0<C<\infty$.
 As a summary of (\ref{Jan25eq1})--(\ref{Jan25eq4}), we have
 \begin{equation} \label{Jan25eq5}
   P[\mbox{an error occurs in $\hat{\mE}_{\eta_{nk}}^{(k)}$ for $k=1,2,\ldots,K_n$}] \leq O(p_n^2 K_n n) \exp(-C n^{1-2\kappa}) =o(1), 
  \end{equation}
  because  $0< \kappa < (1-\delta)/2$ by $(A_3)$, $K_n=O(n^{\delta+2d+\epsilon-1})$ by $(A_6)$, 
  and $\log(p_n)=n^{\delta}$ by $(A_2)$.
  This concludes the proof.
\end{proof}  

  As explained before, we have $\tbE_n^{(k)} \subseteq \tmE_n^{(k)}$ for $k=1,2,\ldots,K_n$. 
  Further, it follows from Lemma \ref{lem1.2} that
 \begin{equation} \label{correq4}
  P[ \tbE_n^{(k)} \subseteq \hat{\mE}_{\eta_{nk}}^{(k)},k=1,\ldots,K_n] =1-o(1). 
 \end{equation}
  Therefore, based on Lemma \ref{lem1}, Lemma \ref{lem2} and (\ref{correq4}), we
  propose to restrict the neighborhood size of each node  to be
 \begin{equation} \label{neisize}
  \min\left\{ |\hat{\mE}_{\eta_{nk}, i}^{(k)}|,  \frac{n}{\xi_n \log(n)} \right \},
 \end{equation}
  where $\xi_n$ is a constant. 
  The value of $\eta_{nk}$ can be determined
  through a simultaneous test for
  the hypotheses $H_0: r_{ij}^{(k)}=0 \leftrightarrow H_1: r_{ij}^{(k)}\ne 0$, $1\leq i <j \leq p_n$,
  at a significance level of $\alpha_1$.
  Our experience shows that the rule (\ref{neisize}) can perform much better than the rule $n/[\xi_n \log(n)]$,
  especially when $n$ is large.
 
 Under condition $(A_5)$, we have that 
 the minimum $\psi$-score for the edges with $\tilde{\psi}_{ij}^{(k)} \ne 0$ is given by 
 \[
 \min_k \min_{i\ne j} \psi_{ij}^{(k)} = c_7 n^{1/2-d},
 \]
 for some constant $c_7$. This can be obtained by plugging  
 the lower bound of $\tilde{\psi}_{ij}^{(k)}$ into (\ref{score}).    
 In what follows, for convenience, we will re-denote $\mu_{l0}$ by $\mu_{ij,0}$, 
 re-denote $\mu_{l1}$ by $\mu_{ij,1}$, and re-denote 
 $\hat{\psi}_l^{(k)}$ by  $\hat{\psi}_{ij}^{(k)}$ for the corresponding pair $(i,j)$. 
 Let $\hat{\psi}_{B,ij}^{(k)}$ denote the Bayesian estimator of $\mu_{ij}^{(k)}$ 
 (with the Bayesian method described in Section 2.2), where 
 $\mu_{ij}^{(k)}=\mu_{ij,0}$ or $\mu_{ij,1}$ as defined in (\ref{plugin}).   
 Theoretically we have $\mu_{ij,0}=0$ and $\mu_{ij,1}> c_7 n^{1/2-d}$. Following the 
 property of Bayesian estimator, we have that $\hat{\psi}_{B,ij}^{(k)}$  
 is consistent and has a variance of order $O(1/K_n)$. 
 Note that the integrated $\psi$-score $\hat{\psi}_{ij}^{(k)}$ is a boosted 
 version of $\hat{\psi}_{B,ij}^{(k)}$; which amplifies $\hat{\psi}_{B,ij}^{(k)}$ 
 by a factor between 1 and $\sqrt{K_n}$. 
 Such amplification helps to 
 improve the power of the proposed method by reducing the false negative errors.   
  
  Let $\zeta_n$ denote the threshold value of the integrated $\psi$-scores used in
  the joint edge detection step.
  Let $\hat{\bE}_{\zeta_n}^{(k)}$ denote the partial correlation network obtained through thresholding
  integrated $\psi$-scores. That is, we define
 \[
  \hat{\bE}_{\zeta_n}^{(k)}=\{(i,j): |\hat{\psi}_{ij}^{(n)}|> \zeta_n, \ i, j=1,2,\ldots,p_n \}, \quad k=1,2,\ldots,K_n.
 \] 

 Let $\hat{\mE}_*^{(k)}$ denote the edge set of a correlation network for which each node has a degree of $O(n/\log(n))$,
 adjacent with $O(n/\log(n))$ highest correlated nodes.
 It follows from Lemma \ref{lem2}, $(A_2)$ and $(A_6)$ that
 \begin{equation} \label{neweq2}
  P[ \tbE_k^{(k)} \subseteq \hat{\mE}_*^{(k)},k=1,2,\ldots,K_n ] \geq K_n \left[1- c_2 p_n \exp(-c_3 n^{1-2 \kappa'})\right]-(K_n-1)
  =1-o(1).
 \end{equation}
 Lemma \ref{lem4} establishes the consistency of $\hat{\bE}_{\zeta_n}^{(k)}$ conditioned on
 $\tbE_n^{(k)} \subseteq \hat{\mE}_{*}^{(k)} \cap \hat{\mE}_{\eta_{nk}}^{(k)}$ for
 all $k=1,2,\ldots,K_n$.  Essentially, it shows that for all graphs there exists a common threshold $\zeta_n$ 
 with respect to which the Bayesian integrated $\psi$-scores are separable in probability 
 for the linked and non-linked pairs of nodes. 

 \begin{lemma} \label{lem4} Assume $(A_1)$--$(A_6)$  hold and
   $\tbE_{n}^{(k)} \subseteq \hat{\mE}_{*}^{(k)} \cap \hat{\mE}_{\eta_{nk}}^{(k)}$ is true for all $k=1,2,\ldots,K_n$.  
   Let $\zeta_n=\frac{1}{2} c_7 n^{1/2-d}$, then
 \[
  P\left[ \hat{\bE}_{\zeta_n}^{(k)}= \tbE_n^{(k)}, k=1,\ldots, K_n| \tbE_n^{(k)} \subseteq \hat{\mE}_{*}^{(k)} 
  \cap \hat{\mE}_{\eta_{nk}}^{(k)}, k=1,\ldots,K_n \right] = 1- o(1),  \quad 
 \mbox{as $n\to 1$.}
 \]
 \end{lemma}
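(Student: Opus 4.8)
The plan is to control the two kinds of classification errors separately and then union bound over all $O(p_n^2K_n)$ candidate-edge/condition pairs, exactly as in the proof of Lemma \ref{lem1.2}, working throughout conditionally on the event in the statement. On that event, each separator $S_{ij}^{(k)}$ used in (\ref{score}) lies in a neighborhood of size $O(n/\log n)$ that contains the true Markov blanket, so the $\psi$-learning equivalence $\tilde\psi_{ij}^{(k)}=0$ iff the true $\psi$-partial correlation is $0$ holds; hence a false positive at $(i,j,k)$ means $|\hat\psi_{ij}^{(k)}|>\zeta_n$ while that population quantity is $0$, and a false negative means $|\hat\psi_{ij}^{(k)}|\le\zeta_n$ while, by Fisher's transform (\ref{score}) and $(A_5)$, the population mean of $\psi_{ij}^{(k)}$ exceeds $c_7 n^{1/2-d}$ in absolute value. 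It thus suffices to show $\sup_k\sup_{i<j}[P(\mathrm{FP})+P(\mathrm{FN})]=o(1/(p_n^2K_n))$.

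The device I will use is that, for a fixed edge $l=(i,j)$, the integrated score (\ref{aveeq}) is a nonnegative linear combination of the raw $\psi$-scores, $\hat\psi_l^{(k)}=\sum_{i'=1}^{K_n}\tilde a_{i'}^{(k)}\psi_l^{(i')}$ with $\tilde a_{i'}^{(k)}=\sum_{\{d:\,i'\in T_{dk}\}}\pi_{ld}/\sqrt{|T_{dk}|}$, where $T_{dk}$ is the group of condition $k$ in configuration $d$; these weights satisfy $0\le\tilde a_{i'}^{(k)}\le1$ and $\sum_{i'}\tilde a_{i'}^{(k)}=\sum_d\pi_{ld}\sqrt{|T_{dk}|}\in[1,\sqrt{K_n}]$, which is the ``amplification by a factor between $1$ and $\sqrt{K_n}$'' noted after (\ref{aveeq}). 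Writing $\psi_l^{(i')}=\mu_l^{(i')}+Z_l^{(i')}$ with $\mu_l^{(i')}=0$ when $e_l^{*(i')}=0$ and $|\mu_l^{(i')}|\ge c_7 n^{1/2-d}$ otherwise, and with $Z_l^{(i')}$ concentrating at the rate of Lemma \ref{lem1.1} so that $\max_{i'\le K_n}|Z_l^{(i')}|=O_p(n^{\delta/2}\sqrt{\log n})$ after the union over the $O(p_n^2K_n)$ tests, everything reduces to controlling the weights $\tilde a_{i'}^{(k)}$, i.e. to a posterior-concentration estimate. I will also use the crude uniform bound $|\psi_l^{(i')}|\le C\sqrt n$ from $|\tilde\psi_{ij}^{(i')}|\le M_{\tilde\psi}<1$.

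The crux, which I expect to be the main obstacle, is to show that with probability $1-o(1/(p_n^2K_n))$ the posterior (\ref{posteq1})--(\ref{posteq3}) assigns total mass $o(n^{-1/2})$ to every configuration whose group $T_{dk}$ mixes a condition with $e_l^{*}=0$ and one with $e_l^{*}=1$. Comparing such a configuration to the true one in the explicit marginal posterior, mixing a null score of size $O(\sqrt{\log n})$ with a present score of size $\gtrsim n^{1/2-d}$ inflates the within-group sum of squares in term $(I)$ or $(J)$ of (\ref{posteq1}) by order $n^{1-2d}$, which is raised to the power $-(\tfrac{n_0-1}{2}+a_2)$ and so drives $\pi_{ld}$ down; configurations that instead merely relabel runs without mixing magnitudes are penalized by the Beta--Binomial prior term $(H)$, each extra status change costing a factor of order $(1+k_1)/K_n$. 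The delicate point is that these two penalties must be traded off over all $2^{K_n}$ configurations --- those that are cheap for the prior are expensive for the likelihood, and conversely --- and the count $2^{K_n}$ is absorbed because $(A_6)$ makes $K_n$ negligible against $n^{1-2d}$ (for instance $K_n=O(1)$ on taking $\epsilon=1-\delta-2d$). Granting this, in the false-positive case only null conditions carry non-negligible weight, so $|\hat\psi_l^{(k)}|\le\sqrt{K_n}\max_{\{i':e_l^{*(i')}=0\}}|\psi_l^{(i')}|+(\text{bad mass})\cdot C\sqrt n=O(\sqrt{K_n}\,n^{\delta/2}\sqrt{\log n})+o(1)<\zeta_n$, since $K_n n^{\delta}=o(n^{1-2d})$ under $(A_2)$ and $(A_6)$.

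For the false-negative case, the dominant posterior mass places condition $k$ in a group consisting of present conditions, all sharing its sign by the no-sign-change modelling premise behind (\ref{plugin}); for each such configuration $\bar\psi_{ld}^{(k)}$ is a Stouffer average of same-signed scores, hence $|\bar\psi_{ld}^{(k)}|\ge\min_{\{i':e_l^{*(i')}=1\}}|\psi_l^{(i')}|\ge c_7 n^{1/2-d}-O(n^{\delta/2}\sqrt{\log n})\ge\tfrac23 c_7 n^{1/2-d}$, and these averages do not cancel in (\ref{aveeq}), so $|\hat\psi_l^{(k)}|\ge\tfrac23 c_7 n^{1/2-d}-(\text{bad mass})\cdot C\sqrt n>\tfrac12 c_7 n^{1/2-d}=\zeta_n$, with the amplification $\sum_{i'}\tilde a_{i'}^{(k)}\ge1$ only pushing this further above the threshold. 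This is the false-negative inequality alluded to in the discussion after Theorem \ref{conthem}, and together with the false-positive bound and the common sign of $\bar\psi_{ld}^{(k)}$ it simultaneously yields sign consistency. Multiplying the resulting per-edge error bounds, which decay exponentially in a positive power of $n$, by the $O(p_n^2K_n)$ prefactor and invoking $(A_2)$, $(A_3)$ and $(A_6)$ exactly as in (\ref{Jan25eq5}) completes the proof.
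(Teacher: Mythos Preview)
Your skeleton --- union bound over $O(p_n^2K_n)$ pairs, split into false-positive and false-negative errors, then show each decays like $\exp(-C n^{1-2d})$ --- matches the paper exactly; that part is fine. Where you diverge is in how you control the integrated score. The paper does not attempt a direct posterior-concentration argument on the configurations $\be_l$. Instead it introduces an auxiliary object $\hat\psi_{B,ij}^{(k)}$, the Bayesian posterior mean of $\mu_{ij}^{(k)}$, asserts (without proof) that it is consistent with variance of order $O(1/K_n)$, and then uses the amplification relation $|\hat\psi_{B,ij}^{(k)}|\le|\hat\psi_{ij}^{(k)}|\le K_n^{1/2}|\hat\psi_{B,ij}^{(k)}|$ together with the Gaussian tail bound $P(|Z|\ge z)\le 2e^{-z^2/2}$ to get (\ref{Jan25eq3.2}) and (\ref{Jan25eq4.2}) in one line each. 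In particular, the paper's FN step is simply $|\hat\psi_{ij}^{(k)}|\le\zeta_n\Rightarrow|\hat\psi_{B,ij}^{(k)}|\le\zeta_n\Rightarrow|\hat\psi_{B,ij}^{(k)}-\mu_{ij,1}^{(k)}|\ge\zeta_n$, and the FP step is $|\hat\psi_{ij}^{(k)}|>\zeta_n\Rightarrow K_n^{1/2}|\hat\psi_{B,ij}^{(k)}|>\zeta_n$.

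Your route --- writing $\hat\psi_l^{(k)}=\sum_{i'}\tilde a_{i'}^{(k)}\psi_l^{(i')}$ and then showing the posterior mass on ``mixing'' configurations is $o(n^{-1/2})$ via the explicit form (\ref{posteq1}) --- is really a proof of what the paper asserts about $\hat\psi_{B,ij}^{(k)}$, and is more honest about where the work lies. It buys you rigor the paper skips, and your sketch of how the $(I)/(J)$ terms penalize magnitude-mixing by $n^{1-2d}$ while the $(H)$ term penalizes extra change-points is the right mechanism. The cost is that you have committed yourself to a genuine calculation over $2^{K_n}$ configurations, and the step you label ``granting this'' is the entire content; the paper simply declares the analogous fact and moves on. If you want to align with the paper, you can replace your posterior-concentration paragraph with the two amplification inequalities above and appeal to the stated properties of the Bayesian estimator; if you want a self-contained argument, your outline is correct but the bound on the bad posterior mass must actually be carried out, not assumed.
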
 
 \begin{proof} Let $A_{ij}^{(k)}$ denote that an error event occurs
  when testing the hypotheses $H_0: \mu_{ij}^{(k)}=0$ versus $H_1: \mu_{ij}^{(k)} \ne 0$
  for variables $i$ and $j$ under condition $k$. Thus
\begin{equation} \label{Jan25eq1.2}
 P[\mbox{an error occurs in $\hat{\bE}_{\zeta_n}^{(k)}$ for $k=1,\ldots, K_n$}]  = 
 P\left[ \cup_k \cup_{i \ne j} A_{ij}^{(k)} \right] \leq O(p_n^2 K_n) \sup_k \sup_{i \ne j} P(A_{ij}^{(k)}).
\end{equation}
 Let $A_{ij}^{(k,1)}$ and $A_{ij}^{(k,2)}$ denote the false positive and
 false negative errors, respectively. Then
 \begin{equation} \label{Jan25eq2.2}
 A_{ij}^{(k)}=A_{ij}^{(k,1)} \cup A_{ij}^{(k,2)},
 \end{equation}
 where,
 \begin{equation} \label{Jan25eq20.2}
 \begin{cases} 
   \mbox{False positive error $A_{ij}^{(k,1)}$}: \ |\hat{\psi}_{ij}^{(k)}|> 
     \frac{c_7}{2} n^{1/2-d} \quad \mbox{and $\mu_{ij}^{(k)}=0$}, &  \\
   \mbox{False negative error $A_{ij}^{(k,2)}$}: \ |\hat{\psi}_{ij}^{(k)}|\leq  \frac{c_7}{2}n^{1/2-d}  
    \quad \mbox{and $\mu_{ij}^{(k)} \ne 0$}. & \\
 \end{cases}
 \end{equation}
 Then we have 
 \begin{equation} \label{Jan25eq3.2}
 \begin{split}
 \sup_k \sup_{ij} P(A_{ij}^{(k,1)}) & =\sup_k \sup_{ij} P\left(|\hat{\psi}_{ij}^{(k)}-\mu_{ij,0}^{(k)}|> \frac{c_7}{2} n^{1/2-d} \right) \\
  & \leq \sup_k \sup_{ij} P\left(K_n^{1/2} |\hat{\psi}_{B,ij}^{(k)}-\mu_{ij,0}^{(k)}|> \frac{c_7}{2} n^{1/2-d} \right) \\
  & \leq  2 \exp\{-C n^{1-2d} \}, 
 \end{split}
 \end{equation}
 where $\mu_{ij,0}^{(k)}=0$, 
  $C$ denotes a constant, and the last inequality follows from the concentration inequality of the normal distribution, 
 i.e., $P(|Z|\geq z) \leq 2 e^{-z^2/2}$ for all $z>0$, where $Z$ denotes a standard normal random variable. 
 Furthermore, we have
  \begin{equation} \label{Jan25eq21.2}
 \begin{split}
 \sup_k \sup_{ij} P(A_{ij}^{(k,2)}) & =\sup_k \sup_{ij} P\left(|\hat{\psi}_{ij}^{(k)}| \leq \frac{c_7}{2} n^{1/2-d} \right)  
   \leq \sup_k \sup_{ij} P\left( |\hat{\psi}_{B,ij}^{(k)}| \leq \frac{c_7}{2} n^{1/2-d} \right) \\
  & \leq \sup_k \sup_{ij} P\left( |\hat{\psi}_{B,ij}^{(k)}-\mu_{ij,1}^{(k)}| \geq \frac{c_7}{2} n^{1/2-d} \right),  
 \end{split}
 \end{equation}
 following from $\min_k \min_{ij} |\mu_{ij,1}^{(k)}| \geq c_7 n^{1/2-d}$. Note that the first 
 inequality of (\ref{Jan25eq21.2}) implies that the meta-analysis step indeed reduces the false negative error. 
  Further, by the concentration inequality of the normal distribution, we have 
 \begin{equation} \label{Jan25eq4.2}
 \sup_k \sup_{ij} P(A_{ij}^{(k,2)}) \leq 2 \exp(-C K_n n^{1-2d} ), 
 \end{equation}
 for some constant $0<C<\infty$.  Note that the variance of $\hat{\psi}_{B,ij}^{(k)}$ is of order $O(1/K_n)$. 
 
 As a summary of (\ref{Jan25eq1.2})--(\ref{Jan25eq4.2}), we have
 \begin{equation} \label{Jan25eq5.2}
   P[\mbox{an error occurs in $\hat{\mE}_{\eta_{nk}}^{(k)}$ for $k=1,2,\ldots,K_n$}] \leq O(p_n^2 K_n) 
   \left(e^{-C n^{1-2d}}+ e^{-C K_n n^{1-2d}}\right) =o(1), 
  \end{equation}
  because $0< d < (1-\delta)/2$ by $(A_5)$,  $K_n=O(n^{\delta+2d-1+\epsilon})$ by $(A_6)$, and 
  $\log(p_n)=n^{\delta}$ by $(A_2)$.
  This concludes the proof.
\end{proof}

% \begin{theorem} \label{conthem} 
%  Assume $(A_1)$--$(A_6)$ hold.  Then
%  \[
%  P[\hat{\bE}_{\zeta_n}^{(k)} =\tbE_n^{(k)}, k=1,2,\ldots,K_n]  \geq 1-o(1), \quad 
%   \mbox{as $n \to \infty$}. 
%  \]
% \end{theorem}
\paragraph{Proof of Theorem \ref{conthem}} 
 By invoking (\ref{correq4}), (\ref{neweq2}) and Lemma \ref{lem4}, we have
 \[ 
 \begin{split}
 P\left[ \hat{\bE}_{\zeta_n}^{(k)} =\tbE_n^{(k)}, k=1,2,\ldots,K_n \right] & \geq  
 P\left[ \hat{\bE}_{\zeta_n}^{(k)}= \tbE_n^{(k)}, k=1,2,\ldots,K_n | \tbE_n^{(k)} \subseteq \hat{\mE}_*^{(k)} 
  \cap \hat{\mE}_{\eta_{nk}}^{(k)},  k=1,2,\ldots,K_n \right]  \\
 & \times  P\left[ \tbE_n^{(k)} \subseteq \hat{\mE}_*^{(k)} \cap \hat{\mE}_{\eta_{nk}}^{(k)}, k=1,2,\ldots,K_n \right]   \\
 & \geq  \left[  1- o(1) \right] \left[ 1- o(1)+ 1-o(1)-1 \right] = 1-o(1), \\
\end{split}
 \]
 while concludes the proof.

\section*{References} 

\begin{description}

\item[] Barab\'{a}si, A. and Albert, R. (1999). Emergence of scaling in random networks. {\it Science} {\bf 286}, 509.

\item[] Benjamini, Y. and Yekutieli, D. (2001). The control of the false discovery rate in multiple testing 
        under dependency. {\it Annals of Statistics}, {\bf 29}(4), 1165-1188.
 
\item[] Bonifacio, E. (2015). Predicting type 1 diabetes using biomarkers. {\it Diabetes Care}, 38(6), 989-996.

\item[] B\"uhlmann, P. and van de Geer, S. (2011). {\it Statistics for High-Dimensional Data: Methods, Theory and 
        Applications}. Berlin: Springer-Verlag. 
 
\item[] Cai, T., Liu, W., and Luo, X. (2011). A constrained $l_1$ minimization approach to sparse precision matrix estimation. {\it Journal of the American Statistical Association}, 106(494), 594-607.

\item[] Danaher, P. (2012). JGL: Performs the joint graphical lasso for sparse inverse covariance estimation on 
        multiple classes. R package downloadable at https://cran.r-project.org/web/packages/JGL /index.html. 

\item[] Danaher, P., Wang, P., and Witten, D. M. (2014). The joint graphical lasso for inverse covariance estimation across multiple classes. {\it Journal of the Royal Statistical Society: Series B (Statistical Methodology)}, 76(2), 373-397.

\item[] Davis, J. and Goadrich, M. (2006). The relationship between Precision-Recall and ROC curves. 
  In {\it Proceedings of the 23rd international conference on Machine learning}, pp. 233-240.

\item[]
Fan, J., and Song, R. (2010). Sure independence screening in generalized linear models with NP-dimensionality. \ANNALS, 38(6), 3567-3604.

\item[] Friedman, J., Hastie, T. and Tibshirani, R. (2008). Sparse inverse covariance estimation
 with the graphical lasso. {\it Biostatistics}, {\bf 9}, 432-441.

\item[] Guo, J., Levina, E., Michailidis, G., and Zhu, J. (2011). Joint estimation of multiple graphical models. {\it Biometrika}, 98, 1-15.

\item[] Kalisch, M. and B\"uhlmann, P. (2007). Estimating high-dimensional directed acyclic graphs 
   with the PC algorithm. \JMLR, 8, 613-636. 

\item[] Kolaczyk, E. D. (2009). {\it Statistical Analysis of Network Data: Methods and Models.} New York, NY, USA: Springer.

\item[] Lee, J. and Hastie, T.J. (2015). Learning the structure of mixed graphical models. 
        \JCGS, 24, 230-253. 

\item[] Lee, H.S., Burkhardt, B.R., McLeod, W., Smith, S., Eberhard, C., Lynch, K., Hadley, D., 
 Rewers, M., Simell, O., She, J.X., Hagopian, B., Lernmark, A., Akolkar, B., Ziegler, A.G., Krischer, J.P.,
 TEDDY study group (2014). Biomarker discovery study design for type 1 diabetes in  The 
 Environmental Determinants of Diabetes in the Young (TEDDY) study. 
 {\it Diabetes Metab Res Rev}, 30(5), 424-434.

\item[] Liang, F., Liu, C., and Carroll, R. (2010). {\it Advanced Markov Chain Monte Carlo Methods: Learning from Past Samples}. 
 John Wiley \& Sons.

\item[] Liang, F., Song, Q. and Qiu, P. (2015). An Equivalent Measure of Partial Correlation Coefficients for
   High Dimensional  Gaussian Graphical Models.  \JASA, {\bf 110}, 1248-1265.

\item[] Liang, F. and Zhang, J. (2008). Estimating the false discovery rate using the
  stochastic approximation algorithm.  {\it Biometrika}, {\bf 95}, 961-977.

\item[] Lin, Z., Wang, T., Yang, C., and Zhao, H. (2017). 
   On Joint Estimation of Gaussian Graphical Models for Spatial and Temporal Data. {\it Biometrics}, 73, 769-779.

\item[]
Liu, H., Lafferty, J. and Wasserman, L. (2009). The nonparanormal: Semiparametric estimation of high dimensional undirected graphs. {\it Journal of Machine Learning Research}, 10(Oct), 2295-2328.

\item[] Luo, S., Song, R., and Witten, D. (2015). Sure screening for Gaussian graphical models. 
        Available at arXiv:1407.7819v1.

\item[] Ma, J., and Hart, G. W. (2013). Protein O-GlcNAcylation in diabetes and diabetic complications. {\it Expert review of proteomics}, 10(4), 365-380.

\item[] Meinshausen, N. and B\"uhlmann, P. (2006). High-dimensional graphs and variable selection
     with the Lasso. {\it Annals of Statistics}, {\bf 34}, 1436-1462.

\item[]
Mosteller, F. and Bush, R.R. (1954). Selected quantitative techniques. In: {\it Handbook of Social Psychology}, Vol. 1 (G.Lindzey, ed.), pp. 289-334.
  Addison-Wesley, Cambridge, Mass.

\item[]
Orilieri, E., Cappellano, G., Clementi, R., Cometa, A., Ferretti, M., Cerutti, E. et al. (2008). Variations of the perforin gene in patients with type 1 diabetes. Diabetes.

\item[] Peterson, C., Stingo, F. C., and Vannucci, M. (2015). Bayesian inference of multiple Gaussian graphical models. 
   {\it Journal of the American Statistical Association}, 110(509), 159-174.

\item[] Qiu, H., Han, F., Liu, H., and Caffo, B. (2016). Joint estimation of multiple graphical models from high dimensional time series.   {\it Journal of the Royal Statistical Society: Series B (Statistical Methodology)}, 78(2), 487-504.

\item[] Saito, T. and Rehmsmeier, M. (2015). The precision-recall plot is more informative than the ROC plot 
  when evaluating binary classifiers on imbalanced datasets. {\it PloS One}, 10(3), e0118432.

\item[] Schadt, Eric E., et al. (2008). Mapping the genetic architecture of gene expression in human liver. 
        {\it PLoS Biol}, 6(5):e107.

\item[] Shaddox, E., Stingo, F.C., Peterson, C.B., Jacobson, S., Cruickshank-Quinn, C., Kechris, K., 
        Bowler, R., and Vannucci, M. (2016). A Bayesian approach for learning gene networks underlying 
        disease severity in COPD. {\it Stat. Biosci.}, DOI 10.1007/s12561-016-9176-6.

\item[] Spirtes, P., Glymour, C., and Scheines, R. (2000). {\it Causation, Prediction, and Search} (2nd Edition). 
        The MIT Press.  

\item[] Stouffer S. et al . (1949). {\it The American Soldier: Adjustment during army life.}  Vol. 1. Princeton University Press, Princeton.

\item[] Wang, H. (2015). Scaling it up: Stochastic search structure learning in graphical models. {\it Bayesian Analysis}, 
        10(2), 351-377.

\item[]
Xie, Y., Liu, Y., and Valdar, W. (2016). Joint estimation of multiple dependent Gaussian graphical models with applications to mouse genomics. 
  {\it Biometrika}, 103, 493-511.

\item[] Yuan, M. and Lin, Y. (2007). Model selection and estimation in the Gaussian graphical model.
       {\it Biometrika}, {\bf 94}, 19-35.

\item[] Zhou, S., Lafferty, J., and Wasserman, L. (2010). Time varying undirected graphs. {\it Machine Learning}, 80(2-3), 295-319.
\end{description}

\end{document}